\documentclass[11pt,a4paper]{article}

\usepackage[margin=1in]{geometry}
\usepackage[
  bookmarks=true,
  bookmarksnumbered=true,
  bookmarksopen=true,
  pdfborder={0 0 0},
  breaklinks=true,
  colorlinks=true,
  linkcolor=black,
  citecolor=black,
  filecolor=black,
  urlcolor=black,
]{hyperref}
\usepackage{amssymb,amsmath,amsthm}
\usepackage{palatino}
\usepackage{ctable}
\usepackage[capitalise]{cleveref}
\usepackage{enumitem}
\usepackage{algorithm}
\usepackage{algpseudocode}
\usepackage[hypcap=false]{caption}
\usepackage[round]{natbib}

\setlist{nolistsep}

\DeclareCaptionFormat{algor}{%
  \hrulefill\par\offinterlineskip\vskip1pt%
    \textbf{#1#2}#3\offinterlineskip\hrulefill}
\DeclareCaptionStyle{algori}{singlelinecheck=off,format=algor,labelsep=space}
\captionsetup[algorithm]{style=algori}
\algrenewcommand{\algorithmiccomment}[1]{\hfill {$\mathit{//}$} #1}
\algnewcommand{\LineComment}[1]{\State {\bf // #1}}

\usepackage{fnbreak} 

\newlist{types}{enumerate}{1}
\crefname{typesi}{Type}{Types}
\setlist[types,1]{label=\arabic*.,ref=\arabic*}

\newlist{parts}{enumerate}{2}
\crefname{partsi}{Part}{Parts}
\setlist[parts,1]{label=\arabic*.,ref=\arabic*}
\setlist[parts,2]{label=\roman*.,ref=\roman*}

\newlist{properties}{enumerate}{1}
\crefname{propertiesi}{Property}{Properties}
\setlist[properties,1]{label=\arabic*.,ref=\arabic*}
\setlist[properties,2]{label=\roman*.,ref=\roman*}

\newlist{modifications}{enumerate}{2}
\crefname{modificationsi}{Modification}{Modifications}
\setlist[modifications,1]{label=\arabic*.,ref=\arabic*}
\setlist[modifications,2]{label=\roman*.,ref=\roman*}

\theoremstyle{plain}
\newtheorem{thm}{Theorem}
\crefname{thm}{Theorem}{Theorems}
\newtheorem{prop}[thm]{Proposition}
\newtheorem{cor}{Corollary}
\crefname{cor}{Corollary}{Corollaries}
\newtheorem{lemma}{Lemma}
\newtheorem{claim}[lemma]{Lemma}

\theoremstyle{definition}
\newtheorem{defn}{Definition}
\newtheorem{ex}{Example}
\crefname{ex}{Example}{Examples}
\newtheorem{remark}{Remark}

\newcommand{\crefpart}[2]{\cref{#1}(\labelcref{#1-#2})}

\newcommand{\crefshowpart}[2]{\cref{#1-#2} of \cref{#1}}

\newcommand{\refintitle}[1]{\texorpdfstring{\ref{#1}}{\ref*{#1}}}

\newcommand{\eqdef}{\triangleq}
\newcommand{\naturals}{\mathbb{N}}

\newcommand{\matching}{\mu}

\newcommand{\comp}[1]{#1^{\mathsf{c}}}

\newcommand{\tobematched}[1]{\tilde{#1}}

\newcommand{\prefs}[1]{\mathcal{P}_{#1}}
\newcommand{\tildeprefs}[1]{\tilde{\mathcal{P}}_{#1}}

\newcommand{\matched}[1]{#1_{\matching}}
\newcommand{\unmatched}[1]{\comp{#1_{\matching}}}

\newcommand{\resultmatching}{\matching_{\prefs{W},\prefs{M}}'}
\newcommand{\tilderesultmatching}{\matching_{\tildeprefs{W},\prefs{M}}'}
\newcommand{\wresultmatching}{\matching_{\tildeprefs{W}^w,\prefs{M}}'}
\newcommand{\iincresultmatching}{\matching_{\prefs{W}^{i+1},\prefs{M}}^i}
\newcommand{\ipresultmatching}{\matching_{\prefs{W}',\prefs{M}}^i}

\newcommand{\menopt}{\mathit{MenOpt}}

\newcommand{\generalcycle}{(w_1 \xrightarrow{m_1} w_2 \xrightarrow{m_2} \cdots w_{d-1} \xrightarrow{m_{d-1}} w_d)}
\newcommand{\generalcycleprime}{(w_1' \xrightarrow{m_1'} w_2' \xrightarrow{m_2'} \cdots w_{d'-1}' \xrightarrow{m_{d'-1}'} w_{d'}')}

\newcommand{\abstractcycle}[1]{C_{\matching'}^{\matching}(#1)}

\newcommand{\rejectcycle}{C_{\matching'}^{\prefs{W},\prefs{M}}}
\newcommand{\tilderejectcycle}{C_{\matching'}^{\tildeprefs{W},\prefs{M}}}

\newcommand{\twrejectcycle}{C_{\matching'}^{\prefs{W}^{\tilde{w}},\prefs{M}}}
\newcommand{\hwrejectcycle}{C_{\matching'}^{\prefs{W}^{\hat{w}},\prefs{M}}}

\newcommand{\fullrun}{R^{\prefs{W},\prefs{M}}}

\newcommand{\zfullrun}{R^{\prefs{W}^0,\prefs{M}}}
\newcommand{\ofullrun}{R^{\prefs{W}^1,\prefs{M}}}
\newcommand{\ifullrun}{R^{\prefs{W}^i,\prefs{M}}}
\newcommand{\incfullrun}{R^{\prefs{W}^{i+1},\prefs{M}}}
\newcommand{\pfullrun}{R^{\prefs{W}',\prefs{M}}}

\newcommand{\run}{R_{\matching'}^{\prefs{W},\prefs{M}}}
\newcommand{\tilderun}{R_{\matching'}^{\tildeprefs{W},\prefs{M}}}

\newcommand{\twrun}{R_{\matching'}^{\prefs{W}^{\tilde{w}},\prefs{M}}}
\newcommand{\hwrun}{R_{\matching'}^{\prefs{W}^{\hat{w}},\prefs{M}}}
\newcommand{\iincrun}{R_{\matching^i}^{\prefs{W}^{i+1},\prefs{M}}}
\newcommand{\iprun}{R_{\matching^i}^{\prefs{W}',\prefs{M}}}
\newcommand{\pprun}{R_{\matching'}^{\prefs{W}',\prefs{M}}}

\newcommand{\firsti}{\tilde{R}_{1/2}^i}
\newcommand{\secondi}{\tilde{R}_{2/2}^i}

\newcommand{\firstpart}{\fullrun_{1/2}}
\newcommand{\secondpart}{\fullrun_{2/2}}

\newcommand{\firstip}{R_{1/2}^{i+1}}
\newcommand{\secondip}{R_{2/2}^{i+1}}

\newcommand{\firstthird}{R_{1/3}^{i+1}}
\newcommand{\secondthird}{R_{2/3}^{i+1}}
\newcommand{\thirdthird}{R_{3/3}^{i+1}}

\hypersetup{
  pdfauthor      = {Yannai A. Gonczarowski <yannai@gonch.name>},
  pdftitle       = {Manipulation of Stable Matchings using Minimal Blacklists},
}

\title{Manipulation of Stable Matchings \\ using Minimal Blacklists}
\author{Yannai A. Gonczarowski\thanks{
Einstein Institute of Mathematics,
Rachel and Selim Benin School of Computer Science and Engineering
and Center for the Study of Rationality,
Hebrew University of Jerusalem, Israel; and Microsoft Research. \mbox{\emph{Email}: \href{mailto:yannai@gonch.name}{yannai@gonch.name}}.
}}
\date{March 5, 2014}

\begin{document}
\begin{titlepage}

\maketitle

\begin{abstract}
\cite{Gale-Sotomayor-ms-machiavelli} have shown that in the
Gale-Shapley matching algorithm \citeyearpar{Gale-Shapley}, the proposed-to side $W$ (referred to as \emph{women} there)
can strategically force the \mbox{$W$-optimal} stable
matching as the $M$-optimal one by truncating their preference lists, each woman
possibly blacklisting \emph{all but one man}.
As \citeauthor{GI89} have already noted in \citeyear{GI89},
no results are known regarding achieving this feat by means other than such preference-list truncation, i.e.\ by
also permuting preference lists.

We answer \citeauthor{GI89}'s open question by providing
tight upper bounds on the amount of blacklists and their combined size,
that are required by the women to force a given matching
as the $M$-optimal stable matching, or, more generally, as the unique
stable matching. Our results show that the coalition of all women can
strategically force any matching as the unique stable matching, using preference
lists in which at most half of the women have nonempty blacklists, and in
which the average blacklist size is \emph{less than~$1$}. \linebreak This allows the women to
manipulate the market in a manner that is far more inconspicuous, in a sense,
than previously realized.
When there are less women than men, we show that in the absence of blacklists
for men,
the women can force any matching as the unique stable matching \emph{without
blacklisting anyone}, while when there are more women than men, each
to-be-unmatched woman may have to blacklist as many as all men.
Together, these results shed light on the question of how much, if at all,
do given preferences for
one side \emph{a priori} impose limitations on the set of stable matchings
under various conditions.
All of the results in this paper are constructive, providing efficient algorithms
for calculating the desired strategies.
\end{abstract}

\vfill
\noindent
\textbf{Keywords}:
Matching; Stability; Deferred Acceptance; Manipulation; Blacklist
\vfill
\vfill
\pagenumbering{Roman}
\thispagestyle{empty}
\end{titlepage}

\pagenumbering{arabic}

\section{Introduction}

\cite{Gale-Shapley}, in their seminal paper, have introduced the question of 
finding a matching between a set of women $W$ and a set of men $M$, that is,  a one-to-one mapping between a subset of $W$ (the women \emph{matched} by this matching) and a subset of $M$ (the men matched by this matching), that satisfies certain desirable properties.\footnote{\citeauthor{Gale-Shapley} have studied one-to-many matchings that generalize the scenario
described here, as well as one-to-one matchings under the condition $|W|\!=\!|M|$. The scenario that we describe, of one-to-one matchings between sets of possibly-unequal size, was first explicitly studied by~\cite{MW70}.} Such $W$ and $M$, along with preferences for each participant, comprise a two-sided \emph{matching market}. Over the past few decades, applications of
matching markets have become widespread, from assigning medical interns to hospitals~\citep{Roth-labor-evolution} and students to high schools~\citep{nyc-match,boston-match}, to centralizing kidney donation assignments~\citep{kidneys}.

In \citeauthor{Gale-Shapley}'s model, each woman has a strict order of preference over a subset of~$M$.
This subset, ordered according to her order of preference, is called the \emph{preference list} of this woman,
and is interpreted as those men with whom she would prefer to be matched over being matched with no-one. Each man similarly has a preference list of women.

\begin{defn}[Blacklist]
The \emph{blacklist} of a woman $w$ is the set of all men that are not in her preference list, i.e.\ the men that $w$ finds unacceptable,
even when the alternative is being unmatched (i.e.\ being matched with no-one). The blacklist of a man is defined analogously.
\end{defn}

\begin{defn}[Individual Rationality]
A matching is \emph{$M$-rational} if no man is matched with a woman from his blacklist;
it is \emph{$W$-rational} if no woman is matched with a man from her blacklist.
\end{defn}

A matching is \emph{unstable} under the given orders of preference if either~~\emph{(i)} it is not
$W$-rational, or not $M$-rational, or~~\emph{(ii)} there exist a woman and a man, each of whom prefer being matched with the other over the partner (or lack thereof) assigned to them by the matching. A matching that is not unstable is \emph{stable}; \cite{Gale-Shapley} have proved that a stable matching always exists, and have presented
an efficient algorithm for finding such a matching.

While in some cases only a single stable matching exists, in general there are many stable matchings, differing significantly in the outcome for the various participants~\citep{PittelAverageNumber}.
\citeauthor{Gale-Shapley} have shown that their algorithm finds the $M$-optimal stable matching, i.e.\ a stable matching over which no man prefers any other stable matching. \cite{MW71} have additionally shown that this matching is at the same time the $W$-worst stable matching, i.e.\ a stable matching that is worst for each woman.
The benefits of the Gale-Shapley matching algorithm for the men have been
demonstrated even further by \cite{Dubins-Freedman},
who have shown that no man can unilaterally strategically manipulate this algorithm (i.e.\ manipulate the $M$-optimal stable matching) to his advantage and, moreover, that no coalition of men
can all benefit  from jointly manipulating it.\footnote{The interested reader is referred
to~\cite{DGS87} for an even
stronger nonmanipulability theorem, and
to~\cite{H06} for the study
of lying by men in a probabilistic setting.}
These observations have led to the study by \cite{Gale-Sotomayor-ms-machiavelli} of strategic manipulations of the algorithm, and of stable matchings in general, by women.

\cite{Gale-Sotomayor-ms-machiavelli} have shown that if more than one stable matching exists, then at least one woman
can unilaterally manipulate the $M$-optimal stable matching to her own advantage.
Moreover, they have shown that the coalition of all women can force the $W$-optimal stable matching as the unique stable matching, and thus as the $M$-optimal stable matching (and the outcome of the Gale-Shapley matching
algorithm), by truncating their preference lists, each woman blacklisting all men over whom she prefers her $W$-optimal stable partner.
Note that this strategy may lead to each woman blacklisting all but one man (i.e.\ declaring a blacklist of size \mbox{$|M|\!-\!1$}), which makes the women's manipulation painfully obvious to 
any observer of their submitted preferences.
Despite the extensive literature regarding stable matchings and manipulation, as \citet[pp.\ 59,65]{GI89} have already noted in \citeyear{GI89},
no results are known regarding achieving this feat by means other than such preference-list truncation, i.e.\ by
also permuting preference lists.

As it is widely believed that successful matching mechanisms should produce stable outcomes~\citep{roth-economist-engineer},
many real-life matching mechanisms (even those not based on \citeauthor{Gale-Shapley}'s algorithm)
are designed with the goal of producing stable matchings.
Therefore, the ability to force some matching as the unique stable matching is of broad significance.

In \cref{blacklists}, we answer \citeauthor{GI89}'s open question by tightly characterizing the worst-case amount of blacklists and their sizes,
that are required by the women to force the $W$-optimal stable matching (or more generally, any $M$-rational matching)
as the unique stable matching, and in particular as the outcome of the Gale-Shapley matching algorithm.
We start by analysing \emph{perfect} matchings, i.e.\ matchings in which no participant is unmatched.
A corollary of our results for this case is as follows.

\begin{thm}[Weaker version of \cref{inconspicuous}; the latter is shown to be tight in \cref{inconspicuous-tight}]
\label{intro-thm}
~
\begin{enumerate}
\item
When $|W|\!=\!|M|$, the coalition of all women can force any $M$-rational perfect matching
as the \emph{unique} stable matching,
using a profile of preference lists in which at most half of the women have blacklists, and in which the average blacklist size
is less than 1.
\item This profile of preference lists can be computed in
$O(n^3)$ time.
\end{enumerate}
\end{thm}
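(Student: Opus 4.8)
The plan is to build an explicit women's profile $\tildeprefs W$ and verify separately that (a) $\mu$ is the unique stable matching of $(\tildeprefs W,\prefs M)$ and (b) it meets the blacklist budget, with (b) carrying essentially all the difficulty. First I would relabel so that $\mu(w_i)=m_i$ and have every woman submit a list whose top entry is her own $\mu$-partner. That single stipulation makes $\mu$ stable and $W$-optimal for free: it is individually rational on both sides ($M$-rationality is the hypothesis on $\mu$ and is untouched by the women's lists; $W$-rationality holds since $w_i$ lists $m_i$), it has no blocking pair because no woman can be in one once her partner is her favorite, and it is $W$-optimal because no stable matching can give a woman better than her first choice. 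A stable matching is the unique stable matching exactly when it is simultaneously the $W$-optimal and the $M$-optimal one — equivalently, $W$-optimal and $W$-pessimal, by \citeauthor{MW71}'s theorem — so once the tops are pinned the entire remaining problem is to fill in the \emph{tails} of the women's lists (which further men, in which order, hence which blacklists; the tops stay fixed) so that $\mu$ becomes \emph{also} the $M$-optimal stable matching, i.e.\ so that the men-proposing Gale-Shapley algorithm run on $(\tildeprefs W,\prefs M)$ returns $\mu$.

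With arbitrary blacklist-free tails the men-proposing algorithm returns some $M$-optimal stable matching $\mu_0$ that is weakly worse than $\mu$ for every woman; if $\mu_0=\mu$ we are done with no blacklists at all. Otherwise some woman is matched by $\mu_0$ to a man she ranks below her top, and one could simply blacklist that man to destroy $\mu_0$, recompute, and repeat — but this brute-force repair may need on the order of $n^2$ deletions before the unique stable matching is $\mu$, which blows the budget. The point of the theorem is that one can instead repair by \emph{breaking rotations}: when $\mu_0\ne\mu$ the men-proposing run gets stuck along a rejection cycle of the form $\generalcycle$, and a single well-chosen deletion from a single woman on that cycle destroys the whole rotation — hence a whole layer of matchings strictly between $\mu_0$ and $\mu$ at once — without disturbing any pinned top entry, so $\mu$ stays stable and $W$-optimal; after finitely many such rotation-breaking repairs the run returns $\mu$, making $\mu$ the unique stable matching.

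The budget is the heart of the matter, and the step I expect to be the main obstacle. Each rotation-breaking repair adds one man to one woman's blacklist, so the total blacklist size equals the number of repairs and the number of blacklisted women is at most that; what must be proved is that, with the right rule for which rotation to break and which of its (at least two) women to charge, the process halts after \emph{fewer than $n$} repairs while touching \emph{at most $\lfloor n/2\rfloor$} distinct women. This is where the paper's machinery (the reject cycles $\rejectcycle$, the algorithm runs $R$, and their interplay across successive repairs) seems indispensable: rotations broken at later stages need not be vertex-disjoint from earlier ones, so one cannot just charge a fresh pair of women each time; instead one must show the cycles interlock so that every charged woman can be matched to an uncharged woman along the same chain, and that the deletions never regenerate enough structure to push the count past $n$. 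This is exactly what \cref{inconspicuous} supplies, and that its bounds cannot be improved is what \cref{inconspicuous-tight} shows via an explicit market. For part~2, each repair is one men-proposing run plus $O(n^2)$ work to locate and break an offending rotation, and there are fewer than $n$ of them, so the construction runs in $O(n^3)$ time.
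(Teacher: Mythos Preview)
Your high-level plan---pin each woman's top choice to $\mu(w)$, so that $\mu$ is automatically stable and $W$-optimal, then engineer the tails so that men-proposing deferred acceptance also returns $\mu$---is exactly the paper's plan. The gap is in your repair mechanism.

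You propose that each repair is a \emph{single deletion} from a single woman's list, and that with the right charging rule fewer than $n$ such deletions suffice and touch at most $\lfloor n/2\rfloor$ women. But the paper does not proceed one deletion at a time, and your deletion-only scheme is not shown to meet the $\lfloor n/2\rfloor$ bound. Concretely: take $n=3$ with the cyclic men's preferences from the tightness construction. One deletion at $w_0$ fixes $w_2$ and leaves $w_0,w_1$ unfixed; a second deletion at $w_1$ then reaches $\mu$, but now two women have nonempty blacklists while $\lfloor n/2\rfloor=1$. So the charging rule is not a detail one can defer---a wrong choice already breaks the bound at $n=3$---and you give no rule.

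What the paper actually does in each major iteration is pick one trigger woman $\tilde w$, let her blacklist \emph{all} the men she ends up rejecting in the ensuing chain (possibly many), and simultaneously \emph{permute} other women's tails (\cref{build-step}: promote a suitable man to second place). The permutation is the crucial ingredient you are missing: it lets the rejection chain triggered by $\tilde w$ absorb further $(\mu'\!\to\!\mu)$-cycles without introducing any new blacklisting woman, so that however many men $\tilde w$ ends up blacklisting, at least that many plus one participants get fixed in the same iteration. The budget then closes by a different count than the one you sketch: across iterations the blacklists are pairwise disjoint, and the $\mu$-partner of each blacklisting woman is never blacklisted by anyone; hence at most $n-n_b$ men are blacklisted in total, forcing $n_b\le n-n_b$. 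Your ``match each charged woman to an uncharged one along the chain'' is in the right spirit but is not the argument, and it cannot be carried out within a deletion-only framework.
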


This low upper bound of less than $1$ on the average blacklist size that is guaranteed by \cref{intro-thm} should be contrasted with the previously-known upper bound of $|M|\!-\!1$,
which is attainable using truncation,
as mentioned above. Indeed, our results provide a far more ``inconspicuous'' manipulation than the
one suggested by \citeauthor{Gale-Sotomayor-ms-machiavelli} using preference truncating, even though both manipulations result in the same matching. Thus, if women have to pay even a small cost
for every man that they blacklist, then our solution beats preference truncation by an order of magnitude.
The manipulation we suggest becomes even more inconspicuous in implementations where, by design, participants
are required to submit preference lists of at most a certain length, as in the case in some real-life matching markets~\citep[see e.g.][]{nyc-match}.

It turns out that the case of a \emph{balanced market}, i.e.\ $|W|\!=\!|M|$, is a singular case at which a phase change occurs. 
When there are less women than men, we show (see \cref{inconspicuous-not-all-matched,tightness-not-all-matched})
that in the absence of blacklists for men,
the women can always force any $M$-rational matching, in which all women are matched, as the unique
stable matching \emph{without blacklisting anyone}. In contrast, when there are more women than men
(or more generally, when not all women should be matched), each to-be-unmatched woman may have to blacklist as many as all men.
Together, all of these results shed light on the question of how much, if at all, do given preferences for one side \emph{a priori} impose limitations on the set
of stable matchings under various conditions.

It is interesting to note
that \cite{Ashlagi-Leshno} have shown that a somewhat
similar phase change occurs w.r.t.\ the expected ranking of each participant's partner in the participant's
preference list in a random market. Both that paper and this one show, in different senses,
that the preferences of the smaller side of the market, even if it is only
slightly smaller, play a far more
significant role than may be expected in determining the stable matchings, and those of the larger side ---
a considerably insignificant one. In a sense, our results extend this qualitative statement from a random market to \emph{any} market.

Another implication of our results is
in goods allocation problems, i.e.\ matching problems in which only one side (the \emph{buyers}) has
preferences, while the other side (the \emph{goods}) does not.
\cite{atila-school-choice} and \cite{atila-strategyproofness-efficiency} consider using a version of the (student-optimal)
Gale-Shapley algorithm in order to allocate school seats to children;
as school priorities are very coarse \citep[and sometimes nonexistent, see e.g.][]{nyc-match}, some tie-breaking rule is needed so that the algorithm can be run.
Both of these papers advocate the use of a single tie-breaking rule for all schools (e.g.\ a single lottery conducted once before the algorithm runs, to assign each student
a number to be used for tie breaking at every school) over a different tie-breaking rule for each school (e.g.\ a different lottery for each school), arguing
that the former results in higher social welfare. Our results strengthen this argument and make it even more concrete, showing that in goods allocation problems,
an implementation using multiple arbitrary (e.g.\ randomized) tie-breaking rules may yield
any allocation that is buyer-rational,
regardless of the internal orders of the buyer's preference lists, if even an amortized less-than-one-buyer blacklist per good is allowed (or even in the
absence any blacklists for the goods whatsoever, if there are more buyers than goods); in contrast, it is easy to see that an implementation with a single tie-breaking
rule is equivalent to serial dictatorship~\citep{rsd}, and thus e.g.\ constructs Pareto-efficient allocations.

While \cref{intro-thm} guarantees an average blacklist size of less than~$1$, it provides no guarantee regarding the size of each individual blacklist.
Indeed, it may be the case (see \cref{inconspicuous-tight}) that in order to force a particular matching, one of the women is required to blacklist all but one man, while all the other women's blacklists are empty.
In \cref{divorces}, we introduce a natural variant of the Gale-Shapley algorithm, in which blacklisting is a dynamic process called \emph{divorcing};
this variant, which we call the \emph{Gale-Shapley algorithm with divorces}, is exactly as manipulable as the unaltered Gale-Shapley algorithm
(see \cref{divorce-equiv-blacklist} for the precise sense).
We characterize the worst-case pattern of divorces
in this variant, and show that
the coalition of all women can force any matching as the outcome by at most $n\!-\!1$ women
getting divorced, each exactly once
(see \cref{one-divorce-per-woman}).
This result is shown to be tight as well (see \cref{divorce-tight}).

All of the results in this paper are constructive, providing efficient algorithms for calculating the desired strategies; full proofs for all theorems are given in the appendix.
While the proofs are involved, the resulting algorithms are relatively simple to describe.

\section{Additional Related Work}

A participant with a nonempty blacklist is often said in the literature to have an \emph{incomplete preference list} (not to be confused with a partial order of preferences, which indicates an indifference of sorts between certain alternatives).
\cite{KobayashiMatsui} study the same problem as we do, yet confined only to a balanced market, and give a necessary and sufficient condition (on the men's preferences and the desired matching) for a solution with no blacklists whatsoever to exist; they also give a polynomial-time algorithm for computing a solution (i.e.\ women's strategies) when this condition holds.

Manipulations by a single woman of the (men-proposing) Gale-Shapley algorithm have been extensively studied. Notably relevant to our setting is the study of \cite{TST01}, who give an efficient algorithm for finding an optimal strategy (preference list) for a single woman (with given true preferences) when blacklists are not allowed, given the submitted preference lists of the rest of the market; they note that it is not always possible for such a woman to obtain her top choice under this setting. They also give simulation results suggesting that the proportion of women who can gain by manipulation in a random market approaches zero as the market grows; \cite{KojimaPathak} indeed prove that even when blacklists are allowed and even in a many-men-to-one-woman setting, the proportion of women who can gain by manipulation in a random market, when the lengths of men's preference lists are fixed, tends to zero as the market grows.
In a contrast of sorts, since \cite{PittelLikelySolutions} shows that in a random market each participant w.h.p.\ has more than
one stable partner, each woman w.h.p.\ benefits (and no woman is ever harmed) if
all women form a coalition and force the $W$-optimal stable matching.

\sloppypar{
Bridging the gap between manipulation by a single woman and manipulation by all women is manipulation by a coalition of women.
\cite{Sisterhood} show that even when blacklists are allowed and even in many-to-many settings, if a coalition of women manipulates the (men-proposing) Gale-Shapley algorithm in a way that harms none of them, then no truthful woman is harmed either and no man gains.
Coalitional manipulation is also studied by \cite{H06}, who studies incentives for manipulation of the (men-proposing) Gale-Shapley algorithm by coalitions of men.}

\cite{PRVWAAMAS2011} show that in the absence of blacklists, there exists a stable mechanism that is computationally-hard to manipulate by a single participant. As our results yield a unique stable matching, they hold for any stable mechanism, and so show in a sense that their results do not carry over to the case of manipulation by an entire side of the market
if even very small blacklists are allowed. \cite{Eeckhout} gives a sufficient condition for uniqueness of a stable matching in the absence of blacklists; our results produce a unique stable matching, yet do not meet this sufficient condition, even if this condition is extended to a setting with blacklists by adding additional participants denoting an unmatched status.

\section{Model and Preliminaries}\label{preliminaries}

In order to standardize the notation used throughout this paper, and
for self-containment, let us quickly recapitulate the definitions from the previous section, as well as the
Gale-Shapley algorithm~\citeyearpar{Gale-Shapley}. Let $W$ and $M$ be disjoint finite sets, referred to as the sets of women and men, respectively.

\begin{defn}[Preference Lists; Blacklists; Combined Blacklist Size]\leavevmode
\begin{enumerate}
\item
A \emph{preference list} for a woman $w \in W$ is a totally-ordered subset of $M$.
\item
A \emph{profile of preference lists} for $W$ is a specification of a preference list for each $w\!\in\!W$.
\item
Given a profile $\prefs{W}$ of preference lists for $W$, the \emph{blacklist} of a woman $w \in W$, denoted by $B_w(\prefs{W})$, is the set of men in $M$ who do not appear in
$w$'s preference list (i.e.\ all men who are declared unacceptable by $w$). We say that $w$ \emph{blacklists} the men in $B_w(\prefs{W})$.
\item
The \emph{combined size} of all blacklists in $\prefs{W}$ is defined to be
$\sum_{w \in W}\bigl|B_w(\prefs{W})\bigr|$.
\end{enumerate}
We define preference lists, profiles of preference list, and blacklists for $M$ analogously.
\end{defn}

\begin{defn}[Matching]\leavevmode
A \emph{matching} is a one-to-one mapping between a subset of $W$ and a subset of $M$. By slight abuse of notation, we denote the woman matched with a man $m$ by a matching $\matching$ by
$\matching(m)$ instead of $\matching^{-1}(m)$.
Given a matching $\matching$, we define the subset of $W$ (resp.~$M$) on which $\matching$ is defined as $\matched{W}$ (resp.~$\matched{M}$).
If $\matched{W}=W$ and $\matched{M}=M$, then we say that~$\matching$ is \emph{perfect}; otherwise, we say that~$\matching$ is \emph{partial}.
\end{defn}

Throughout the remainder of this section, let $\prefs{W}$ and $\prefs{M}$ be profiles of preference lists for~$W$ and $M$, respectively.

\begin{defn}[Individual Rationality; Stability]
Let $\matching$ be a matching.
\begin{enumerate}
\item
$\matching$ is said to be \emph{$G$-rational}, for a side $G\!\in\!\{W, M\}$, if  $\matching(p)\!\notin\!B_p(\prefs{G})$ for every matched participant $p\!\in\!\matched{G}$.
\item
$\matching$ is said to be \emph{unstable} w.r.t\ $\prefs{W}$ and $\prefs{M}$, if either it is not $W$-rational or not $M$-rational, or there exist a woman $w\!\in\!W$ and a man $m\!\in\!M$ that do not blacklist each other, each of whom either unmatched in $\matching$, or preferring the other over the partner matched to them by $\matching$.
\item
If $\matching$ is not unstable, then it is said to be \emph{stable}.
\end{enumerate}
\end{defn}

\begin{defn}[The Gale-Shapley Algorithm~\citeyearpar{Gale-Shapley}]
The following algorithm is henceforth referred to as the \emph{Gale-Shapley
algorithm}:\footnote{The semantics of the algorithm that we describe w.r.t.\ blacklists are slightly different than those of \citeauthor{Gale-Shapley}'s original algorithm.
Nonetheless, the outcome is the same by a theorem of \cite{Dubins-Freedman}.}
The algorithm is divided into steps, to which we
refer as \emph{nights}. On each night, each man serenades under the window of the
woman he prefers most among all women who have not (yet) rejected him and who are not in his blacklist (if such a woman exists), and
then each woman, under whose window more than one man serenades, rejects every
man who serenades under her window, except for the man she prefers most
among these men; any woman under whose window serenade only men from her blacklist, rejects all of these men.
The algorithm stops on a night on which no man is rejected by
any woman, and then each woman under whose window a man has serenaded on this night is matched to this man.
Other women are unmatched (as are men who have not serenaded under any window on this night).
\end{defn}

\begin{thm}[\cite{Gale-Shapley}]\label{men-optimal}
The Gale-Shapley algorithm stops and yields a stable matching between $W$
and $M$ (in particular, such a matching exists), and this stable matching is $M$-optimal, i.e.\ no stable matching
is better for any man.
\end{thm}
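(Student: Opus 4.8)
The plan is to establish the three assertions of \cref{men-optimal} in turn: that the algorithm halts, that its output is a stable matching, and that this matching is $M$-optimal. For \textbf{termination}, I would observe that once a woman has rejected a man on some night, he never again serenades under her window, so the set of (man, woman) pairs at which a rejection has occurred is nondecreasing over the nights; since at least one new rejection occurs on every night that is not the last, and there are only $|W|\cdot|M|$ such pairs, the algorithm halts after at most $|W|\cdot|M|+1$ nights. That the output is a well-defined matching is immediate from the description, as each woman retains at most one serenader and each man serenades under at most one window per night.

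For \textbf{stability} of the output $\matching$, individual rationality is built in: a man never serenades a woman he blacklists, and a woman rejects every blacklisted man who serenades her, so $\matching$ is both $W$-rational and $M$-rational. Suppose toward a contradiction that a woman $w$ and a man $m$ who do not blacklist each other block $\matching$, so that $m$ is unmatched in $\matching$ or prefers $w$ to $\matching(m)$, and $w$ is unmatched in $\matching$ or prefers $m$ to $\matching(w)$. A man serenades the women on his list in strictly decreasing order of preference, skipping only those who have rejected him; since $m$ does not blacklist $w$ and $\matching$ does not match him to anyone he prefers to $w$, he must have serenaded under $w$'s window on some night and been rejected. But a woman's retained serenader only improves over the nights, so on the last night $w$ is matched to a man she strictly prefers to $m$, contradicting the hypothesis on $w$ (in particular $w$ cannot be unmatched in $\matching$, as she has been serenaded).

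For \textbf{$M$-optimality}, call a woman $w$ \emph{feasible} for a man $m$ if some stable matching pairs them; the previous step guarantees at least one stable matching exists. The heart of the argument is the invariant that no man is ever rejected by a woman feasible for him, which I would prove by induction on the nights. The base case is vacuous. For the step, suppose $w$ rejects $m$ on some night in favour of a man $m'$ serenading under her window, so $w$ prefers $m'$ to $m$. Every woman $m'$ strictly prefers to $w$ has either rejected $m'$ already (hence is infeasible for $m'$ by the inductive hypothesis) or is blacklisted by $m'$ (hence is infeasible for $m'$, since stable matchings are $M$-rational); so $m'$ prefers $w$ to every woman feasible for him other than $w$. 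If some stable matching $\matching^*$ had $\matching^*(m)=w$, then $m'$ would be either unmatched in $\matching^*$ or matched to a feasible woman $w'\neq w$, whom he therefore prefers less than $w$; in either case $(m',w)$ would block $\matching^*$ — they do not blacklist each other, $m'$ weakly prefers $w$ to his $\matching^*$-partner with the required strictness, and $w$ prefers $m'$ to $m=\matching^*(w)$ — contradicting stability. Hence $w$ is infeasible for $m$, completing the induction. Finally, for any man $m$ matched by the output $\matching$, the woman $\matching(m)$ is his most preferred among those who never rejected him, and all women who rejected him are infeasible for him, so $\matching(m)$ is at least as good for $m$ as any feasible woman, in particular at least as good as his partner under any other stable matching; and if $\matching$ leaves $m$ unmatched then every woman is infeasible for him, so every stable matching leaves him unmatched too. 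Thus no stable matching is better for any man.

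The step I expect to be the main obstacle is the $M$-optimality induction, where one has to manage simultaneously the ``feasible woman'' bookkeeping across nights, the case split on $m'$'s partner in the hypothetical stable matching $\matching^*$, and the blacklist and unmatched side conditions in the definition of a blocking pair; termination and stability are comparatively routine.
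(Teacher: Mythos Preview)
The paper does not prove \cref{men-optimal}; it is stated in the preliminaries as the classical result of \cite{Gale-Shapley} and is simply cited, so there is no in-paper proof against which to compare your attempt. Your argument is the standard textbook proof and is essentially correct.

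One small omission in the $M$-optimality induction: you write ``suppose $w$ rejects $m$ on some night in favour of a man $m'$,'' but in the algorithm as described here a woman may also reject $m$ simply because $m$ is in her blacklist (in particular, when all of her serenaders that night are blacklisted she retains no $m'$ at all). That case is handled in one line---if $w$ blacklists $m$ then no $W$-rational matching pairs them, so $w$ is trivially infeasible for $m$---but it should be stated, since otherwise the inductive step as written does not cover every rejection. With that addition your proof is complete.
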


Theorem~\ref{men-optimal} states that the stable matching given by the
Gale-Shapley algorithm
is optimal (out of all stable matchings) for each man. A conceptually-reverse
claim holds regarding the women:

\begin{thm}[\cite{MW71}]\label{women-worst}
No stable matching is worse for any woman
than the matching given by the Gale-Shapley algorithm.
\end{thm}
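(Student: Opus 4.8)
The plan is to deduce this from the $M$-optimality of the Gale-Shapley output already recorded in \cref{men-optimal}, via a short proof by contradiction that exhibits a blocking pair. Let $\matching$ be the matching produced by the Gale-Shapley algorithm; by \cref{men-optimal} it is stable and $M$-optimal. Assume toward a contradiction that some stable matching $\matching'$ is \emph{strictly worse} for some woman $w$ --- meaning either that $w$ is matched in $\matching$ but unmatched in $\matching'$, or that $w$ is matched in both but strictly prefers $\matching(w)$ over $\matching'(w)$. Since being unmatched is the worst possible outcome for $w$, in either case $w$ must be matched in $\matching$; write $m\eqdef\matching(w)$. Because $\matching$ is both $M$-rational and $W$-rational and matches $w$ with $m$, the two do not blacklist each other.

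Next I would invoke $M$-optimality at the man $m$: no stable matching is better for $m$ than $\matching$, so $m$ does not strictly prefer $\matching'(m)$ over $\matching(m)=w$. Two cases arise. If $m$ strictly prefers $w$ over $\matching'(m)$ --- this subsumes the case that $m$ is unmatched in $\matching'$, since $m$, not blacklisting $w$, strictly prefers $w$ over being unmatched --- then $w$ strictly prefers $m$ over $\matching'(w)$, $m$ strictly prefers $w$ over $\matching'(m)$, and $w,m$ do not blacklist each other; hence $(w,m)$ blocks $\matching'$, contradicting its stability. Otherwise $m$ is indifferent between $w$ and $\matching'(m)$, which, as preferences are strict, can happen only if $\matching'(m)=w$; but then $\matching'(w)=m=\matching(w)$, contradicting the assumption that $w$ strictly prefers $\matching(w)$ over $\matching'(w)$. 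Either way we reach a contradiction, so no stable matching is worse for any woman than $\matching$.

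The step requiring the most care --- and really the only thing beyond routine --- is the bookkeeping around unmatched participants in this possibly-unbalanced market: the notions ``strictly worse for $w$'', ``$M$-optimal'', and ``$(w,m)$ blocks $\matching'$'' all have to be phrased so that being unmatched counts as a bona fide, lowest-ranked outcome, and one must explicitly use the individual rationality of $\matching$ to ensure that the pair $(w,m)$ manufactured above does not blacklist each other, so that it genuinely witnesses instability of $\matching'$ under the definition given in \cref{preliminaries}. Once these conventions are pinned down the argument closes immediately, with no need for any induction on the nights of the algorithm's execution.
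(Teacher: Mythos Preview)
Your proof is correct and is essentially the classical argument for this result. Note, however, that the paper does not actually prove \cref{women-worst}: it is stated in \cref{preliminaries} as a background result and attributed to \cite{MW71}, with no proof given anywhere in the paper or its appendix. So there is no ``paper's own proof'' to compare against; you have supplied a valid proof where the paper simply cites the literature.
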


In particular, if both the Gale-Shapley algorithm as described above, and the Gale-Shapley algorithm with reverse roles (i.e.\ with women serenading to men,
yielding the $W$-optimal stable matching), yield the same matching, then this matching is the unique stable matching.

\section{Blacklists}\label{blacklists}

\subsection{Perfect Matchings}

\begin{ex}[Forcing by men of any matching as the $M$-optimal stable matching]\label{men-forcing}
Let $W$ and~$M$ be equal-sized sets of women and men, respectively.
Let $\matching$ be a matching and
let $\prefs{M}$ be any profile of preference lists for $M$ 
in which each man $m$'s top choice is $\matching(m)$. For any profile
$\prefs{W}$ of preference lists for $W$ according to which
$\matching$ is $W$-rational, the $M$-optimal stable matching
(yield by the Gale-Shapley algorithm) is~$\matching$.
\end{ex}

\cite{Gale-Sotomayor-ms-machiavelli} have shown that
women can strategically force the $W$-optimal stable matching as the $M$-optimal one by truncating
their preference lists, each woman blacklisting all men over whom she prefers her $W$-optimal stable partner (without
altering the order of the men remaining in her preference list).
If the top choices of all women are distinct, then the $W$-optimal stable matching
is for each woman to be matched with her top choice. \citeauthor{Gale-Sotomayor-ms-machiavelli}'s strategy in this case is for each woman to
blacklisting all men but her top choice. Such a profile of preference lists indeed allows the women to dictate
the $M$-optimal stable matching a strong sense, as demonstrated by the following example.

\begin{ex}[Na\"{\i}ve forcing by women of any matching as the $M$-optimal stable matching]\label{overkill}
Let~$W$ and $M$ be equal-sized sets of women and men, respectively.
Let $\matching$ be a matching and
let $\prefs{W}$ be the profile of preference lists for $W$ 
in which each woman blacklists all of $M$ except for $\matching(m)$.
For any profile
$\prefs{M}$ of preference lists for $M$ according to which
$\matching$ is $M$-rational,
the only stable matching (and thus the $M$-optimal stable matching)
is~$\matching$.
\end{ex}

As illustrated by \cref{men-forcing,overkill}, each side can
force any ``other-side-rational'' matching as the $M$-optimal stable matching,
however in order to do so, the women may need to submit far more specific, and far shorter, preference lists.
Indeed, a conspiracy as in \cref{overkill}, with~$n\!\eqdef\!|W|\!=\!|M|$ nonempty blacklists with a combined size of $n \cdot (n-1)$,
would be painfully obvious to anyone examining the women's submitted preference lists.
Despite the extensive literature regarding stable matchings and manipulation, as \citet[pp.\ 59,65]{GI89} have already noted in \citeyear{GI89},
no results are known regarding achieving this feat by means other than such preference-list truncation, i.e.\ by
also permuting preference lists.\footnote{%
Manipulations via preference-list truncations generally lend to easier analysis than general manipulations. As \cite{Gale-Sotomayor-ms-machiavelli} show,
if a woman's utility is determined solely by the identity of her partner, then preference-list truncation
is a (weakly) dominant strategy (as we show in this paper, this is no longer the case if it is preferable to blacklist as few men as possible).
The combination of these
has led to most of the relevant literature focusing on manipulation via preference-list truncation.
Nonetheless, some results do extend to arbitrary manipulation, most of them requiring nonstandard proof techniques. \cite[See e.g.][]{Sisterhood}.}

The main result of this paper provides an answer to \citeauthor{GI89}'s open question, and shows that if the preferences of all men are known
to the women, then the latter can
force any given $M$-rational matching (and in particular the $W$-optimal stable matching)
as the unique stable matching (and in particular as the $M$-optimal stable matching), using a far smaller
combined size for all blacklists, and with significantly less nonempty blacklists than by using preference-list truncation,
rendering the manipulation far less obvious in some sense. For implementations
where, by design, one of the sides is required to submit preference
lists of at most a certain length~\citep[see e.g.][]{nyc-match},
this result allows such a side to force
any matching that is ``other-side-rational'', in an even further-inconspicuous way.

\begin{thm}[Manipulation with Minimal Blacklists]\label{inconspicuous}
Let $W$ and $M$ be equal-sized sets of women and men, respectively. Define
$n\!\eqdef\!|W|\!=\!|M|$.
Let $\prefs{M}$ be a profile of preference lists for $M$.
For every $M$-rational perfect matching
$\matching$, there exists a profile $\prefs{W}$
of preference lists for $W$, s.t.\ all of
the following hold.
\begin{parts}
\item
The only stable matching, given $\prefs{W}$ and $\prefs{M}$, is $\matching$.
\item
The blacklists in $\prefs{W}$ are pairwise disjoint, i.e.\ no man appears in more than
one blacklist.
\item
$n_b$, the number of women who have nonempty blacklists in $\prefs{W}$,
is at most $\left\lfloor \frac{n}{2} \right\rfloor$.
\item
The combined size of all blacklists in $\prefs{W}$ is at most $n\!-\!n_b$,
i.e.\ at most the number of women who have empty blacklists.
\end{parts}
\begin{sloppypar}
Furthermore, $\prefs{W}$ can be computed in worst-case $O(n^3)$ time,
best-case $O(n^2)$ time
and average-case
(assuming $\matching$ is uniformly distributed given $\prefs{M}$)
$O(n^2 \log n)$ time.
\end{sloppypar}
\end{thm}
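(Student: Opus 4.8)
The starting point is a stripped-down version of \cref{overkill}: throughout the construction I would maintain the invariant that every woman $w$ ranks $\matching(w)$ \emph{first} in her list. As long as $\matching$ is stable w.r.t.\ the current profiles, this invariant alone forces $\matching$ to be the $W$-optimal stable matching (every woman weakly prefers her top choice $\matching(w)$, or being unmatched, to her partner under any other stable matching), so that, by the remark following \cref{women-worst}, it suffices to \emph{also} make $\matching$ the $M$-optimal stable matching --- the outcome of the men-proposing Gale-Shapley algorithm (\cref{men-optimal}) --- in order to obtain uniqueness, which is Part~(i). The entire difficulty is thus to ``repair'' the Gale-Shapley outcome down to $\matching$ using only as much blacklisting as Parts~(ii)--(iv) permit.

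I would do this iteratively. Let $\prefs{W}^0$ rank $\matching(w)$ first for each $w$ and then the remaining men arbitrarily, with empty blacklists; then $\matching$ is stable, hence $W$-optimal, w.r.t.\ $\prefs{W}^0,\prefs{M}$. Given $\prefs{W}^i$ (for which, inductively, $\matching$ is stable), run Gale-Shapley to obtain $\matching^i$, the $M$-optimal stable matching. If $\matching^i=\matching$, stop and output $\prefs{W}\eqdef\prefs{W}^i$. Otherwise, since a perfect stable matching ($\matching$) exists, every stable matching --- in particular $\matching^i$ --- matches every agent (Lone Wolf / Rural Hospitals theorem; see \citealp{GI89}), so $\matching\triangle\matching^i$ decomposes into alternating cycles. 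Pick one, written $\generalcycle$ with $w_d=w_1$ and $m_j=\matching(w_j)=\matching^i(w_{j+1})$ (by $M$-optimality of $\matching^i$ and stability of $\matching$, each $m_j$ strictly prefers $w_{j+1}$ to $w_j$, while each $w_{j+1}$ strictly prefers her top choice $m_{j+1}=\matching(w_{j+1})$ to $m_j$, so the cycle is a genuine ``rotation''). To break it I would add to the blacklist of a single woman of the cycle --- say $w_1$ --- an appropriate subset, in the worst case all $d-2$ of them, of the men $\{m_2,\dots,m_{d-1}\}$ (which excludes $w_1$'s own partner $m_1=\matching(w_1)$), leaving every other list untouched; this is $\prefs{W}^{i+1}$. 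Since $w_1$ is still matched in $\matching$ to her non-blacklisted top choice, $\matching$ remains stable w.r.t.\ $\prefs{W}^{i+1},\prefs{M}$, so the induction is sustained.

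The technical core --- and the step I expect to be the main obstacle --- is the progress lemma: by carefully comparing the Gale-Shapley runs $\ifullrun$ and $\incfullrun$, tracking how the proposal/rejection sequence changes when a few entries are deleted from $w_1$'s list (this is what the reject-cycle apparatus $\rejectcycle$, $\fullrun$, $\run$, and the various partitions of a run into sub-runs are for), one must show that no woman already matched to her $\matching$-partner in $\matching^i$ loses that partner in $\matching^{i+1}$, and that \emph{all} of $w_1,\dots,w_{d-1}$ \emph{do} become matched to their $\matching$-partners in $\matching^{i+1}$. Then $\{w:\matching^{i+1}(w)=\matching(w)\}\supsetneq\{w:\matching^{i}(w)=\matching(w)\}$ --- in fact $\matching^{i+1}\triangle\matching$ is $\matching^{i}\triangle\matching$ with (at least) the chosen cycle removed --- so the process terminates after some $n_b$ iterations with $\matching^{n_b}=\matching$, which (being simultaneously $W$- and $M$-optimal w.r.t.\ $\prefs{W}\eqdef\prefs{W}^{n_b},\prefs{M}$) is the unique stable matching. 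Moreover the $n_b$ broken cycles are pairwise vertex-disjoint --- a woman, once fixed to her $\matching$-partner, never re-enters a later cycle --- so the blacklists, each made up of $\matching$-partners of women of one such cycle, are pairwise disjoint (Part~(ii)); exactly one new woman is blacklisted per iteration and each cycle has at least two women, so $n_b\le\lfloor n/2\rfloor$ (Part~(iii)); and the blacklister of the $i$-th cycle blacklists at most $d_i-2$ men, where her cycle has $d_i-1$ women, so the combined size is at most $\sum_i(d_i-2)=\bigl(\sum_i(d_i-1)\bigr)-n_b\le n-n_b$ (Part~(iv)).

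For the running time: each iteration is one Gale-Shapley run ($O(n^2)$) plus $O(n)$ bookkeeping to extract and break a cycle, and there are $n_b\le\lfloor n/2\rfloor=O(n)$ iterations, giving $O(n^3)$ in the worst case; in the best case $\matching^0=\matching$ and a single run suffices ($O(n^2)$); and for the average case one bounds the number of iterations by the number of alternating cycles of $\matching\triangle\matching^0$, whose expectation, when $\matching$ is uniform given $\prefs{M}$, is $O(\log n)$, yielding $O(n^2\log n)$. The two places where real care is needed are (a) verifying that deleting those few men from one woman's list genuinely prevents the corresponding rotation from reappearing inside \emph{any} stable matching, not merely inside $\matching^i$, and (b) the monotonicity-and-disjointness claims of the progress lemma, which rest on a delicate coupling of two Gale-Shapley executions that differ in a single list.
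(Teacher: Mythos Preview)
Your overall architecture---make every woman rank $\matching(w)$ first, then iteratively blacklist so that the $M$-optimal matching collapses to $\matching$---matches the paper's, but the core step has a genuine gap. You propose to break a cycle of $\matching\triangle\matching^i$ by adding men to \emph{one} woman's blacklist and ``leaving every other list untouched''. That is not enough: the progress lemma you state (``all of $w_1,\dots,w_{d-1}$ become matched to their $\matching$-partners in $\matching^{i+1}$'') is false in general, and with it your disjointness and counting arguments for Parts~(ii)--(iv) collapse.

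Concretely, take $n=3$, $\matching(w_j)=m_j$, men's lists $m_1:w_2,w_1,w_3$; $m_2:w_3,w_1,w_2$; $m_3:w_1,w_2,w_3$, and (exercising your ``arbitrary'' tail) $w_2$'s list $m_2,m_3,m_1$. Then $\matching^0$ matches $w_1\!-\!m_3,\,w_2\!-\!m_1,\,w_3\!-\!m_2$, a single $3$-cycle. Blacklisting $m_3$ (or even $\{m_2,m_3\}$) at $w_1$ yields $\matching^1:\,w_1\!-\!m_1,\,w_2\!-\!m_3,\,w_3\!-\!m_2$: the rejected $m_3$ is absorbed by $w_2$ (who ranks him above $m_1$), so only $w_1$ gets fixed and a new $2$-cycle $(w_2,w_3)$ survives. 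A second iteration then forces either $w_2$ to blacklist $m_3$ (already in $w_1$'s blacklist---Part~(ii) fails) or $w_3$ to blacklist $m_2$, giving $n_b=2>\lfloor3/2\rfloor$ (Part~(iii) fails) and combined size $2>n-n_b=1$ (Part~(iv) fails).

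What you are missing is precisely the point of the paper: manipulation by \emph{permutation}, not just truncation. The paper does not leave the other lists untouched. First, after computing $\matching^1$ it promotes $\matching^1(w)$ to second place in every woman's list; this is what guarantees that when a man is rejected along the cascade, every ``bystander'' woman he visits is already holding someone she ranks above him, so he is funnelled to his $\matching$-partner (\cref{build-base}). Second, when the cascade nevertheless spills into another $(\matching^i\!\rightarrow\!\matching)$-cycle via some woman $\tilde w$, the paper promotes the spilling man to second in $\tilde w$'s list, merging that cycle into the current rejection chain without any new blacklist (\cref{build-step}); only when no further merging is possible does a fresh trigger (and fresh blacklist) start. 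In the general non-flat case there are further list demotions (\cref{non-flat-mod-demote}) and a careful choice of the trigger $\tilde w^{i+1}$ (largest $t^i$). These permutations are exactly what keeps the number of blacklisting women and the total blacklist size within the claimed bounds; with blacklists alone you cannot.
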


We note that \cref{inconspicuous} guarantees a combined blacklist size no greater than the size of each of the individual $n$ blacklists
from \cref{overkill} --- an order-of-magnitude improvement.

\begin{cor}[Upper Bound on Combined Blacklist Size]\label{inconspicuous-cor}
Under the conditions of \cref{inconspicuous},
the combined size of all blacklists in $\prefs{W}$ is at most $n\!-\!1$,
i.e.\ the average blacklist size is less than $1$.
\end{cor}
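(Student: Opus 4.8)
The plan is to derive this directly from \cref{inconspicuous}, so that the only work left is a short case analysis on $n_b$. First I would invoke \cref{inconspicuous} to obtain a profile $\prefs{W}$ of preference lists for $W$ with all the asserted properties, and in particular recall that the combined size of all blacklists in $\prefs{W}$ is at most $n - n_b$, where $n_b$ denotes the number of women whose blacklists in $\prefs{W}$ are nonempty.

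Next I would split into two cases according to whether $n_b$ is zero. If $n_b \ge 1$, then $n - n_b \le n - 1$, so the combined blacklist size is at most $n-1$. If $n_b = 0$, then by definition every woman's blacklist in $\prefs{W}$ is empty, so the combined blacklist size is exactly $0 \le n - 1$ (using $n \ge 1$; when $n = 0$ there is nothing to prove and the claim about the average is vacuous). In either case the combined size of all blacklists in $\prefs{W}$ is at most $n - 1$. For the final assertion I would simply divide by $n$: the average blacklist size, namely the combined size divided by $|W| = n$, is at most $\frac{n-1}{n} < 1$.

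The main --- and essentially the only --- obstacle is that all the substantive work already resides in \cref{inconspicuous}, so that no new idea is needed here. The one point that requires a moment of care is the degenerate case $n_b = 0$, which genuinely occurs (for instance when each man's top choice under $\prefs{M}$ is already his $\matching$-partner, so that $\matching$ is forced with no blacklists whatsoever), and in which the bound $n - n_b$ on its own only yields $n$ rather than $n - 1$; there one must instead observe directly that every blacklist is empty so the combined size is $0$.
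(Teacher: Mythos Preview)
Your proposal is correct and matches the paper's intent: the corollary is stated without a separate proof and is meant to follow immediately from \cref{inconspicuous}, via exactly the observation you make that $n-n_b \le n-1$ when $n_b \ge 1$ and that the combined size is $0$ when $n_b=0$. Your explicit handling of the degenerate case $n_b=0$ is a nice touch that the paper leaves implicit.
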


\cref{inconspicuous,inconspicuous-cor}  also have implications
in goods allocation problems, i.e.\ matching problems in which only one side (the \emph{buyers}) has
preferences, while the other side (the \emph{goods}) does not.
\cite{atila-school-choice} and \cite{atila-strategyproofness-efficiency} consider using a version of the (student-optimal)
Gale-Shapley algorithm in order to allocate school seats to children;
as school priorities are very coarse \citep[and sometimes nonexistent, see e.g.][]{nyc-match}, some tie-breaking rule is needed so that the algorithm can be run.
Both of these papers advocate the use of a single tie-breaking rule for all schools (e.g.\ a single lottery conducted once before the algorithm runs, to assign each student
a number to be used for tie breaking at every school) over a different tie-breaking rule for each school (e.g.\ a different lottery for each school), arguing
that the former results in higher social welfare. \cref{inconspicuous} strengthens this argument and makes it even more concrete, showing that in goods allocation problems
(in the absence of any priorities for the goods),
an implementation using multiple arbitrary (e.g.\ randomized) tie-breaking rules may yield (for the case of randomization, with possibly-small, albeit positive, probability)
any allocation that is buyer-rational,
regardless of the internal orders of the buyer's preference lists, if even an amortized less-than-one-buyer blacklist per good is allowed (and as we show in the next section, even if no blacklists are allowed whatsoever, if $|W|\!<\!|M|$); in contrast, it is easy to see that an implementation with a single tie-breaking
rule is equivalent to serial dictatorship, or to random serial dictatorship \citep{rsd} if the tie-breaking rule is randomized, and thus e.g.\ constructs Pareto-efficient allocations.

We note that since the combined size of the preference lists of $W$ as in \cref{inconspicuous}
is $\Theta(n^2)$, then as long as the algorithm for finding them must encode
its output explicitly,
its
time complexity
must be $\Omega(n^2)$. In \cref{flat-case} in the appendix,
we show that attaining $\Theta(n^2)$ time complexity is possible in the special
case of \cref{inconspicuous} in which the top choices of all men are distinct.
Somewhat surprisingly, the analysis and proof for this scenario, in which the men
attempt to force some matching as the $M$-optimal stable matching
\emph{\`a la} \cref{men-forcing}, 
are simpler than those of the general case,
as is calculating the women's ``response'' $\prefs{W}$.%
\footnote{%
Moreover, in this special case, while $\prefs{W}$
naturally depends on $\prefs{M}$,
if the implementation of the Gale-Shapley
algorithm that is used is the one proposed by \cite{Dubins-Freedman}, then
there exists a scheduling of this implementation for which the decision of whom each woman
prefers or blacklists on each step according to $\prefs{W}$ can be taken online.
More precisely, both the choice of who acts next, and the action of that participant if that participant is a woman,
depend solely the history of the run (and not on the not-yet-disclosed suffixes of the men's preference lists).
Thus, if participants are not required to submit their preference lists
in advance, but rather only to dynamically act upon them,
then a strategy for the women
that forces $\matching$ as the only stable matching against \emph{every profile} $\prefs{M}$ of
preference lists for $M$ can be constructed for \citeauthor{Dubins-Freedman}'s implementation of the algorithm, if the women can control its scheduling; this does not seem to be
possible in the general case of \cref{inconspicuous}.}
Thus, in a sense the men
actually inadvertently help the women whenever they try to manipulate the algorithm and force some matching as the
$M$-optimal stable matching.

We now show that \cref{inconspicuous}
is tight in a strong sense, w.r.t.\ the amount of blacklists and their sizes.
We conclude that \cref{inconspicuous} describes an optimal strategy for the
women in any model in which, all assigned partners being equal,
the utility of the coalition of
women decreases as the combined size of their blacklists increases (e.g.\
if women have to pay even a small cost for every man that they blacklist).
Indeed, in such models \cref{inconspicuous} beats
preference truncation by an order of magnitude.

\begin{thm}[Tightness of \cref{inconspicuous,inconspicuous-cor}]\label{inconspicuous-tight}
Let $n \in \naturals$ and
let $W$ and $M$ be sets of women and men, respectively,
s.t.\ $|W|\!=\!|M|\!=\!n$. For every $0 \le n_b \le \left\lfloor \frac{n}{2} \right\rfloor$ and for every
$l_1,\ldots,l_{n_b} > 0$ s.t.\ $l_1+\cdots+l_{n_b}\le n - n_b$,
there exist a profile $\prefs{M}$ of
preference lists for $M$ in which all blacklists are empty, and a perfect matching $\matching$,
s.t.\ both of the following hold.
\begin{parts}
\item\label{inconspicuous-tight-exists}
There exists a profile $\prefs{W}$
of preference lists for $W$ with precisely
$n_b$ nonempty blacklists, all pairwise disjoint and
of sizes $l_1,\ldots,l_{n_b}$,
s.t.\ the only stable matching,
given $\prefs{W}$ and $\prefs{M}$, is $\matching$.
\item\label{inconspicuous-tight-no-less}
For every profile $\prefs{W}'$
of preference lists for $W$
s.t.\
the only stable matching, given $\prefs{W}'$ and $\prefs{M}$, is $\matching$,
there exist $n_b$ distinct women $w_1,\ldots,w_{n_b} \in W$ s.t.\
$|B_{w_i}(\prefs{W}')|\ge l_i$ for every $0 \le i \le n_b$.
\end{parts}
\end{thm}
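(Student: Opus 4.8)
The plan is to split \cref{inconspicuous-tight} into its two halves, treating the existence part \labelcref{inconspicuous-tight-exists} as essentially a corollary of \cref{inconspicuous} applied to a carefully-engineered instance, and the lower-bound part \labelcref{inconspicuous-tight-no-less} as the genuinely new content. For the construction, I would build $\prefs{M}$ and $\matching$ out of $n_b$ disjoint ``gadgets,'' one per prescribed blacklist length $l_i$, plus a block of ``free'' women whose preferences are irrelevant. A natural gadget on a set of $l_i+1$ women and $l_i+1$ men is a cyclic preference structure (each man's top choice shifted by one relative to $\matching$) that creates exactly one long ``rotation'': absent blacklists, this cycle yields more than one stable matching, and the minimal way to kill it using only women's manipulations is for one designated woman in the gadget to blacklist the $l_i$ men that form the offending alternatives for her partners. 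Since the gadgets are on disjoint sets of men, the resulting blacklists in $\prefs{W}$ are automatically pairwise disjoint, of sizes $l_1,\ldots,l_{n_b}$, and $n_b \le \lfloor n/2\rfloor$ together with $\sum l_i \le n - n_b$ guarantees there are enough men and enough women to host all gadgets; then invoking \cref{inconspicuous} (or exhibiting the $\prefs{W}$ directly) gives uniqueness of $\matching$, so \labelcref{inconspicuous-tight-exists} follows.

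For \labelcref{inconspicuous-tight-no-less}, fix any $\prefs{W}'$ for which $\matching$ is the unique stable matching. The key structural fact I would use is \cref{women-worst}: if $\matching$ is the unique stable matching, then in particular it is the $W$-optimal stable matching, and also the outcome of the $W$-proposing Gale-Shapley algorithm. I would argue that, restricted to each gadget, the cyclic structure of $\prefs{M}$ forces the existence of an ``alternative'' stable matching along the rotation unless some woman in that gadget has blacklisted enough men to block it. Concretely, in gadget $i$, consider the matching $\matching_i$ obtained from $\matching$ by rotating all pairs within the gadget by one step along the cycle; each man in the gadget weakly prefers $\matching_i$ to $\matching$ (by the shift in his top choice), so the only obstruction to $\matching_i$ (or some matching along the rotation) being stable is a $W$-rationality violation, i.e. some woman in the gadget must blacklist her partner under $\matching_i$. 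Pushing this further — each of the $l_i$ intermediate matchings along the rotation must be blocked — and using that the women's preference orders are otherwise unconstrained, I would show that a single woman $w_i$ in gadget $i$ must have at least $l_i$ men on her blacklist (those being her successively-more-preferred-or-equal partners along the rotation that she must reject). The women $w_1,\ldots,w_{n_b}$ are distinct because the gadgets are on disjoint man-sets and the relevant blacklisted men live inside the corresponding gadget.

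The main obstacle will be making the lower-bound argument fully rigorous, i.e. proving that it is genuinely \emph{one} woman per gadget who must absorb all $l_i$ blacklisted men, rather than the blacklisting being spread thinly across several women in the gadget. This requires a careful case analysis of how a rotation in $\prefs{M}$ can be destabilized by women's manipulations: one must show that blocking the entire chain of intermediate matchings along the rotation cannot be done by, say, two women each blacklisting half the men, but really needs the single ``hinge'' woman of the cycle to reject all of her better-or-equal alternatives. I expect the cleanest route is to design the gadget so that the offending alternatives for all intermediate matchings are alternatives for the \emph{same} woman $w_i$ (e.g. by making $w_i$ the unique woman whose $\matching$-partner is ranked low by everyone else in the gadget, so that she is the only one who can be ``pushed around'' along the rotation), which reduces the argument to: $w_i$ sees $l_i$ distinct men each of whom, together with $w_i$, would form a blocking pair for some matching that is otherwise forced to be stable, hence $w_i$ must blacklist all $l_i$ of them. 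Verifying that these $l_i$ men are distinct and that no other manipulation circumvents them is the delicate combinatorial core of the proof; everything else is bookkeeping to assemble the gadgets and count.
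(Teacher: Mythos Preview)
Your construction for \cref{inconspicuous-tight-exists} --- disjoint cyclic gadgets of sizes $l_i+1$ --- is exactly what the paper does, and invoking \cref{inconspicuous} or writing down $\prefs{W}$ directly both work. No issues there.

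For \cref{inconspicuous-tight-no-less}, however, your plan diverges from the paper and has a genuine gap that your proposed fix does not close. You want to consider the intermediate rotated matchings $\matching^{(k)}$ and argue each must be blocked. The problem is exactly the one you flag: blocking $\matching^{(k)}$ for $k\ge 2$ does \emph{not} require a blacklist at all --- a woman's preference order can supply a blocking pair --- so you cannot simply count $l_i$ blacklistings, let alone concentrate them on one woman. Your suggested remedy, to redesign the gadget so that a designated woman $w_i$ is the only one who can be ``pushed around,'' is conceptually off: in the paper's gadget the women are completely symmetric, and which woman ends up with the large blacklist depends on $\prefs{W}'$, not on the gadget. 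So no static choice of $w_i$ can work.

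The paper's argument is operational rather than structural. It runs men-proposing Gale--Shapley under $\prefs{W}'$ with a specific timing: defer each blacklist-induced rejection until the algorithm otherwise converges, then release one such rejection and let it cascade (one ``iteration''). In the cyclic gadget of size $\ell+1$, each iteration visits every woman exactly once, so there are exactly $\ell$ iterations to bring every man to his last choice. Now look at the woman $w$ who triggers the \emph{last} iteration: at that moment she holds a blacklisted man, and since a woman holding a non-blacklisted man never trades down to a blacklisted one, she has held a blacklisted man throughout, hence every man she rejected in the previous $\ell-1$ iterations was also blacklisted. That gives $|B_w(\prefs{W}')|\ge \ell$. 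The gadgets being disjoint (no man in gadget $i$ ever reaches a woman outside it) yields the $n_b$ distinct women. This counting-along-the-run argument is the missing idea in your sketch.
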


\begin{remark}\label{tightness-notes}
Regarding \crefpart{inconspicuous-tight}{no-less},
\begin{parts}
\item
As shown in our proof below,
the requirement that the
only stable matching, given $\prefs{W}'$ and $\prefs{M}$, is $\matching$,
may be replaced by the weaker assumption that the $M$-optimal stable matching,
given $\prefs{W}'$ and $\prefs{M}$, be $\matching$.
\item
No limitations are imposed regarding
the number of blacklists in $\prefs{W}'$ in which each man appears; in particular, the blacklists
in $\prefs{W}'$ need not necessarily be pairwise disjoint.
\end{parts}
\end{remark}

\subsubsection{Proof Outline of Theorem~\refintitle{inconspicuous}}

A full proof of \cref{inconspicuous} is given in the appendix.
The general flow of the proof of is as follows.  We construct a profile of preference lists for $W$ s.t.\ the top choice of every $w \in W$ is $\matching(w)$; thus,
the $W$-optimal stable matching is $\matching$, and it is enough to make sure that the \mbox{$M$-optimal} stable matching is $\matching$ as well. We construct this
profile iteratively.

Assume, for the time being, that the top choices of all men are distinct, i.e.\ each woman is serenaded-to on the first night of the Gale-Shapley algorithm by precisely
one man (in this case, in the absence
of any blacklists for women, the algorithm would stop on the first night).
We pick some woman $\tilde{w}$ who is not matched with $\matching(\tilde{w})$, and set her blacklist list so that she rejects the unique man $m$
serenading under her window on the first night.
We adjust the women's preferences (see \cref{build-base}), so that $m$ is continually rejected until he serenades under the window of $\matching(m)$, who then accepts him, and in turn
rejects the man $m'$ serenading under her window, who is continually rejected until he serenades under the window of $\matching(m')$, and so fourth until this \emph{rejection cycle} concludes with $\matching(\tilde{w})$ serenading under the window of $\tilde{w}$. We note that while $\tilde{w}$ may have rejected (and thus has blacklisted) more than one man during this rejection cycle,
all such men are matched with their $\matching$-partners at the end of this rejection cycle, and in addition so is $\matching(\tilde{w})$, whom no woman blacklisted; in fact, the only woman
who blacklists anyone so far is~$\tilde{w}$, and so we have that more men are now matched with their $\matching$-partners than have been blacklisted, so we are ``on the right track'', in a sense.

At this point, we would have na\"{\i}vely liked to pick another woman $\tilde{w}'$ who is unmatched with $\matching(\tilde{w}')$,
and initiate a similar rejection cycle triggered by her, and so fourth until $\matching$ is obtained \citep[arguing that this ``cycle-by-cycle'' simulation yields the $M$-optimal stable matching
as well, due to the outcome of the Gale-Shapley algorithm being invariant
under certain timing changes, an invariance established by][]{Dubins-Freedman}. Alas, it is quite possible that all ``nominees'' for the role of $\tilde{w}'$ have already rejected quite a few men during the previous
rejection cycle (that which was triggered by $\tilde{w}$). In this case, adjusting the preference list of~$\tilde{w}'$ to reject the single man serenading under her window entails having her blacklist
not only that man, but also every man she rejected in the previous rejection cycle in favour of this man, which would not yield the required low combined blacklist size (nor necessarily yield
disjoint blacklists). In this case, we take a step back and ``merge'' (see \cref{build-step}) what would have been the rejection cycle triggered by $\tilde{w}'$ into the previous rejection cycle
triggered by $\tilde{w}$, i.e.\ modify the preferences of $W$ without blacklisting an ``excessive'' amount of men,
so that the ``chain-reaction'' triggered by the rejection of $m$ by $\tilde{w}$, would cause not only all rejections from the cycle triggered by $\tilde{w}$ (as originally defined), but also the rejections from the cycle that
would have been triggered by~$\tilde{w}'$. Luckily, such successive ``merging'' can be done in an efficient manner, without the need to resimulate rejection cycles after each step.
Only when ``merging'' of additional rejection cycles is no longer possible (see \cref{build-cor}), do we start another rejection cycle (see induction step in the proof of \cref{flat-case}), knowing that any woman that we pick to trigger this new rejection cycle has not yet rejected any man in previous rejection cycles.

In the general case, if we let the Gale-Shapley algorithm run its course (using arbitrary preferences s.t.\ each woman $w$ prefers $\matching(w)$ over any other man),
then by the time the algorithm halts, it may already be the case that every woman has rejected quite a few men, and therefore, as in the previous case, cannot be used to trigger a rejection cycle
without blacklisting an ``excessive'' amount of men.
In this case, we show (see the induction step in the proof of \cref{inconspicuous}) that there exists at least one woman s.t.\ the rejection cycle that would have been triggered by her can be ``merged''
into the run of the algorithm before it has halted, without blacklisting an ``excessive'' amount of men. The analysis, and the corresponding modifications to the preferences of $W$, are considerably more delicate in the case. A side effect is that the time complexity is also somewhat higher (a worst case of $O(n^3)$ instead of $O(n^2)$), however certain properties of random permutations are used to show that the increment in the average-case time complexity is
considerably less significant (from $O(n^2)$ to $O(n^2\cdot\log n)$).

\subsection{Partial Matchings}

It turns out that the case in which $|W|\!=\!|M|$ is a singular case at which a phase change occurs. If there are more women than men (and therefore not all
women are matched in $\matching$), then each unmatched woman may be required to
blacklist as many as all men;
in contrast, if there are more men than women (and therefore not all
men are matched in $\matching$), and if the unmatched men do not blacklist any
woman,\footnote{A much weaker condition suffices. See e.g.\ \crefpart{cor-not-all-matched}{unmatched-men} below.}
then no blacklists are whatsoever required, turning the dichotomy between
\cref{men-forcing,overkill} on its head.
It is interesting to note
that \cite{Ashlagi-Leshno} have shown that a somewhat
similar phase change occurs w.r.t.\ the expected ranking of each participant's partner in the participant's
preference list in a random market. Both that paper and this one show, in different senses,
that the preferences of the smaller side of the market, even if it is only
slightly smaller, play a far more
significant role than may be expected in determining the stable matchings, and those of the larger side ---
a considerably insignificant one. In a sense, our results extend this qualitative statement from a random market to \emph{any} market.

\cref{inconspicuous-not-all-matched} formalizes the above
results, as well as their generalizations for partial matchings with
both unmatched women and unmatched men. \cref{tightness-not-all-matched} shows that these results are also tight in the same strong sense in which
\cref{inconspicuous-tight} shows that \cref{inconspicuous} is tight.
Before we formulate these theorems, we first define some notation.

\begin{defn}
Let $W$ and $M$ be sets of women and men, respectively.
For every $\tobematched{W} \subseteq W$ (resp.\ $\tobematched{M} \subseteq M$),
we define $\comp{\tobematched{W}}=W\setminus \tobematched{W}$ (resp.\ $\comp{\tobematched{M}}=M\setminus \tobematched{M}$).
(The set $W$ (resp.\ $M$) will be clear from context.)
\end{defn}

\begin{thm}[Manipulation with Minimal Blacklists to obtain a Partial Matching]\label{inconspicuous-not-all-matched}
Let $W$ and $M$ be sets of women and men of arbitrary sizes,
let $\prefs{M}$ be a profile of preference lists for $M$,
and let $\matching$ be a (possibly-partial) $M$-rational matching. 
Define $n_h \eqdef \bigl|\bigl\{w \in \matched{W} \mid \exists m \in \unmatched{M} : w \notin B_m(\prefs{M})\bigr\}\bigr|$.
There exists a profile $\prefs{W}$
of preference lists for $W$, s.t.\ all of the following hold.
\begin{parts}
\item
The only stable matching, given $\prefs{W}$ and $\prefs{M}$, is $\matching$.
\item
The blacklists of $\matched{W}$ in $\prefs{W}$ are pairwise disjoint, and contain only
members of $\matched{M}$.
\item
$n_b$, the number of women in $\matched{W}$ who have nonempty blacklists in $\prefs{W}$,
is at most $\left\lfloor \frac{n_{\matching}-n_h}{2} \right\rfloor$.
\item
The combined size of the blacklists of $\matched{W}$ in $\prefs{W}$
is at most $n_{\matching}\!-\!n_h\!-\!n_b$.
\end{parts}
Furthermore, $\prefs{W}$ can be computed in worst-case
$O\bigl(max\bigl\{|W|\cdot|M|,(n_{\matching}-n_h)\cdot {n_{\matching}}^2\bigr\}\bigr)$ time, best-case $O(|W|\cdot|M|)$ time and
average-case $O\bigl(max\bigl\{|W|\cdot|M|,{n_{\matching}}^2 \log \frac{n_{\matching}+1}{n_h+1}\bigr\}\bigr)$ time.
\end{thm}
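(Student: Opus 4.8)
The plan is to bootstrap off of \cref{inconspicuous}. First I would dispose of the unmatched women by giving every $w \in \unmatched{W}$ the empty preference list, so that $B_w(\prefs{W}) = M$: such a $w$ is then unmatched in every $W$-rational (hence every stable) matching and can never belong to a blocking pair, and since the theorem constrains only the blacklists of the matched women $\matched{W}$, these lists cost us nothing. Next I would insist that every $w \in \matched{W}$ rank $\matching(w)$ first in $\prefs{W}$. This alone already makes $\matching$ stable — a matched woman is never part of a blocking pair, as $\matching(w)$ is her favourite, and an unmatched woman is never part of one either — and makes $\matching$ weakly preferred by every woman to every stable matching. Hence, once we additionally arrange that the Gale-Shapley algorithm outputs $\matching$, i.e.\ that $\matching$ is the $M$-optimal stable matching, which by \cref{women-worst} is then also the \emph{worst} stable matching for every woman, every woman is simultaneously at her best and at her worst stable partner, so, preferences being strict, $\matching$ is forced to be the unique stable matching. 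Thus the whole task reduces to filling in the remaining entries of the lists of $\matched{W}$ so that the Gale-Shapley run terminates in $\matching$, within the stated blacklist budget.

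For this I would re-run the inductive construction behind \cref{inconspicuous} --- the base case, merge step and corollary of \cref{build-base,build-step,build-cor} --- on the ``core'' market $(\matched{W},\matched{M})$ with the perfect matching $\matching$, adding one ingredient. In any run ending in $\matching$, every $m \in \unmatched{M}$ is rejected by every woman he finds acceptable, and by the very definition of $n_h$ all such women lie in $H \eqdef \bigl\{w \in \matched{W} \mid \exists m \in \unmatched{M} : w \notin B_m(\prefs{M})\bigr\}$, with $|H| = n_h$. The idea is to let these \emph{forced} rejections play the role that a woman's rejection of her first-night suitor --- enabled by a single blacklisted man --- plays in \cref{inconspicuous}: for each $w \in H$ I would insert some man $m \in \unmatched{M}$ with $w \notin B_m(\prefs{M})$ into $w$'s list immediately below $\matching(w)$, so that when $m$ serenades under $w$'s window (as he must, being ultimately unmatched), $w$ provisionally accepts him over her current suitor, thereby triggering a rejection cycle ``for free'', until that cycle ends with $\matching(w)$ serenading $w$, whereupon $w$ discards $m$, who resumes descending his own list. (A single unmatched man, rejected $t$ times, can sponsor $t$ distinct women of $H$ this way, so no Hall-type condition on $\unmatched{M}$ is needed.) Every \emph{other} member of $\unmatched{M}$ whom a woman of $\matched{W}$ finds acceptable is placed at the very bottom of her list, so that she turns him down without blacklisting him; hence the blacklists of $\matched{W}$ stay inside $\matched{M}$, as required. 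The upshot is that the induction must manufacture nonempty blacklists only for the $n_{\matching} - n_h$ women outside $H$, and the counting from \cref{inconspicuous} goes through with $n$ replaced by $n_{\matching} - n_h$, giving $n_b \le \lfloor (n_{\matching} - n_h)/2 \rfloor$ pairwise-disjoint nonempty blacklists of combined size at most $n_{\matching} - n_h - n_b$.

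The main obstacle --- and where the genuine work lies --- is the scheduling and bookkeeping that make these ``sponsored'' triggers fire correctly. One must interleave the Gale-Shapley run so that each sponsor serenades under the window of its $H$-woman \emph{before} her $\matching$-partner does (otherwise she merely brushes the sponsor off and no cycle is born); when several unmatched men are acceptable to overlapping sets of $H$-women, one must choose the sponsors and order the cycles consistently with the men's lists; and one must re-verify each invariant of \cref{build-base,build-step,build-cor} in the presence both of the sponsors and of the men outside $\matched{M}$ --- in particular, that after all cycles have been processed every $m \in \unmatched{M}$ has indeed been turned down by his entire list, so that the algorithm halts at $\matching$ rather than at a matching with a different set of matched men. (When $H = \matched{W}$ --- e.g.\ whenever $|W| < |M|$ and no unmatched man blacklists anyone --- this forces $n_b = 0$, recovering the ``no blacklists at all'' statement.) The complexity bounds then follow as in \cref{inconspicuous}: $O(|W|\cdot|M|)$ to initialise and write down the lists, plus $O\bigl((n_{\matching}-n_h)\cdot n_{\matching}^2\bigr)$ for the at most $O(n_{\matching}-n_h)$ rejection-cycle steps at $O(n_{\matching}^2)$ each; the average-case refinement reuses the random-permutation estimate from the perfect case to bound the expected number of required cycles, turning the $\log n_{\matching}$ factor into $\log\frac{n_{\matching}+1}{n_h+1}$.
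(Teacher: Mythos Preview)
Your proposal is correct and follows essentially the same approach as the paper. Both of you dispose of $\unmatched{W}$ by having them blacklist everyone, place $\unmatched{M}$ at the bottom of the matched women's lists, and---crucially---use an unmatched man $\tilde{m}\in\unmatched{M}$ with $w\notin B_{\tilde{m}}(\prefs{M})$ as a ``free'' trigger for each $w\in H$, promoting $\tilde{m}$ to second place in $w$'s list so that his inevitable visit initiates a rejection cycle at no blacklist cost; after these free steps only women outside $H$ remain in $T^i$, and the counting from \cref{inconspicuous} carries over with $n$ replaced by $n_\matching-n_h$.

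The one place where the paper is a bit slicker is exactly the scheduling obstacle you flag: rather than worrying about whether the sponsor arrives before $\matching(w)$, the paper simply observes that each such free trigger is an instance of the \emph{second case} of the induction step in the proof of \cref{inconspicuous} (with $\tilde{m}\in\unmatched{M}$ playing the role of the man rejected by $\tilde{w}^{i+1}$ on night $t^i(\tilde{w}^{i+1})$), and notes that because every woman is provisionally matched at the end of $\firsti$---there being no unmatched ``hole'' $\hat{w}$---\cref{non-flat-mod-demote-blacklist} is never invoked, so no blacklist is touched. This folds your sponsors directly into the existing three-part timing decomposition $(\firstthird,\secondthird,\thirdthird)$ and reuses all the invariants verbatim, which dissolves the interleaving issue you identified without any new bookkeeping.
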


\begin{cor}\label{cor-not-all-matched}
Under the conditions of \cref{inconspicuous},
\begin{parts}
\item
The combined size of all blacklists of $\matched{W}$
in $\prefs{W}$ is at most $n_{\matching}\!-\!n_h\!-\!1 \le n\!-\!1$,
i.e.\ the average blacklist size of $\matched{W}$ is less than $1$.
\item\label{cor-not-all-matched-unmatched-men}
If all women are matched in $\matching$ (i.e.\ $\matched{W}\!=\!W$),
and if for each $w\!\in\!W$ except for perhaps one,
there exists $m\!\in\!\unmatched{M}$ who does not
blacklist $w$ (e.g.\ this condition holds if at least one
$m\!\in\!\unmatched{M}$ blacklists at most one woman),
then all blacklists in $\prefs{W}$ are empty, and $\prefs{W}$ can be computed
in $O\bigl(|W|\cdot|M|\bigr)$ time, which constitutes a tight bound.
\end{parts}
\end{cor}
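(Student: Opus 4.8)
The plan is to read off both parts directly from \cref{inconspicuous-not-all-matched}; essentially no new argument is needed beyond unwinding the hypotheses and substituting into that theorem's conclusions.

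For the first part I would split on whether $n_b\ge 1$. If $n_b\ge 1$, the combined-size bound of \cref{inconspicuous-not-all-matched} already gives that the combined size of the blacklists of $\matched{W}$ is at most $n_{\matching}-n_h-n_b\le n_{\matching}-n_h-1$; since $n_h\ge 0$ and $n_{\matching}\le n$, this is at most $n-1$, and dividing by $n_{\matching}=|\matched{W}|$ shows the average blacklist size over $\matched{W}$ is strictly below $1$. If $n_b=0$, then by the definition of $n_b$ no woman of $\matched{W}$ has a nonempty blacklist, so the combined size is $0$ and both assertions hold trivially (where, in the corner case $n_{\matching}-n_h-1<0$, the stated bound is to be read as $0$).

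For \crefpart{cor-not-all-matched}{unmatched-men} the first step is to recast the stated condition as an inequality on $n_h$: requiring that for each $w\in W$ outside of at most one exception there be some $m\in\unmatched{M}$ with $w\notin B_m(\prefs{M})$ is precisely the statement $n_h\ge |W|-1$, and since $\matched{W}=W$ we have $n_{\matching}=|W|$, so $n_{\matching}-n_h\le 1$. (The parenthetical sufficient condition is immediate: a single $m_0\in\unmatched{M}$ that blacklists at most one woman serves as the required witness $m$ simultaneously for every woman except at most that one.) Substituting $n_{\matching}-n_h\le 1$ into the bound $n_b\le\bigl\lfloor\frac{n_{\matching}-n_h}{2}\bigr\rfloor$ of \cref{inconspicuous-not-all-matched} forces $n_b=0$; together with the part of that theorem asserting that all nonempty blacklists of $\prefs{W}$ lie within $\matched{W}=W$, this gives that every blacklist in $\prefs{W}$ is empty. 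The same substitution collapses the worst-case running-time bound $O\bigl(\max\{|W|\cdot|M|,(n_{\matching}-n_h)\cdot n_{\matching}^2\}\bigr)$ to $O\bigl(\max\{|W|\cdot|M|,n_{\matching}^2\}\bigr)=O(|W|\cdot|M|)$, using $n_{\matching}\le\min\{|W|,|M|\}$.

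It remains to argue that this $O(|W|\cdot|M|)$ bound is tight, for which I would invoke the same output-size argument used in the remark following \cref{inconspicuous-cor}: since every blacklist in $\prefs{W}$ is empty, each woman's preference list is by definition a total order on all of $M$, so any algorithm that emits $\prefs{W}$ explicitly must produce $\Theta(|W|\cdot|M|)$ symbols and hence runs in time $\Omega(|W|\cdot|M|)$, matching the upper bound. I do not anticipate a genuine obstacle here; the only steps calling for a little care are translating the hypothesis of the second part into the inequality $n_h\ge|W|-1$ and checking that the maximum in the running-time bound really does collapse to $O(|W|\cdot|M|)$.
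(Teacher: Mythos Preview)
Your proposal is correct and follows exactly the approach the paper intends: the corollary is stated without an explicit proof because both parts are meant to be read off directly from \cref{inconspicuous-not-all-matched} by substituting the hypotheses into the bounds on $n_b$, on the combined blacklist size, and on the running time. Your handling of the edge case $n_b=0$ in the first part, your translation of the hypothesis of the second part into $n_h\ge|W|-1$ (hence $n_{\matching}-n_h\le1$), and your output-size argument for tightness (mirroring the remark after \cref{inconspicuous-cor}) are all exactly what is needed.
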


\begin{thm}[Tightness of \cref{inconspicuous-not-all-matched,cor-not-all-matched}]\label{tightness-not-all-matched}
Let $W$ and $M$ be sets of women and men of arbitrary sizes, and let
$\tobematched{W} \!\subseteq\! W$ and $\tobematched{M} \!\subseteq\! M$ s.t.\
$|\tobematched{W}|\!=\!|\tobematched{M}|$.
For each $w \in \comp{\tobematched{W}}$, let $B_w \!\subseteq\! M$, and
for each~$m \in \comp{\tobematched{M}}$, let $B_m \!\subseteq\! W$.
Define $n_h \eqdef \bigl|\bigl\{w \in \tobematched{W} \mid \exists m \in
\comp{\tobematched{M}} : w \notin B_m\bigr\}\bigr|$.
For every $0 \le n_b \le \left\lfloor \frac{n_{\matching}-n_h}{2} \right\rfloor$ and for every
$l_1,\ldots,l_{n_b} > 0$ s.t.\ $l_1+\cdots+l_{n_b}\le n_{\matching} - n_h - n_b$,
there exist a profile~$\prefs{M}$ of
preference lists for $M$ s.t.\
$B_m(\prefs{M})\!=\!B_m$ for each $m \in \comp{\tobematched{M}}$,
and a matching $\matching$ s.t.\ $\matched{W}\!=\!\tobematched{W}$ and $\matched{M}\!=\!\tobematched{M}$,
s.t.\ both of the following hold.
\begin{parts}
\item\label{tightness-not-all-matched-exists}
There exists a profile $\prefs{W}$
of preference lists for $W$, in which
$B_w(\prefs{W})\!\supseteq\! B_w$ for each $w \in \comp{\tobematched{W}}$,
and in which
$\tobematched{W}$ have precisely
$n_b$ nonempty blacklists, all pairwise disjoint and containing only members of $\tobematched{M}$,
and of sizes $l_1,\ldots,l_{n_b}$,
s.t.\
the only stable matching, given $\prefs{W}$ and $\prefs{M}$, is $\matching$.
\item\label{tightness-not-all-matched-no-less}
For every profile $\prefs{W}'$
of preference lists for $W$
s.t.\
the only stable matching, given $\prefs{W}'$ and $\prefs{M}$, is~$\matching$,
it holds that
$B_w(\prefs{W}')\supseteq \{m \in B_w(\prefs{W}) \mid w \notin B_m\}$
for each $w \in \comp{\tobematched{W}}$,
and in
addition, there exist $n_b$ distinct women $w_1,\ldots,w_{n_b} \in \tobematched{W}$ s.t.\
$|B_w(\prefs{W}') \cap \tobematched{M}| \ge l_i$ for every $0 \le i \le n_b$.
\end{parts}
\end{thm}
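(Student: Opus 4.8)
The plan is to reduce the claim to the balanced perfect-matching case, \cref{inconspicuous-tight}, by building the hard instance as a \emph{superposition} of essentially non-interacting gadgets living on pairwise-disjoint sets of participants --- one of which is a verbatim copy of the instance supplied by \cref{inconspicuous-tight} --- and then arguing each gadget locally. The structural invariant I maintain is that, in $\prefs{M}$, no man's preference list reaches a woman belonging to a different gadget, except that some men may have one or more women of $\comp{\tobematched{W}}$ spliced in, and each such woman will be made to reject those men; this forces both Gale--Shapley runs to decompose across the gadgets.

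Construction for \crefpart{tightness-not-all-matched}{exists}. By the definition of $n_h$, exactly $n_{\matching}-n_h$ women of $\tobematched{W}$ are blacklisted by \emph{every} man of $\comp{\tobematched{M}}$; call these \emph{core-eligible}. Since $n_b\le\left\lfloor\frac{n_{\matching}-n_h}{2}\right\rfloor$ and $l_1+\cdots+l_{n_b}\le n_{\matching}-n_h-n_b$, the tuple $(n_b,l_1,\ldots,l_{n_b})$ is a legal input to \cref{inconspicuous-tight} for a balanced market on $l_1+\cdots+l_{n_b}+n_b\le n_{\matching}-n_h$ participants; I take a \emph{core} consisting of that many core-eligible women together with their intended $\matching$-partners and instantiate on it the instance $(\prefs{M},\matching)$ and the witness $\prefs{W}$ produced by \cref{inconspicuous-tight}, padding each core woman's list at the bottom with all non-core men so that her blacklist stays inside $\tobematched{M}$; this contributes precisely the $n_b$ pairwise-disjoint blacklists of sizes $l_1,\ldots,l_{n_b}$. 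Every other woman of $\tobematched{W}$ (the remaining core-eligible ones and the $n_h$ ``helped'' ones) is assigned a \emph{loyal exclusive partner} --- a man whose preference list is the one-element list naming only her --- and a blacklist-free list topped by that partner. For each $m\in\comp{\tobematched{M}}$ I put $B_m(\prefs{M})=B_m$, which fixes the content of $m$'s list to $W\setminus B_m$; any order will do. Finally, for each $w\in\comp{\tobematched{W}}$ I set $B_w(\prefs{W})\eqdef B_w\cup\{m\in\comp{\tobematched{M}}\mid w\notin B_m\}$ (so $B_w(\prefs{W})\supseteq B_w$), and for each matched man $m$ lying in some prescribed $B_w$ with $w\notin B_m$ I splice $w$ into $m$'s list just above $\matching(m)$. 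With this in place one checks that no man proposes to a woman of a different gadget except to a spliced-in $w\in\comp{\tobematched{W}}$, who then rejects him; hence both the men-proposing and the women-proposing Gale--Shapley runs decompose, on the core coincide with the runs guaranteed by \cref{inconspicuous-tight}, on the loyal-partner gadgets produce $\matching$, and leave every woman of $\comp{\tobematched{W}}$ with no accepted proposal. Both runs therefore output $\matching$, so $\matching$ is the unique stable matching given $\prefs{W}$ and $\prefs{M}$ (as noted after \cref{women-worst}), and the remaining items of \crefpart{tightness-not-all-matched}{exists} hold by construction.

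Lower bound, \crefpart{tightness-not-all-matched}{no-less}. By the first part of \cref{tightness-notes} it suffices to treat a profile $\prefs{W}'$ for which $\matching$ is merely the $M$-optimal stable matching given $\prefs{W}'$ and $\prefs{M}$. Fix $w\in\comp{\tobematched{W}}$ and $m\in B_w(\prefs{W})$ with $w\notin B_m$. By construction, either $m\in\comp{\tobematched{M}}$ --- so $m$ is unmatched in $\matching$ and finds $w$ acceptable --- or $m$ is a matched man that finds $w$ acceptable and was spliced to prefer $w$ over $\matching(m)$; in either case $\{w,m\}$ would block $\matching$ under $\prefs{M}$ unless $w$ blacklists $m$, so $m\in B_w(\prefs{W}')$, which gives $B_w(\prefs{W}')\supseteq\{m\in B_w(\prefs{W})\mid w\notin B_m\}$. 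For the $n_b$ women: the same blocking-pair reasoning shows that every man spliced into some $w\in\comp{\tobematched{W}}$ is rejected by that $w$ in the men-proposing run under $\prefs{W}'$; moreover no man outside the core ever has a core woman on his $\prefs{M}$-list (core women are blacklisted by every man of $\comp{\tobematched{M}}$, and loyal-partner men list only their own partners plus possibly spliced-in women of $\comp{\tobematched{W}}$); hence the core women receive proposals only from core men and in the same order as in the core sub-market under $\prefs{M}|_{\mathrm{core}}$ and $\prefs{W}'|_{\mathrm{core}}$. Consequently $\matching$ restricted to the core is the $M$-optimal stable matching of that balanced sub-market, and \crefpart{inconspicuous-tight}{no-less} supplies $n_b$ distinct core women $w_1,\ldots,w_{n_b}\in\tobematched{W}$ whose blacklists --- which consist of core men, hence of members of $\tobematched{M}$ --- have sizes at least $l_1,\ldots,l_{n_b}$, i.e.\ $|B_{w_i}(\prefs{W}')\cap\tobematched{M}|\ge l_i$.

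The step I expect to be the main obstacle is the proof that the gadgets do not interact \emph{under an arbitrary forcing profile} $\prefs{W}'$: that an adversarial $\prefs{W}'$ cannot exploit the men spliced into $\comp{\tobematched{W}}$ --- or any men at all --- to inject extra proposals into the core and thereby let the core women dodge their blacklists. The resolution is that the very blocking-pair obligations that force the women of $\comp{\tobematched{W}}$ to blacklist those men also guarantee the men bounce straight back into their own gadgets, keeping the core sub-run pristine; pinning this down rigorously --- together with the routine but lengthy verification that a single $\prefs{M}$ can simultaneously satisfy the \cref{inconspicuous-tight} core specification, the loyal-partner specification, the fixed list contents $W\setminus B_m$ for $m\in\comp{\tobematched{M}}$, and the splicings --- is the bulk of the work, after which the claim reduces cleanly to \cref{inconspicuous-tight} and an elementary blocking-pair check.
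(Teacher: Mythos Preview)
Your approach is correct and essentially the same as the paper's: both isolate a balanced ``core'' carrying the \cref{inconspicuous-tight} instance, assign loyal partners to the remaining matched women, force the unmatched women's blacklists via blocking-pair arguments, and argue that the core sub-run is isolated under any forcing $\prefs{W}'$ because non-core men never list core women while core men's detours through $\comp{\tobematched{W}}$ are immediately bounced back. The only differences are cosmetic --- the paper places the spliced-in women of $\comp{\tobematched{W}}$ at the \emph{top} of each matched man's list (as the set $P(m)$) rather than just above $\matching(m)$, gives loyal partners full preference lists rather than singleton ones, pads the core to all $n_\matching-n_h$ core-eligible women by appending zero $l_i$'s, and re-derives the core argument in-line rather than invoking \cref{inconspicuous-tight} as a black box; your black-box reduction is arguably cleaner, and your flagged ``main obstacle'' (non-interaction under arbitrary $\prefs{W}'$) is fully resolved by exactly the structural observation you sketch.
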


\begin{remark}
As in \cref{tightness-notes}, we note the following also regarding
\crefpart{tightness-not-all-matched}{no-less}.
\begin{parts}
\item
As shown in our proof below,
the requirement that the
only stable matching, given $\prefs{W}'$ and $\prefs{M}$, is $\matching$,
may be replaced by the weaker assumption that the $M$-optimal stable matching,
given $\prefs{W}'$ and $\prefs{M}$, be $\matching$.
\item
No limitations are imposed regarding
the number of blacklists in $\prefs{W}'$ in which each man appears; in particular, not even the
blacklists of $\tobematched{W}$ in $\prefs{W}'$ need necessarily be pairwise
disjoint.
\end{parts}
\end{remark}

\section{Divorces}\label{divorces}
In \cref{blacklists}, we showed that women can force
any ($M$-rational) matching
as the outcome of the Gale-Shapley algorithm, via
a small amount of nonempty blacklists, and average blacklist size less than one;
nonetheless, as shown in that section, some of these blacklists may be quite long, having length
in the order of magnitude of the number of men. We now show 
that, turning blacklisting into a dynamic process in the
Gale-Shapley algorithm, women can force any $M$-rational matching
as the outcome
by means of at most one blacklist-driven rejection per woman
(\emph{without} averaging),
but by a possibly-larger number of women than in \cref{blacklists}.

\begin{defn}[Gale-Shapley Algorithm with Divorces]
We extend the Gale-Shapley algorithm to allow for \emph{divorces}
as follows. The extended algorithm is divided into stages to which we refer as
\emph{seasons}. In the first season, the \emph{vanilla} (i.e.\ without divorces)
Gale-Shapley
algorithm runs until it converges. At the end of each season $s$, if any woman asks
to divorce her then-current partner, then one of these women, denoted $w$, is
chosen (arbitrarily, or according to some predefined rule); in
season $s\!+\!1$, the vanilla Gale-Shapley algorithm runs once more, starting at the
state concluding season $s$, with $w$ rejecting her then-current partner
on the first night (e.g.\ on the second night, this partner serenades
under the window of his next choice after $w$), and running until it converges once more.
When a season concludes with no woman asking for a divorce, the Gale-Shapley
algorithm with divorces concludes.
\end{defn}

\begin{remark}
If all women have a divorce strategy of ``never-divorce'', then the
Gale-Shapley algorithm with divorces concludes after one season, and is
thus equivalent to the vanilla Gale-Shapley algorithm.
\end{remark}

We note that we define the Gale-Shapley algorithm with divorces in terms
of seasons for ease of presentation; indeed,
all the results in this section hold verbatim, via conceptually-similar proofs, if we allow
every woman to divorce her partner instantaneously whenever she so desires.

Before we present our results regarding divorces, we first exhibit
the differences between blacklists and divorces on one hand,
and the similarities between these on the other hand. To exhibit
the qualitative distinction between these two concepts,
consider for a moment a woman~$w$ who wishes to blacklist
all men who approach her on the first night of the vanilla Gale-Shapley
algorithm; such a (potentially-long) blacklist can be ``replaced'' by a
single divorce in the following manner: $w$ places all such men at the
end of her preference list in arbitrary order, and thus rejects all such men
but one of them, denoted $m$, on the first night; if $w$ does not reject $m$
by the end of the season, then she asks to divorce him (possibly repeatedly,
until she is chosen to do so).\footnote{We note that if men were
allowed to divorce as well, then they could also use divorces to replace
blacklists, but to achieve a dual effect, i.e.\ instead of one divorce replacing
the entire blacklist of one man, one divorce could allow the blacklisting of
a woman by a single one man instead of by a set of men. Indeed, consider e.g.\ the
scenario in which all men wish to blacklist all women, forcing the empty
matching. In this case,
the men could all set the same order of preference for themselves;
they would thus all serenade under the window of the same woman $w$ on the
first night and $w$ would thus reject all of them but a single man $m$, who
would later divorce her. Similarly, every woman would have to be divorced
only by her most-favourite man, thus replacing blacklists with combined size
$n^2$, with $n$ divorces, each of a distinct woman but not necessarily
by a distinct man. We note that in this example, due to the dynamic nature
of divorces, this divorce
strategy for the men would force the empty matching as the outcome
against every profile
$\prefs{W}$ of preference lists and divorces for $W$, just as blacklisting
all women would.}
Indeed, as shown in \cref{one-divorce-per-woman,divorce-tight},
no woman need divorce more than one man in order
to achieve the lower bound on the required number of divorces;
nonetheless, as we show in \cref{divorce-tight}, the tight bound
of \cref{inconspicuous-cor} on the combined size of all blacklists cannot
be improved upon by replacing blacklists with divorces.
\cref{divorce-equiv-blacklist}, which is an immediate consequence
of the invariance of the outcome of the vanilla Gale-Shapley algorithm under timing changes~\citep{Dubins-Freedman}, exhibits an equivalence, in a sense, of the strengths
of blacklists and divorces.

\begin{prop}[Equivalent Strength of Divorces and Blacklists]\label{divorce-equiv-blacklist}
Let $W$ and $M$ be equal-sized sets of women and men, respectively,
let $\prefs{W}$ be a profile of preference lists and divorce strategies for $W$
and let $\prefs{M}$ be a profile of preference lists for $M$. Let $w \in W$.
\begin{parts}
\item
If $w$'s divorce strategy in $\prefs{W}$ is that of ``never-divorce'', then removing
$w$'s blacklist, placing its members in the end of $w$'s preference list in
arbitrary order, and replacing $w$'s divorce strategy with ``always ask
for a divorce if $w$'s current partner is in $B_w(\prefs{W})$'', does not
alter the outcome of the Gale-Shapley algorithm with divorces.
\item
If $w$ has an empty blacklist in $\prefs{W}$, then replacing her
divorce strategy with a ``never-divorce'' strategy, and blacklisting 
the minimal set $B$ s.t.\ $B$ contains all
the men $w$ divorces during the run of the Gale-Shapley algorithm with
divorces given $\prefs{W}$ and $\prefs{M}$ and s.t.\ every man that $w$ rejects
during this run in favour of a man in $B$ is also in $B$,\footnote{$B$ is the set of men reachable by means of the transitive closure of the
operation ``all men who are rejected by $w$ during this run in favour of \ldots'',
starting from a man divorced by $w$ during this run.}
does not alter the outcome of the Gale-Shapley algorithm with divorces.
\end{parts}
\end{prop}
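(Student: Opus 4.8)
The proposition is advertised as an immediate consequence of the timing-invariance of the vanilla Gale-Shapley algorithm \citep{Dubins-Freedman}, and my plan is to exploit exactly that, in the mild extension asserting that the outcome of the \emph{divorce}-enriched algorithm likewise depends only on the submitted preference lists and divorce strategies --- not on the proposal order within a season, nor on which of the divorce-requesting women is selected at each season boundary. This extension should hold for precisely the reason the original does: a divorce is just a rejection postponed to a season boundary, so once rejections-by-divorce are put on the same footing as rejections-by-preference, the standard exchange argument (no man is ever rejected by a woman to whom he would be matched under some other schedule) applies unchanged. As the paper notes just after the definition, one may equivalently work in the variant in which a woman may divorce her current partner instantaneously rather than only at a season boundary, and I will do so. I track each run by its \emph{rejected-by} relation --- recording which man is rejected by which woman at some point during the run; by the invariance this relation is well defined, and it determines the final matching (every man ends up with the first woman on his list who does not reject him).

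For Part~1, write $\prefs{W}^{\dagger}$ for the modified profile: $w$ finds every man acceptable, with $B_w(\prefs{W})$ appended to the end of her list, and divorces whenever her partner lies in $B_w(\prefs{W})$. I would run the $\prefs{W}^{\dagger}$-algorithm with the schedule in which $w$ divorces any partner drawn from $B_w(\prefs{W})$ the instant he becomes her partner. Since every man outside $B_w(\prefs{W})$ precedes every man of $B_w(\prefs{W})$ on $w$'s modified list, whenever a man of $B_w(\prefs{W})$ serenades $w$ he is rejected by her on that very night --- immediately if she already holds a better man (acceptable or not), and otherwise after the one-instant marriage-and-divorce --- exactly as he would be rejected, on that same night, in the $\prefs{W}$-run, and no acceptable man is ever displaced by a $B_w(\prefs{W})$-man in either run. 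Hence the two runs coincide step for step and share a rejected-by relation; in particular $w$ is never finally matched to a man of $B_w(\prefs{W})$, and the outcomes coincide.

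For Part~2, write $\prefs{W}^{\ddagger}$ for the profile in which $w$ never divorces and instead carries the blacklist $B$ of the statement, and fix the (well-defined) run $\rho$ under $\prefs{W}$. In $\rho$, the men whom $w$ rejects split into those lying in $B$ --- the men she divorces, together with the men she rejects by preference in favour of a man already known to lie in $B$, which is exactly the transitive closure defining $B$ --- and those she rejects by preference only in favour of men outside $B$; and $w$'s eventual partner, being never divorced and never rejected, lies outside $B$. Therefore, running $\prefs{W}^{\ddagger}$ with the schedule in which $w$ rejects each man of $B$ the instant he serenades her reproduces, on $w$'s side, exactly her rejected-by behaviour in $\rho$: her divorce cascade is replaced by the equivalent on-sight rejections, while the minimality of $B$ guarantees that no further man is rejected. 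So the two runs again share a rejected-by relation, hence the same matching.

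I expect the extension of \citeauthor{Dubins-Freedman}'s invariance to the divorce-enriched algorithm to absorb essentially all of the real care; the two simulations above are then routine bookkeeping. The one point that still needs handling is that a divorce strategy is state-dependent, so one cannot merely declare ``the divorces agree across the two runs''; the fix is to argue throughout in terms of the forced rejected-by relation, from which both the final matching and the consistency of each woman's divorce requests across schedules follow. The small auxiliary fact used above, that the algorithm never halts while a divorce request is outstanding, is immediate from its termination rule.
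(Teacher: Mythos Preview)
Your proposal is correct and follows exactly the route the paper indicates: the paper offers no proof beyond the remark that the proposition ``is an immediate consequence of the invariance of the outcome of the vanilla Gale-Shapley algorithm under timing changes~\citep{Dubins-Freedman}'', and your sketch fleshes out precisely that, using the instantaneous-divorce variant the paper explicitly permits and tracking the rejected-by relation to handle the state-dependence of divorce strategies. Your write-up is in fact more detailed than what the paper provides.
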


We are now ready to formulate our results regarding the Gale-Shapley
algorithm with divorces.

\begin{thm}[One Divorce per Woman]\label{one-divorce-per-woman}
Let $W$ and $M$ be equal-sized sets of women and men, respectively. Define
$n\!\eqdef\!|W|\!=\!|M|$.
Let $\prefs{M}$ be a profile of preference lists for $M$.
For every $M$-rational perfect matching
$\matching$, there exist a profile $\prefs{W}$
of preference lists and divorce strategies for $W$,
s.t.\ all of the following hold.
\begin{parts}
\item
The run of the Gale-Shapley algorithm with divorces
according to $\prefs{W}$ and $\prefs{M}$, which we henceforth denote by $R$,
yields $\matching$.
\item
All blacklists in $\prefs{W}$ are empty.
\item
At most $n\!-\!1$ divorces occur during $R$,
each of them by a distinct woman. Moreover, 
if a season of $R$ commences with the divorce of a woman $w$,
then this season concludes with $w$ becoming matched with $\matching(w)$.
\end{parts}
\end{thm}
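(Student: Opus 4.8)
The plan is to leverage Theorem~\ref{inconspicuous} together with \cref{divorce-equiv-blacklist} and the structural description of the blacklists arising from the proof of \cref{inconspicuous} as sketched in the proof outline. Recall that the construction behind \cref{inconspicuous} builds $\prefs{W}$ iteratively, with the top choice of every $w\in W$ being $\matching(w)$, and the nonempty blacklists arising as the ``rejection cycles'' triggered successively by women $\tilde{w}$ who are not yet matched with $\matching(\tilde{w})$; each such cycle ends with $\matching(\tilde w)$ serenading under $\tilde w$'s window, and after it, every man rejected during the cycle is matched with his $\matching$-partner. The key observation is that each nonempty blacklist $B_{\tilde w}$ in $\prefs{W}$ is exactly the set of men reachable from a single ``seed'' rejection by iterating ``men rejected by $\tilde w$ in favour of \ldots'', which is precisely the form of the set $B$ in \crefpart{divorce-equiv-blacklist}{i}. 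Hence I would apply \crefpart{divorce-equiv-blacklist}{i} to every woman $w$ with a nonempty blacklist in the profile $\prefs{W}$ from \cref{inconspicuous}, simultaneously: for each such $w$, remove her blacklist, append its members (in arbitrary order) to the end of her (already $\matching(w)$-topped) preference list, and give her the divorce strategy ``always ask for a divorce while the current partner lies in $B_w(\prefs{W})$''; all other women get ``never-divorce''. This yields a profile $\prefs{W}'$ of preference lists and divorce strategies with all blacklists empty, and by (possibly iterated application of) \crefpart{divorce-equiv-blacklist}{i} the outcome of the Gale-Shapley algorithm with divorces on $(\prefs{W}',\prefs{M})$ is unchanged from the outcome of the vanilla algorithm on $(\prefs{W},\prefs{M})$, which by \cref{inconspicuous} is $\matching$; this gives parts~1 and~2.

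For part~3, the main work is to argue that exactly one divorce per divorcing woman suffices, and that each such divorce season ``completes'' that woman. First I would bound the number of divorcing women: by \crefpart{inconspicuous}{iii} at most $\lfloor n/2\rfloor\le n-1$ women have nonempty blacklists in $\prefs{W}$, so at most $n-1$ women ever ask for a divorce under $\prefs{W}'$. That each asks at most once, and that each such season concludes with $w$ matched to $\matching(w)$, I would establish by an induction on seasons that mirrors the ``cycle-by-cycle'' structure of the proof of \cref{inconspicuous}: arrange (via the freedom in choosing which divorcing woman is selected at the end of a season) that the seasons correspond, in order, to the rejection cycles of the construction. The invariance of the Gale-Shapley outcome under timing changes~\citep{Dubins-Freedman} lets me identify the state at the end of season $s$ with the state of the vanilla run on $\prefs{W}$ after the first $s$ rejection cycles have played out. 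The crucial point is that when $w$ divorces at the start of ``her'' season, the single resulting chain reaction is exactly her rejection cycle, which terminates with $\matching(w)$ serenading under $w$'s window (who then has $\matching(w)$ as top choice and keeps him), and after which every man involved is at his $\matching$-partner; in particular $w$'s partner never again falls into $B_w(\prefs{W})$, so $w$ never divorces again, and her divorce strategy is triggered exactly once.

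The hard part, and the step I expect to require the most care, is making the identification between ``seasons of the Gale-Shapley algorithm with divorces under $\prefs{W}'$'' and ``rejection cycles of the vanilla construction under $\prefs{W}$'' fully rigorous — in particular, verifying that when a divorcing woman $w$ is selected, the chain reaction she sets off under $\prefs{W}'$ reproduces precisely the rejection cycle $C$ attributed to $w$ in the construction of $\prefs{W}$, neither more nor less. This requires knowing, from the internals of the proof of \cref{inconspicuous} in the appendix, that the rejection cycles partition the ``extra'' rejections, that the seed of $w$'s cycle is the unique man she rejects when first triggered, and that the cycles can be sequenced so that the state before $w$'s cycle has every previously-handled woman already at her $\matching$-partner and every not-yet-triggered woman having rejected no-one. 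Granting those structural facts (which the proof outline asserts and the appendix proof supplies), parts~1--3 follow; I would also remark that, just as with the analogous blacklist bound, the freedom in the order of divorces corresponds to the timing-invariance of \citet{Dubins-Freedman}, so any selection rule at the end of a season yields the same final matching $\matching$.
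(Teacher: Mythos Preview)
Your approach has a genuine gap. You claim that when $w$ divorces at the start of her season, ``the single resulting chain reaction is exactly her rejection cycle, which terminates with $\matching(w)$ serenading under $w$'s window.'' This is not true. After $w$ divorces, she is the unique unmatched woman, and the chain reaction within that season stops as soon as \emph{any} man reaches her---since in your $\prefs{W}'$ she has no blacklist, she accepts the first arrival. In the vanilla (blacklist) run underlying \cref{inconspicuous}, the cycle trigger $\tilde{w}$ may be visited by many blacklisted men before $\matching(\tilde{w})$ arrives; each such intermediate visitor is immediately rejected because she blacklists him. In your divorce version, the first such visitor is accepted, the season ends with $\tilde{w}$ matched to a man in $B_{\tilde{w}}(\prefs{W})$, and she must divorce again. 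Concretely, take the tightness instance from \cref{inconspicuous-tight} with $n_b=1$ and $l_1=n-1$: there $w_0$ has a blacklist of size $n-1$, and under your construction she divorces $n-1$ times, all by herself---so the ``each by a distinct woman'' clause fails.

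The paper does not reduce to \cref{inconspicuous} at all. It proves a dedicated lemma: given a $(\matching'\!\rightarrow\!\matching)$-compatible $\prefs{M}$, one can choose a woman $\tilde{w}$ and preferences for $W$ so that the \emph{only} blacklist is $\{\matching'(\tilde{w})\}$ (a single man---equivalently, a single divorce) and the resulting run lands $\tilde{w}$ at $\matching(\tilde{w})$. The key idea is to pick $\tilde{w}$ whose rejection cycle (among all candidate triggers) has \emph{maximal length}, and then, if necessary, promote one man in the preference list of $\hat{w}\eqdef\matching'(\matching(\tilde{w}))$ so that $\matching(\tilde{w})$ gets rejected during the chain reaction and reaches $\tilde{w}$. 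Iterating this lemma season by season gives at most $n-1$ divorces, each by a distinct woman, with each divorce season ending at $\matching(\tilde{w})$. The maximal-length choice of $\tilde{w}$ and the surgical modification of $\hat{w}$'s list are precisely what your approach is missing.
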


\begin{thm}[Tightness of \cref{one-divorce-per-woman}]\label{divorce-tight}
Let $n\!\in\!\naturals$ and
let $W$ and $M$ be sets of women and men, respectively,
s.t.\ $|W|\!=\!|M|\!=\!n$.
There exist a profile $\prefs{M}$ of
preference lists for $M$, and an $M$-rational perfect matching $\matching$, s.t.\
for every profile $\prefs{W}'$
of preference lists and divorce strategies for $W$ in which all blacklists are
empty
s.t.\
the run of the Gale-Shapley algorithm with divorces,
according to $\prefs{W}'$ and $\prefs{M}$, yields $\matching$, at least
$n\!-\!1$ divorces occur in this run.
\end{thm}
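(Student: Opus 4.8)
The plan is to exhibit a single hard instance and argue that it forces at least $n-1$ divorces. I would take the men's preferences to be "flat and rotated": index $W=\{w_1,\dots,w_n\}$ and $M=\{m_1,\dots,m_n\}$, let $\matching(w_i)=m_i$, and let each man $m_i$ have the preference list $(w_{i-1},w_{i-2},\dots,w_1,w_n,w_{n-1},\dots,w_i)$ --- that is, $m_i$ ranks $w_{i-1}$ first (indices mod $n$), then walks backwards around the cycle, ending with his $\matching$-partner $w_i$ last. All men's blacklists are empty, so $\matching$ is trivially $M$-rational. The key structural fact about this instance is that on the first night of the vanilla Gale-Shapley run, every man $m_i$ serenades under the window of $w_{i-1}$, so each woman $w_{i-1}$ is serenaded by exactly one man, namely $m_i\ne\matching(w_{i-1})$; thus the vanilla run halts immediately with the matching $w_i\mapsto m_{i+1}$, which disagrees with $\matching$ on \emph{every} woman. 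More generally I would show (by the same rotational symmetry) that after any sequence of divorces, as long as no woman has yet divorced, the current stable matching is always a "rotation" $w_i\mapsto m_{i+k}$ for some fixed shift $k\ge 1$, so it never equals $\matching$.

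The core argument is then a potential/progress argument on the number of women matched to their $\matching$-partner. Let $\phi(s)$ be the number of $w$ with current partner $\matching(w)$ at the end of season $s$. I would prove two claims. First, $\phi(0)=0$: the vanilla run gives the rotation $k=1$, matching no one correctly. Second --- and this is the crux --- at the end of any season that did \emph{not} begin with a divorce, $\phi$ is unchanged, and at the end of a season that began with the divorce of some woman $w$, $\phi$ increases by at most... well, by at most the number of newly-correct women, which I must bound. The clean statement I would aim for is: in this instance, a season triggered by one divorce can make at most one additional woman's partner equal to her $\matching$-partner, \emph{or} it can make several correct but only by "using up" a proportional number of divorces downstream; the honest way to get the tight bound is to observe that between the all-correct terminal state ($\phi=n$) and the start ($\phi=0$) there must be at least $n-1$ seasons that each begin with a divorce, because (a) the run must end with $\phi=n$ since the outcome is $\matching$, (b) the first season contributes $\phi=0$, and (c) I will show each subsequent divorce-triggered season increases $\phi$ by at most $1$. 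Claim (c) is where I expect to spend the real work: I would argue that because the men's lists are a single rotation, the "rejection chain" set off by a divorcing woman $w$ propagates around the cycle and can settle at most one woman (namely whoever ends up holding $\matching$-partners) into a new correct position without disturbing others, using the fact that once $w_i$ holds $m_i$ she is $w$-optimal-stable there and the chain cannot dislodge her --- a counting argument on which men are "newly placed correctly" versus which women "newly lost" their correct partner.

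An alternative, and likely cleaner, route for claim (c): invoke \cref{divorce-equiv-blacklist}(2) to translate the given divorce profile $\prefs{W}'$ into an equivalent blacklist profile $\prefs{W}''$ with the same outcome $\matching$, where $w$'s blacklist is the transitive-closure set $B$ of men she rejects in favour of her divorced men; then invoke the tightness direction of \cref{inconspicuous-tight} --- specifically the fact that in a suitably chosen hard instance, \emph{any} blacklist profile forcing $\matching$ must have at least a certain number of nonempty women's blacklists --- together with the observation that each divorcing woman contributes at least one nonempty blacklist in $\prefs{W}''$ (since she divorces at least one man, who lies in her blacklist). Choosing the instance to be the $n_b=\lfloor n/2\rfloor$, all-$l_i=1$ case of \cref{inconspicuous-tight} would only give $\lfloor n/2\rfloor$, not $n-1$, so this shortcut does not suffice on its own; hence I would keep the rotational instance above and the direct chain-counting argument, and use the blacklist translation only as a sanity check. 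The main obstacle, then, is establishing the "at most one new correct woman per divorce" bound rigorously --- tracking the rejection chain through the rotated men's lists across seasons that start from arbitrary intermediate states, and ruling out the possibility that a single clever divorce cascades many women into their $\matching$-partners at once. I would handle this by induction on seasons, maintaining the invariant that before any given divorce the set of correctly-matched women forms a contiguous arc on the cycle and that a divorce extends this arc by exactly one position.
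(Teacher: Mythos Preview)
Your instance is the right one and coincides (up to reversing the cyclic orientation) with the one the paper uses. The first season does halt immediately with every woman mismatched. However, your potential function $\phi$ --- the number of women currently holding their $\matching$-partner --- does not behave as claim~(c) asserts. Indeed, if each divorce-triggered season raised $\phi$ by at most~$1$, then going from $\phi=0$ to $\phi=n$ would require at least $n$ divorces, contradicting \cref{one-divorce-per-woman}, which shows that $n-1$ suffice on this very instance. Concretely, for $n=3$ (indices mod~$3$, with $m_j$ ranking $w_{j+1},w_{j+2},w_j$), let $w_0$ divorce in seasons~$2$ and~$3$, give $w_1$ the preferences $m_1 \succ m_0 \succ m_2$ and $w_2$ the preferences $m_2 \succ m_1 \succ m_0$: then $\phi$ evolves as $0 \to 1 \to 3$, the final season raising $\phi$ by~$2$. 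For $n=4$ one can just as easily arrange $\phi$ to go $0 \to 0 \to 1 \to 4$, so $\phi$ can also stagnate. Your ``contiguous arc extended by exactly one'' invariant fails for the same reason.

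The paper instead counts \emph{rejections}, reusing verbatim the argument from the special case $n_b=1$, $l_1=n-1$ of \cref{inconspicuous-tight}. The key structural fact about the cyclic men's lists is that whenever \emph{any} man is rejected by $w_j$, his next proposal --- regardless of which man he is --- is to the cyclically adjacent woman. Hence in any season triggered by a divorce at $w_j$, the rejection chain visits all $n$ women exactly once in cyclic order, each rejecting exactly one man, terminating only when the last rejectee returns to the now-unpartnered $w_j$: exactly $n$ rejections per divorce-season. (The chain cannot stop short: every woman other than $w_j$ holds a partner, and no $w_k$ may ever reject $m_k$ in a run that is to yield~$\matching$.) The first season contributes zero rejections; reaching $\matching$ requires each man to exhaust his list and thus be rejected $n-1$ times, for $n(n-1)$ rejections in all. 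Dividing by $n$ forces exactly $n-1$ divorce-seasons. This rejection count --- not $\phi$ --- is the quantity you should track.
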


Similarly to
\cref{inconspicuous-not-all-matched,tightness-not-all-matched}, it is possible to show that in the case of partial matchings, no divorces (nor blacklists) are required by any woman who is not blacklisted
by at least one unmatched man.
While any unmatched woman, in the context of \cref{inconspicuous-not-all-matched}, may
be required to blacklist as many as all men, in many cases when divorces are allowed,
such blacklists of all men can each be replaced by a single divorce.

\section*{Acknowledgements}
This work was supported in part by an ISF grant, by the Google Interuniversity
Center for Electronic Markets and Auctions, and by
the European Research Council under the European Community's Seventh Framework
Programme (FP7/2007-2013) / ERC grant agreement no.\ [249159].
The author would like to thank Sergiu Hart and Noam Nisan, his Ph.D.\ advisors,
for useful discussions and comments; Assaf Romm for providing him with useful 
references to the literature; and Jacob Leshno for
suggesting the implication in goods allocation.

\bibliographystyle{abbrvnat}
\bibliography{blacklists}

\clearpage
\appendix
\section{Proofs}\label{proofs}

\subsection{Proof of Theorem~\refintitle{inconspicuous}}

Let $W$, $M$ and $n$ be as in \cref{inconspicuous}.
We begin by defining some notation.

\begin{defn}
Let $\prefs{W}$ and $\prefs{M}$ be
profiles of preferences lists for $W$ and for $M$, respectively.
\begin{itemize}
\item
We denote by $\fullrun$ the run
of the Gale-Shapley algorithm according to $\prefs{W}$ and $\prefs{M}$,
and denote the matching that $\fullrun$ yields (i.e.\ the $M$-optimal
stable matching for $\prefs{W}$ and $\prefs{M}$) by $\menopt(\prefs{W},\prefs{M})$.
\item
Let $\matching'$ be a matching.
We denote by $\run$ the run
of the Gale-Shapley algorithm according to $\prefs{W}$ and $\prefs{M}$,
starting with $\matching'$ as the initial state, and denote the
matching that $\run$ yields by $\resultmatching$.
\end{itemize}
\end{defn}

At the heart of our proof of \cref{inconspicuous}
lies a combinatorial structure that we call
a $\emph{cycle}$, and which we now define.

\begin{defn}\label{cycle}
\leavevmode
\begin{parts}
\item
$\generalcycle$, for $d \in \mathbb{N}$, is called a \emph{cycle}
if for every $i < d$, $w_i \in W$ and $m_i \in M$, and if $w_d=w_1$.
\item
We say that a cycle $C\eqdef\generalcycle$ is \emph{simple} if $w_1,\ldots,w_{d-1}$, $m_1,\ldots,m_{d-1}$ are all distinct,
i.e.\ if the only participant appearing in $C$ more than once
is $w_1$, appearing both as $w_1$ and as $w_d$.
\item
We say that two cycles $C,\tilde{C}$ are
are \emph{disjoint} if no participant appears in both.
\item
We say that a cycle $\tilde{C}$ is a \emph{cyclic shift} of a cycle $C\eqdef\generalcycle$
if there exists $1 \le \ell \le d$ s.t.\
$\tilde{C}=(w_{\ell} \xrightarrow{m_{\ell}} w_{\ell+1} \cdots
\xrightarrow{m_{d-1}} w_d=w_1 \xrightarrow{m_1} w_2 \xrightarrow{m_2}
\cdots w_{\ell})$.
\end{parts}
\end{defn}

Cycles can be used to naturally describe the dynamics following
a rejection in the Gale-Shapley algorithm, as we now show.

\begin{defn}\label{cycle-generating}
Let $\matching'$ be a matching.
We say that profiles $\prefs{W}$ and $\prefs{M}$
of preferences lists for $W$ and $M$, respectively, are \emph{$\matching'$-cycle generating}
if all of the following hold.
\begin{parts}
\item
For every $m \in M$, there exists a woman $w_m$ s.t.\
all of the following hold.
\begin{parts}
\item
$m$ is $w_m$'s top choice.
\item
$m$ does not blacklist $w_m$.
\item
$m$ weakly prefers $\matching'(m)$ over $w_m$.
\end{parts}
\item
There exists a unique woman $\tilde{w} \in W$ s.t.\ $\tilde{w}$ blacklists
$\matching'(\tilde{w})$. We call $\tilde{w}$ the \emph{cycle trigger}.
\end{parts}
\end{defn}

\begin{claim}\label{all-matched}
Let $\matching'$ be a matching.
For every $\matching'$-cycle-generating profiles
$\prefs{W}$ and $\prefs{M}$ of preferences lists for $W$ and $M$, respectively,
all of $W$ and $M$ are matched according to $\resultmatching$.
\end{claim}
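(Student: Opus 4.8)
The plan is to show that every man is matched in $\resultmatching$; since $|W|=|M|=n$ and $\resultmatching$ is one-to-one, it then follows that every woman is matched as well, which is exactly the assertion of the lemma. So it suffices to rule out an unmatched man.

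The one fact about the run that I would establish and then exploit is this: for each $m\in M$, the woman $w_m$ supplied by \cref{cycle-generating} never rejects $m$ during $\run$. Indeed, on any night on which $m$ serenades under $w_m$'s window, $w_m$ prefers $m$ --- who is her top choice --- over every other man serenading her, and so she keeps him. Since, moreover, in the Gale-Shapley algorithm a man's serenading only ever moves down his preference list, it follows that once $m$ serenades $w_m$ he remains matched (to $w_m$) through the remainder of $\run$, and in particular $m$ is matched in $\resultmatching$.

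Next I would verify that each man $m$ does serenade $w_m$ at some point, unless he is matched before reaching her. Here I would use that $\run$ starts from the initial state $\matching'$: each man $m$ first serenades $\matching'(m)$ --- this is well-defined, as $\matching'(m)$ must lie on $m$'s preference list for the comparison in \cref{cycle-generating} to be meaningful, so $\matching'$ is $M$-rational --- and thereafter $m$ only descends his list. By \cref{cycle-generating}, $w_m$ lies on $m$'s list weakly below $\matching'(m)$, since $m$ weakly prefers $\matching'(m)$ to $w_m$. Hence, were $m$ unmatched at the end of $\run$, he would have been rejected in turn by every woman on his list from $\matching'(m)$ down to its bottom, in particular by $w_m$ --- contradicting the previous paragraph. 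Therefore no man is unmatched in $\resultmatching$, completing the proof.

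I do not anticipate a genuine obstacle; the one thing to be careful about is the semantics of ``starting with $\matching'$ as the initial state'', which must be pinned down enough to ensure that each man serenades $\matching'(m)$ first and only descends from there. It is worth noting that the cycle trigger $\tilde w$ --- the unique woman blacklisting her $\matching'$-partner --- plays no role in this lemma: her blacklist governs who is matched with whom, not whether everyone is matched.
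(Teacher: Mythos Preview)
Your proof is correct and follows essentially the same approach as the paper: reduce to showing every man is matched via $|W|=|M|$, then argue by contradiction that an unmatched man $m$ would have serenaded $w_m$ (since $w_m$ is on his list, weakly below $\matching'(m)$, and not blacklisted by him) and been rejected by her, contradicting that $m$ is $w_m$'s top choice. Your version is slightly more explicit about why $m$ actually reaches $w_m$, but the argument is identical in substance.
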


\begin{proof}
As $|W|=|M|$, it is enough to show that every $m \in M$ is matched according to
$\resultmatching$.
Assume for contradiction that $m \in M$ is unmatched according to
$\resultmatching$.
As $w_m$ is not blacklisted by $m$, and as $m$ weakly prefers $\matching'(m)$
over $w_m$, we have, by $m$ being unmatched in $\resultmatching$, that $m$
serenades under $w_m$'s window during $\run$,
and that $w_m$ rejects $w$ during this run; but $m$ is $w_m$'s top choice
--- a contradiction.
\end{proof}

\begin{defn}\label{reject-cycle}
Let $\matching'$ be a matching.
Let $\prefs{W}$ and $\prefs{M}$ be $\matching'$-cycle-generating profiles
of preferences lists for $W$ and $M$, respectively,
with cycle trigger $\tilde{w}$.
We define the \emph{rejection cycle $(\prefs{W},\prefs{M})$-generated from $\matching'$},
denoted by $\rejectcycle$, as
$\generalcycle$, where $d$, $(w_i)_{i=1}^d$ and $(m_i)_{i=1}^d$ are defined
by following $\run$:
\begin{enumerate}
\item
$w_1 = \tilde{w}$, $m_1 = \matching'(\tilde{w})$. ($w_1$ rejects $m_1$ on the first night.)
\item
On the nights following the rejection of $m_i$ by $w_i$, $m_i$ serenades under
the windows of the women following $w_i$ in his preferences list. Denote
the first of these women to provisionally accept him by $w_{i+1}$.
If $w_{i+1}$ does not reject any man on the night on which she first provisionally
accepted $m_i$, then denote $d=i+1$ and the algorithm stops.
Otherwise, denote the unique
man rejected by $w_{i+1}$ on that night by $m_{i+1}$.
\end{enumerate}
\end{defn}

\begin{claim}
Under the conditions of \cref{reject-cycle},
$\rejectcycle$ is a well-defined cycle.
\end{claim}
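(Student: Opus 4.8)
The plan is to verify, by induction on the index $i$, that the process described in \cref{reject-cycle} is well-defined at every step, terminates with some finite $d$, and produces a sequence that satisfies the definition of a cycle in \cref{cycle}. First I would establish that at each step $i \ge 1$ the object $m_i$ handed to step~2 satisfies the invariant that $m_i$ does not blacklist $w_{i+1}$ and is provisionally accepted by her: this is exactly what it means for $w_{i+1}$ to be the first woman after $w_i$ in $m_i$'s preference list to provisionally accept him. Such a $w_{i+1}$ exists because \crefpart{all-matched}{} (via the cycle-generating hypothesis, using the woman $w_{m_i}$ whose top choice is $m_i$ and who does not blacklist him, and whom $m_i$ weakly prefers $\matching'(m_i)$ over) guarantees that $m_i$ is matched in $\resultmatching$, so he cannot be rejected by everyone below $w_i$; in fact the run passes through the exact sequence of serenades described, since the Gale-Shapley dynamics are deterministic once we fix which rejection we are tracing. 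Hence $w_{i+1} \in W$ and, if the night on which she first accepts $m_i$ involves a rejection, $m_{i+1} \in M$ is well-defined as the unique such rejected man (unique because on any night a woman rejects all but at most one serenader and here she keeps $m_i$, so at most one man is displaced at the moment $m_i$ arrives — though I should be slightly careful and argue that it is precisely the man she was previously holding, if any, so exactly one rejection unless she was previously unmatched).

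Next I would argue termination, i.e.\ that the index $d$ is reached after finitely many steps. This follows because the whole construction is just a trace through the finite run $\run$, which halts by \cref{men-optimal} (applied with $\matching'$ as the initial state — the Gale-Shapley algorithm started from any state still converges); each step $i \to i+1$ consumes at least one rejection event of the run, and there are only finitely many, so the process cannot go on forever without hitting the stopping condition in step~2 where $w_{i+1}$ rejects nobody.

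Then I would check the defining properties of a cycle from \crefpart{cycle}{}: for every $i < d$ we have shown $w_i \in W$ and $m_i \in M$, so it remains only to verify $w_d = w_1$. For this I would use the role of the cycle trigger $\tilde w$: initially $w_1 = \tilde w$ rejects $m_1 = \matching'(\tilde w)$, so at the moment step~2 first runs, $\tilde w$ holds no man strictly worse than $m_1$ and in particular is "available" to whoever serenades her next; by \crefpart{all-matched}{}, $\tilde w$ is matched in $\resultmatching$, so some man ends up serenading under her window and being accepted. The key observation is that the first man to make $\tilde w$ provisionally accept him after she rejects $m_1$ must be some $m_{d-1}$ in our traced sequence, and when she accepts him she rejects nobody (she was holding no-one), so the stopping condition triggers with $w_d = \tilde w = w_1$. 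Making this last point airtight — that the serenade which re-fills $\tilde w$'s window is the $(d-1)$-st edge of the traced chain rather than some parallel chain of rejections — is the step I expect to be the main obstacle; I would handle it by observing that the run $\run$, viewed as a sequence of rejections, decomposes into the single chain we are following (each rejection in it is "caused" by the previous acceptance, starting from $\tilde w$'s rejection of $m_1$), since $(\prefs{W},\prefs{M})$ being $\matching'$-cycle-generating means no woman other than $\tilde w$ ever rejects a man on the first night, and on later nights a woman rejects only when a new (better) man arrives — so every rejection after the first is the continuation of the unique chain emanating from $\tilde w$. This guarantees the chain closes up precisely at $\tilde w$.
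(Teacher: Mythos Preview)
Your approach is essentially the same as the paper's: both arguments hinge on the observation that after the first night exactly one man is ``loose'' at any time, so rejections form a single chain that can only close at $\tilde w$. The paper packages this as a night-by-night invariant (each night is either one-to-one, or has $\tilde w$ empty and exactly one woman with two serenaders), which simultaneously dispatches the uniqueness-of-$m_{i+1}$ point you flag as needing care and the closure $w_d=\tilde w$.
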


\begin{proof}
We show by induction that on every night of the algorithm, one of the following
holds: (The claim follows directly from this proof.)
\begin{types}
\item\label{reject-cycle-night-one-to-one}
Each man serenades under a distinct woman's window. In this case, either
the algorithm stops or only $\tilde{w}$ rejects the man serenading under her
window.
\item\label{reject-cycle-night-empty-tilde-w}
No man serenades under $\tilde{w}$'s window, two men serenade under the
window of some other woman $w$, and each other woman is serenaded-to by
one man. In this case, one of the men serenading under $w$'s window is the
only man rejected on this night.
\end{types}

Base: By definition, on the first night each man $m \in M$ serenades under
$\matching'(m)$'s window, and all of these women are distinct.
By definition of $\tilde{w}$, indeed the only rejection on the first night
is of $m_1$ by $w_1$. Therefore, the first night is of \cref{reject-cycle-night-one-to-one}.

Step: Assume that the algorithm does not stop at the end of the $i$'th night.
Regardless of the type of the $i$'th night, by the induction hypothesis
exactly one man $m$ is rejected
during it, while $n-1$ men are provisionally accepted during it, each by a distinct
woman from $W \setminus \{\tilde{w}\}$.
By \cref{all-matched}, on the $i+1$'th night $m$ serenades under some
woman $w$'s window. If $w=\tilde{w}$, we show that the $i+1$'th
night is of \cref{reject-cycle-night-one-to-one}. Indeed, in this case each woman is serenaded-to
by exactly one man. Furthermore, each women $w' \in W \setminus \{\tilde{w}\}$ is
serenaded-to by the man she provisionally accepts on the $i$'th night;
therefore, this man is not in $w'$'s blacklist, and being the only
man serenading under $w'$'s window on the $i+1$'th night, he is not
rejected on this night. Thus, no man but $m$ is rejected the $i+1$'th night.
If $m$ is not rejected on this night, then the algorithm stops; otherwise,
$m$ is rejected by the woman under whose window he serenades, that is,
by $\tilde{w}$.

Otherwise, $w \ne \tilde{w}$ and in this case we show that
the $i+1$'th night is of \cref{reject-cycle-night-empty-tilde-w}. Indeed, in this case, on the $i+1$'th night no one serenades
under $\tilde{w}$'s window, two men serenades under $w$'s window ($m$
and the man $w$ provisionally accepts on the $i$'th night, denoted henceforth
as $m'$), and each
woman $W \setminus \{\tilde{w},w\}$ is serenaded-to by exactly one man.
Similarly to the previous case, each women in $W \setminus \{w, \tilde{w}\}$ is 
serenaded-to by the man she provisionally accepts on the $i$'th night;
therefore, this man is not in this woman's blacklist, and being the only
man serenading under her window on the $i+1$'th night, he is not
rejected on this night.
Thus, no woman but $w$ rejects any man during this night.
As $m'$ is provisionally accepted by $w$ on the $i$'th night, he is not
blacklisted by her, and thus she does not reject both $m$ and $m'$
on the $i+1$'th night. Thus, she rejects exactly one of them on this night
and the proof is complete.
\end{proof}

\begin{claim}\label{changed-if-reject-cycle}
Under the conditions of \cref{reject-cycle},
the participants for whom $\resultmatching(p) \ne \matching'(p)$ are exactly
the participants in $\rejectcycle$.
\end{claim}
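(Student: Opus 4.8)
The plan is to follow the run $\run$ night by night, reading off how the provisional matching evolves relative to $\matching'$, and to rely throughout on the night-by-night analysis already carried out in the proof that $\rejectcycle$ is a well-defined cycle. By the base case of that analysis, on the first night every man $m$ serenades $\matching'(m)$'s window and the only rejection is that of $m_1 = \matching'(\tilde{w})$ by $\tilde{w} = w_1$; by the dichotomy between \cref{reject-cycle-night-one-to-one} and \cref{reject-cycle-night-empty-tilde-w} established there, every subsequent night of the run carries at most one rejection. Hence at each moment of the run there is at most one \emph{floating} man (a man just rejected, who re-serenades on the next night), and by the very construction of $\rejectcycle$ the successive floating men are exactly $m_1, m_2, \ldots, m_{d-1}$: each $m_i$ starts to float when $w_i$ rejects him (on the first night for $i = 1$, and on the night $w_i$ provisionally accepts $m_{i-1}$ for $i \ge 2$), and stops floating when $w_{i+1}$ provisionally accepts him, after which either the algorithm halts (this happens precisely when $w_{i+1} = w_d = w_1$) or $m_{i+1}$ begins to float.

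I would first prove the inclusion ``$\subseteq$'': every participant $p$ with $\resultmatching(p) \ne \matching'(p)$ appears in $\rejectcycle$. A man leaves his $\matching'$-partner only after being rejected by her, hence only after becoming floating at some point, so every man who is ever matched differently than in $\matching'$ (in particular in $\resultmatching$) is one of $m_1, \ldots, m_{d-1}$. A woman $w$ is matched differently than in $\matching'$, at any moment, only if $w = \tilde{w}$ (who is left unmatched after the first night) or if at some night $w$ provisionally accepts a floating man over her current partner; the women who ever accept a floating man are, by the definition of $\rejectcycle$, exactly $w_2, \ldots, w_{d-1}, w_d = w_1$. Hence every woman ever matched differently than in $\matching'$ is one of $w_1, \ldots, w_{d-1}$, and the inclusion follows.

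For the reverse inclusion ``$\supseteq$'' I would show that, in $\resultmatching$, every participant of $\rejectcycle$ is matched differently than in $\matching'$. Two elementary observations do the work: once a woman has rejected a man she never provisionally accepts him again (he never serenades her again); and the first woman to reject a given man $m$ during the run is $\matching'(m)$, since before his first rejection $m$ has only ever serenaded $\matching'(m)$'s window. For a man $m_i$: he is rejected by $\matching'(m_i)$, so $\matching'(m_i)$ never takes him back, and since by \cref{all-matched} $m_i$ is matched in $\resultmatching$, he is matched there with a woman other than $\matching'(m_i)$. For a woman $w$ appearing in $\rejectcycle$: there is a night of the run on which $w$ rejects $\matching'(w)$ --- the first night if $w = \tilde{w}$, and otherwise the first night on which $w$ accepts a floating man, at which moment her current partner is still $\matching'(w)$ because she has not yet been touched; either way $w$ never takes $\matching'(w)$ back, and since by \cref{all-matched} $w$ is matched in $\resultmatching$, she is matched there with a man other than $\matching'(w)$.

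The one delicate point is that $\rejectcycle$ need not be simple, so a participant may occur in it several times; for this reason every step above is phrased in terms of a participant's \emph{first} rejection (or \emph{first} acceptance of a floating man) rather than in terms of a particular index of the cycle, which keeps the bookkeeping valid under repetitions. One must also keep in mind that a woman who merely rejects the floating man without accepting him is \emph{not} ``touched'' --- her partner is unchanged --- so that such a woman, who need not (and, being absent from $\rejectcycle$, must not) be counted, is correctly excluded; this is precisely what separates the women that reject someone during the run from the women that appear in $\rejectcycle$.
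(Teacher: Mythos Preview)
Your argument is correct and follows essentially the same approach as the paper's proof, which is a one-liner appealing to the night-by-night analysis in the proof that $\rejectcycle$ is well-defined. Your version is considerably more thorough: in particular, your final paragraph handles two points that the paper's terse sentence leaves implicit --- the possible non-simplicity of $\rejectcycle$, and the distinction between a woman who \emph{rejects the floating man} (and thus rejects someone during the run, yet has her partner unchanged) and a woman who \emph{accepts the floating man} (and thus actually appears in $\rejectcycle$). These are exactly the details needed to make the equivalence ``match changes $\Leftrightarrow$ appears in $\rejectcycle$'' airtight, so your write-up is a faithful and more careful expansion of the paper's intended argument.
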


\begin{proof}
By \cref{reject-cycle}, the participants in $\resultmatching(p)$ are exactly the participants who reject
or are rejected during $\run$.
\end{proof}

We now move on to define, given two matchings, a cycle associated with both.
While this definition is syntactic, and \emph{a priori} not related to any run,
in \cref{build-base,build-step} we show its relation to \cref{reject-cycle}.

\begin{defn}\label{abstract-cycle}
Let $\matching'$ and $\matching$ be two matchings and let $w \in W$.
We define a sequence $S_{\matching'}^{\matching}(w)\eqdef\bigl((w_i,m_i)\bigr)_{i=1}^{\infty} \in (W \times M)^{\naturals}$ as follows:
\begin{enumerate}
\item
$w_1 = w$.
\item
$m_i = \matching'(w_i)$.
\item
$w_{i+1} = \matching(m_i)$.
\end{enumerate}
Let $d>1$ be minimal s.t.\ $w_d = w_1$. If $d>2$, then
we define the \emph{$(\matching'\rightarrow\matching)$-cycle} of $w$,
denoted by $\abstractcycle{w}$, as $\generalcycle$. Otherwise
(i.e.\ if $\matching(w)=\matching'(w)$),
we define $\abstractcycle{w}=(w)$.
\end{defn}

\begin{claim}\label{abstract-cycle-properties}
\leavevmode
\begin{parts}
\item\label{abstract-cycle-properties-simple}
$\abstractcycle{w}$ is a well-defined simple cycle, for every $w \in W$.
\item\label{abstract-cycle-properties-disjoint-or-rotate}
For every $w,w' \in W$, $\abstractcycle{w}$ and
$\abstractcycle{w'}$ are either disjoint, or cyclic shifts of one another.
\item\label{abstract-cycle-properties-partition}
$W$ is partitioned into (the women from) $(\matching'\rightarrow\matching)$-cycles.
\end{parts}
\end{claim}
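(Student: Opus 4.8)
The plan is to verify the three parts more or less directly from \cref{abstract-cycle}, since the sequence $S_{\matching'}^{\matching}(w)$ is defined by iterating the composition of two bijections. First I would prove \crefpart{abstract-cycle-properties}{simple}: the map $w_i \mapsto w_{i+1}$ equals $\matching \circ (\matching')^{-1}$ restricted to $W$ (using the abuse of notation $\matching(m)$ for $\matching^{-1}(m)$), wait --- more precisely $w_{i+1}=\matching^{-1}(\matching'(w_i))$, but since both $\matching$ and $\matching'$ are perfect matchings on the finite set $W$ (we are in the setting of \cref{inconspicuous} where all matchings considered are perfect, or at least $\matching'$ and $\matching$ agree off the relevant cycle), this is a well-defined permutation $\sigma$ of $W$. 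Then $w_i=\sigma^{i-1}(w)$, so the orbit of $w$ under $\sigma$ is finite and returns to $w$; the minimal $d>1$ with $w_d=w_1$ is exactly one more than the length of the $\sigma$-orbit of $w$. Distinctness of $w_1,\dots,w_{d-1}$ is then immediate (an orbit visits each element once), and distinctness of $m_1,\dots,m_{d-1}$ follows because $m_i=\matching'(w_i)$ and $\matching'$ is injective. Hence $\abstractcycle{w}$ is a simple cycle (and when $d=2$ we get the degenerate cycle $(w)$, consistent with the definition). This handles \crefpart{abstract-cycle-properties}{simple}.

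For \crefpart{abstract-cycle-properties}{disjoint-or-rotate}, I would argue at the level of the permutation $\sigma$: two elements $w,w'\in W$ either lie in the same $\sigma$-orbit or in disjoint orbits. If they lie in disjoint orbits, then the women appearing in $\abstractcycle{w}$ (the orbit of $w$) and those in $\abstractcycle{w'}$ (the orbit of $w'$) are disjoint, and since the men appearing are their $\matching'$-images, those are disjoint too; so the cycles are disjoint. If $w'$ lies in the orbit of $w$, say $w'=\sigma^{\ell-1}(w)$, then reading off $S_{\matching'}^{\matching}(w')$ gives exactly the cyclic shift of $\abstractcycle{w}$ starting at index $\ell$, which is precisely the notion of cyclic shift in \cref{cycle}\labelcref{cycle}; one should also check the edge case where one of the two cycles is degenerate (this happens iff $\matching(w)=\matching'(w)$, i.e.\ the orbit is a fixed point, and then disjointness is the relevant alternative unless $w'$ is that same fixed point).

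Finally, \crefpart{abstract-cycle-properties}{partition} is just the statement that the $\sigma$-orbits partition $W$, together with \crefpart{abstract-cycle-properties}{disjoint-or-rotate} to say that the cycle through a woman is well-defined up to cyclic shift, so ``the women from the $(\matching'\rightarrow\matching)$-cycles'' is an honest partition. I do not anticipate a genuine obstacle here; the only thing to be careful about is the bookkeeping around the degenerate case $\matching(w)=\matching'(w)$ (where the ``cycle'' is the single-woman sequence $(w)$ and contributes a singleton block to the partition), and the matching-of-indices needed to recognize a shifted sequence as a cyclic shift in the exact syntactic sense of \cref{cycle}. If the ambient matchings are not assumed perfect, one extra remark is needed: $\matching$ and $\matching'$ must be defined on all of $W$ for $\sigma$ to be a permutation of $W$, which holds in the contexts where this claim is invoked; alternatively one restricts attention to the common domain. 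Everything else is routine orbit-counting for a permutation.
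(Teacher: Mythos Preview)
Your proposal is correct and follows essentially the same approach as the paper: both arguments boil down to observing that $w_i\mapsto w_{i+1}$ is a bijection on $W$, so that the sequence $S_{\matching'}^{\matching}(w)$ traces out the orbit of $w$ under a permutation. The paper phrases this slightly more implicitly (``$w_i$ uniquely determines $m_i$ and $w_{i-1}$; $m_i$ uniquely determines $w_i$ and $w_{i+1}$'') rather than naming $\sigma$ explicitly, but the content is identical---indeed, the paper makes the permutation-cycle viewpoint explicit in the remark immediately following the proof.
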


\begin{proof}
Let $w \in W$ and denote 
$\bigl((w_n,m_n)\bigr)_{i=1}^{\infty}\eqdef S_{\matching'}^{\matching}(w)$.
By definitions of $\matching$ and $\matching'$, for each $i$, $w_i$
uniquely determines $m_i$ and (if $i \ne 1$) $w_{i-1}$; similarly,
$m_i$ uniquely determines $w_i$ and $w_{i+1}$.
Thus, $S_{\matching'}^{\matching}(w)$ is periodic, and a single period of
this sequence contains no participant more than once; therefore,
\cref{abstract-cycle-properties-simple} follows.
We furthermore conclude that for every $w,w' \in W$, $S_{\matching'}^{\matching}(w)$ and $S_{\matching'}^{\matching}(w')$
are either disjoint, or each a suffix of the other,
and thus \cref{abstract-cycle-properties-disjoint-or-rotate} follows.
Finally, \cref{abstract-cycle-properties-partition} is an immediate consequence of \cref{abstract-cycle-properties-disjoint-or-rotate}.
\end{proof}

\begin{remark}\label{cycles-perm}
There is is a natural isomorphisms between
$(\matching'\rightarrow\matching)$-cycles modulo cyclic shifts,
and cycles of the permutation $\matching \circ {\matching'}^{-1}$
(resp.\ of the permutation $\matching' \circ \matching^{-1}$), given by taking
all women (resp.\ all men; or $\matching(w)$, if $d=1$) of a $(\matching'\rightarrow\matching)$-cycle
in order.
\end{remark}

Our proof strategy for \cref{inconspicuous}, as
seen below, is to iteratively modify the preferences of $W$, so that the
$M$-optimal matching they yield becomes ``closer and closer'', in a sense,
to the target matching $\matching$. We now define the
property required of these intermediate $M$-optimal matchings $\matching'$
w.r.t.\ the preferences of $M$, for this process to be able to continue
until $\matching$ is achieved.

\begin{defn}
Let $\matching'$ and $\matching$ be matchings.
We say that a profile of preference lists for $M$ is
\emph{$(\matching'\rightarrow\matching)$-compatible} if $m\in M$
weakly prefers $\matching'(m)$ over $\matching(m)$, but does not blacklist
$\matching(m)$.
\end{defn}

\begin{claim}\label{still-compatible}
Let $\matching'$ and $\matching$ be matchings and let $\prefs{M}$ be an
$(\matching'\rightarrow\matching)$-compatible profile of preference
lists for $M$. For every profile $\prefs{W}$ of preference lists for
$W$ according to which each woman $w$'s top choice is $\matching(w)$,
all of $W$ and $M$ are matched according to $\resultmatching$, and
$\prefs{M}$ is also
$(\resultmatching\rightarrow\matching)$-compatible.
\end{claim}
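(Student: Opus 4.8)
The plan is to run the Gale–Shapley algorithm starting from the empty state with the profiles $\prefs{W}$ and $\prefs{M}$, where $\prefs{M}$ is $(\matching'\rightarrow\matching)$-compatible and every $w$'s top choice under $\prefs{W}$ is $\matching(w)$, and to analyze the matching $\resultmatching$ it produces. We must establish two things: that $\resultmatching$ matches all of $W$ and all of $M$, and that $\prefs{M}$ remains $(\resultmatching\rightarrow\matching)$-compatible, i.e.\ that every $m$ weakly prefers $\resultmatching(m)$ over $\matching(m)$ while not blacklisting $\matching(m)$. The non-blacklisting half is immediate since it is a hypothesis on $\prefs{M}$ that does not involve $\matching'$ or $\resultmatching$; the real content is the two preference/matching claims.

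\textbf{Matching everyone.} First I would show every man is matched in $\resultmatching$. Suppose $m$ is unmatched. Since $\matching(m)$'s top choice under $\prefs{W}$ is $m$, if $m$ ever serenaded under $\matching(m)$'s window he would be provisionally accepted and never rejected, hence matched; so $m$ never reaches $\matching(m)$ in the run. But $m$ does not blacklist $\matching(m)$ (compatibility of $\prefs{M}$), so the only way $m$ never serenades $\matching(m)$ is that the run halts while $m$ still has $\matching(m)$, or some earlier woman, as his current top non-rejecting choice — contradicting that $m$ is unmatched at the end. More carefully: an unmatched man has been rejected by every woman he is willing to approach, in particular by $\matching(m)$ (since he weakly prefers $\matching'(m)$ over $\matching(m)$, $\matching(m)$ is not worse than acceptable and not blacklisted, so he would approach her), but $\matching(m)$ would never reject her top choice $m$ — contradiction. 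Hence every man, and therefore (since $|W|=|M|$) every woman, is matched in $\resultmatching$. This argument is essentially the one already used in the proof of \cref{all-matched}.

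\textbf{Preserving compatibility.} The heart of the lemma is showing $m$ weakly prefers $\resultmatching(m)$ over $\matching(m)$. The key observation is that in the Gale–Shapley algorithm each man's sequence of partners is weakly decreasing in his own preference, so it suffices to show $m$ is provisionally accepted by $\matching'(m)$ at some point during $\run$, or more precisely that $m$ never gets rejected by anyone he weakly prefers to $\matching(m)$ before reaching a partner he weakly prefers to $\matching(m)$ — then his final partner $\resultmatching(m)$ is weakly preferred to $\matching(m)$ as well. Since $\matching(m)$'s unique top choice is $m$, she only ever rejects men other than $m$; so $m$ is never rejected by $\matching(m)$, meaning the worst $m$ can do is to be matched with $\matching(m)$ (once he reaches her she keeps him) — unless the run halts before $m$ reaches her, in which case $m$'s final partner is someone he strictly prefers to $\matching(m)$. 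Combined with the fact that $m$ weakly prefers $\matching'(m)$ over $\matching(m)$ under $\prefs{M}$, in every case $\resultmatching(m)$ is weakly preferred by $m$ over $\matching(m)$. I expect the main obstacle to be making this ``worst case is $\matching(m)$'' argument airtight: one has to rule out that $m$ is bumped around and ends up matched with someone strictly \emph{worse} than $\matching(m)$, which is exactly prevented by the observation that $\matching(m)$ always accepts $m$ and $m$ would approach her before settling for anyone worse. Once this is in place, $\prefs{M}$ being $(\resultmatching\rightarrow\matching)$-compatible follows directly, completing the proof.
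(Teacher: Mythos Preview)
Your proposal is correct and follows essentially the same approach as the paper's proof: both argue that since $\matching(m)$'s top choice is $m$ and $m$ does not blacklist $\matching(m)$, the man $m$ can do no worse than $\matching(m)$ during the run, which simultaneously gives matchedness and $(\resultmatching\rightarrow\matching)$-compatibility. One small slip to fix: in your setup you say the algorithm starts ``from the empty state,'' but by definition $\resultmatching$ is produced by the run $\run$ that \emph{starts from $\matching'$}; your later remarks (e.g.\ about $m$ being provisionally accepted by $\matching'(m)$) show you actually have the right picture, and the argument is unaffected since $m$ weakly prefers $\matching'(m)$ over $\matching(m)$ either way.
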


\begin{proof}
The proof is similar to that of \cref{all-matched}.
Let $m \in M$. We need only show that $m$ is matched in $\resultmatching(m)$
and weakly prefers $\resultmatching(m)$
over $\matching(m)$. Since $m$ weakly prefers $\matching'(m)$ over $\matching(m)$,
whom he does not blacklist, we have that $m$ would serenade under $\matching(m)$'s window during $\run$ before serenading under the window of any woman he
prefers less, and before becoming unmatched. As
$\matching(m)$'s top choice is $m$, she would never reject him, and the proof
is complete.
\end{proof}

We now begin the journey toward developing \cref{build-cor}, which is the main
technical tool underlying our proof of
\cref{inconspicuous}. \cref{build-cor} is proven
inductively, by continuously tweaking the preferences of $W$ until
they meet a certain property. \cref{pm-compatible} defines the
induction-invariant properties of the preferences of $W$ during this process,
while \cref{build-base,build-step} define the induction base and step of this process,
respectively.

\begin{defn}\label{pm-compatible}
Let $\matching'$ and $\matching$ be matchings and let $\prefs{M}$ be an
$(\matching'\rightarrow\matching)$-compatible profile of preference
lists for $M$. We say that a profile $\prefs{W}$ of preference lists for $W$
is \emph{$(\matching'\xrightarrow{\prefs{M}}\matching)$-compatible} if all the
following hold.
\begin{parts}
\item\label{pm-compatible-cycle-generating}
$\prefs{W}$ and $\prefs{M}$ are $\matching'$-cycle generating.
\item\label{pm-compatible-idol}
Each woman $w$'s top choice is $\matching(w)$.
\item\label{pm-compatible-trigger}
The cycle trigger for $\rejectcycle$ blacklists (at least)
every $m \in M$ s.t.\ $\resultmatching(m)\ne\matching(m)$.
\item\label{pm-compatible-non-trigger}
For every woman $w$ other than the cycle trigger, every $m\in M$ that $w$ prefers over $\matching'(w)$,
except perhaps for $\matching(w)$,
satisfies $\resultmatching(m)=\matching(m)$.
\item\label{pm-compatible-stay-or-opt}
For every $w \in W$, $\resultmatching(w) \in \bigl\{\matching'(w),
\matching(w) \bigr\}$.
\end{parts}
\end{defn}

\begin{claim}\label{pm-compatible-properties}
Under the conditions of \cref{pm-compatible}, all of the following hold.
\begin{parts}
\item\label{pm-compatible-properties-only-harbor}
Let $m \in M$ s.t.\ $\resultmatching(m)\ne\matching(m)$. The only $w \in W$
who prefers $m$ over $\matching'(w)$ is $\matching(m)$.
\item\label{pm-compatible-properties-stay-or-opt}
For every $p \in W \cup M$, $\resultmatching(p) \in \bigl\{\matching'(p),
\matching(p) \bigr\}$.
\item\label{pm-compatible-properties-opt-if-reject-cycle}
$\resultmatching(p)=\matching(p)\ne\matching'(p)$ for every participant $p$ in $\rejectcycle$.
\item\label{pm-compatible-properties-disjoint-union}
$\bigl\{p \in W \cup M \mid \resultmatching(p)=\matching(p)\bigr\}$ is a
(disjoint) union of
$(\matching'\rightarrow\matching)$-cycles.
\end{parts}
\end{claim}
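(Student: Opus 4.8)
The claim collects four structural consequences of the invariant in \cref{pm-compatible}. The plan is to prove them roughly in the order \labelcref{pm-compatible-properties-stay-or-opt}, \labelcref{pm-compatible-properties-opt-if-reject-cycle}, \labelcref{pm-compatible-properties-only-harbor}, \labelcref{pm-compatible-properties-disjoint-union}, leaning throughout on \cref{changed-if-reject-cycle} (which identifies the participants moved by $\run$ as exactly those in $\rejectcycle$) and on \cref{all-matched}/\cref{still-compatible} (which guarantee everyone is matched in $\resultmatching$).

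For \labelcref{pm-compatible-properties-stay-or-opt}: \crefpart{pm-compatible}{stay-or-opt} already asserts $\resultmatching(w)\in\{\matching'(w),\matching(w)\}$ for women, so only the men need attention. Let $m\in M$ and set $w=\resultmatching(m)$, so $w$ provisionally accepts $m$ at the end of $\run$. If $w=\matching(m)$ we are done; otherwise, by the women's side $\resultmatching(w)\in\{\matching'(w),\matching(w)\}$, and since $\resultmatching(w)=m\neq\matching(w)$ we must have $m=\matching'(w)$, i.e.\ $\resultmatching(m)=\matching'(m)$. That closes \labelcref{pm-compatible-properties-stay-or-opt}.

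For \labelcref{pm-compatible-properties-opt-if-reject-cycle}: by \cref{changed-if-reject-cycle} every participant $p$ in $\rejectcycle$ satisfies $\resultmatching(p)\neq\matching'(p)$; combining with part \labelcref{pm-compatible-properties-stay-or-opt} forces $\resultmatching(p)=\matching(p)$, and this is $\neq\matching'(p)$, exactly as claimed. For \labelcref{pm-compatible-properties-only-harbor}: take $m$ with $\resultmatching(m)\neq\matching(m)$; by \labelcref{pm-compatible-properties-stay-or-opt}, $\resultmatching(m)=\matching'(m)\neq\matching(m)$, so $m$ is \emph{not} in $\rejectcycle$ (again by \cref{changed-if-reject-cycle}). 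Suppose some $w\neq\matching(m)$ prefers $m$ over $\matching'(w)$. If $w$ is the cycle trigger, then \crefpart{pm-compatible}{trigger} would make $w$ blacklist $m$ (since $\resultmatching(m)\neq\matching(m)$), contradicting that she prefers him. If $w$ is not the trigger, then \crefpart{pm-compatible}{non-trigger} forces $\resultmatching(m)=\matching(m)$ (the exception $m=\matching(w)$ is excluded since $w\neq\matching(m)$), again a contradiction. Hence $\matching(m)$ is the only such woman.

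Finally, for \labelcref{pm-compatible-properties-disjoint-union}: let $D=\{p\in W\cup M\mid\resultmatching(p)=\matching(p)\}$. Using \cref{abstract-cycle-properties}, $W$ (and correspondingly $M$) is partitioned into $(\matching'\rightarrow\matching)$-cycles, so it suffices to show $D$ is a union of such cycles --- equivalently, that whenever a woman $w\in D$ lies on a nontrivial cycle $\generalcycle=\abstractcycle{w}$, every participant of that cycle is also in $D$. The hard part is this closure step, and the lever is part \labelcref{pm-compatible-properties-only-harbor}: contrapositively, if $m\in M$ has $\resultmatching(m)\neq\matching(m)$, then no woman other than $\matching(m)$ prefers $m$ over her $\matching'$-partner, and in particular $\matching(m)$'s predecessor on the abstract cycle, namely the woman $w'$ with $\matching'(w')=m$, would need... here one argues along the abstract cycle: by the defining relations $m_i=\matching'(w_i)$ and $w_{i+1}=\matching(m_i)$, if $\resultmatching(w_i)=\matching(w_i)$ then $w_i$ leaves $m_{i-1}=\matching'(w_i)$ during $\run$, so $m_{i-1}$ is rejected, hence $m_{i-1}$ is in $\rejectcycle$, hence by \labelcref{pm-compatible-properties-opt-if-reject-cycle} $\resultmatching(m_{i-1})=\matching(m_{i-1})$ and $\resultmatching(w_{i-1})=\resultmatching(\matching(m_{i-1})^{-1}\text{-side})$ --- more cleanly: $m_{i-1}\in\rejectcycle$ and $\rejectcycle$ is a cycle, so its predecessor woman $w_{i-1}=\matching(m_{i-2})$-type vertex also lies in $\rejectcycle$, propagating membership in $D$ backward (and forward symmetrically) around the whole $(\matching'\rightarrow\matching)$-cycle. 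I expect this propagation argument --- carefully matching the combinatorial edges of $\abstractcycle{w}$ with the rejection edges of $\rejectcycle$, and checking that one never ``exits'' the abstract cycle --- to be the one genuinely delicate point; the other three parts are short deductions from \labelcref{pm-compatible-properties-stay-or-opt} and \cref{changed-if-reject-cycle}.
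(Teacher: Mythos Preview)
Your arguments for \cref{pm-compatible-properties-stay-or-opt}, \cref{pm-compatible-properties-opt-if-reject-cycle}, and \cref{pm-compatible-properties-only-harbor} are correct and match the paper's; in fact your treatment of \cref{pm-compatible-properties-only-harbor} is more careful, since you explicitly handle the cycle-trigger case via \crefpart{pm-compatible}{trigger}, whereas the paper tersely cites only \crefpart{pm-compatible}{non-trigger}.

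For \cref{pm-compatible-properties-disjoint-union}, your sketch is salvageable but takes an unnecessary detour and contains an indexing slip. You write $m_{i-1}=\matching'(w_i)$, but in the definition of $\abstractcycle{w}$ it is $m_i=\matching'(w_i)$; relatedly, you speak of propagating ``backward (and forward symmetrically)'', when a single forward pass around the finite cycle suffices. More importantly, routing the argument through $\rejectcycle$ and then invoking \cref{pm-compatible-properties-opt-if-reject-cycle} is superfluous: the paper's proof is a two-line induction that uses only \cref{pm-compatible-properties-stay-or-opt} (now established for men as well) together with the bijectivity of $\matching$ and $\matching'$. Concretely, from $\resultmatching(w_i)=\matching(w_i)\ne\matching'(w_i)=m_i$ one gets $\resultmatching(m_i)\ne w_i=\matching'(m_i)$ purely by bijectivity, and then \cref{pm-compatible-properties-stay-or-opt} forces $\resultmatching(m_i)=\matching(m_i)=w_{i+1}$, whence $\resultmatching(w_{i+1})=m_i=\matching(w_{i+1})$; iterate. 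No mention of $\rejectcycle$ is needed, and the ``delicate'' step you anticipate evaporates. Your approach via \cref{changed-if-reject-cycle} does ultimately reach the same conclusion, but it obscures that the statement is really a formal consequence of \cref{pm-compatible-properties-stay-or-opt} alone.
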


\begin{proof}
\leavevmode
\begin{parts}
\item
Follows from \crefpart{pm-compatible}{non-trigger}.
\item
Follows from \crefpart{pm-compatible}{stay-or-opt}.
\item
By \cref{changed-if-reject-cycle}, every participant $p$ in $\rejectcycle$
satisfies $\resultmatching(p) \ne \matching'(p)$.
Thus, by \cref{pm-compatible-properties-stay-or-opt}, we also have
$\resultmatching(p) = \matching(p)$.
\item
Let $w$ be a woman s.t.\ $\resultmatching(w)=\matching(w)$.
If $\matching'(w)=\matching(w)$, then $\abstractcycle{w}=(w)$.
Otherwise, denote $\abstractcycle{w}=\generalcycle$, and we show by induction
that for all participants $p$ in $\abstractcycle{w}$,
$\resultmatching(p)=\matching(p)\ne\matching'(p)$.

Base: By definition, $w_1=w$, and thus $\resultmatching(w_1)=\matching(w_1)$
and by assumption, $\matching(w_1)\ne\matching'(w_1)$.

Step 1: Assume that $\resultmatching(w_i)=\matching(w_i)\ne\matching'(w_i)$
for some $1\le i < d$. Recall that $m_i=\matching'(w_i)$.
Since $m_i=\matching'(w_i) \ne \matching(w_i)$, we have
$\matching'(m_i)=w_i \ne \matching(m_i)$.
Since $m_i=\matching'(w_i) \ne \resultmatching(w_i)$, we have
$w_i=\matching'(m_i) \ne \resultmatching(m_i)$. Thus, by
\crefpart{pm-compatible}{stay-or-opt},
$\resultmatching(m_i)=\matching(m_i)$.

Step 2: Assume that $\resultmatching(m_i)=\matching(m_i)\ne\matching'(m_i)$
for some $1\le i < d$. Recall that $w_{i+1}=\matching(m_i)$.
Since $w_{i+1}=\matching(m_i)=\resultmatching(m_i)$, we have
$\matching(w_{i+1})=m_i=\resultmatching(w_{i+1})$.
Since $w_{i+1}=\matching(m_i) \ne \matching'(m_i)$, we have
$\matching(w_{i+1})=m_i \ne \matching'(w_{i+1})$.

The proof that for every $m \in M$ s.t.\ $\resultmatching(m)=\matching(m)\ne\matching'(m)$,
we have
$\resultmatching(p)=\matching(p)\ne\matching'(p)$
for all participants $p$ in $\abstractcycle{\matching(m)}$ (this cycle includes
$m$ as $m_{d-1}$), uses Step 2 as the induction base and Steps 1 and 2
as the induction steps.\qedhere
\end{parts}
\end{proof}

\begin{lemma}\label{build-base}
Let $\matching'$ and $\matching$ be matchings, let $\tilde{w} \in W$
s.t.\ $\matching(\tilde{w}) \ne \matching'(\tilde{w})$,
and let $\prefs{W}$ be a profile
of preference lists for $W$ according to which each woman $w \ne \tilde{w}$ has preference list
starting with $\matching(w)$, followed immediately by $\matching'(w)$
(if $\matching'(w)\ne\matching(w)$), followed by some or all
other men in arbitrary order, and in which $\tilde{w}$ has preference list
$\matching(\tilde{w})$, with all other men blacklisted.
For every $(\matching'\rightarrow\matching)$-compatible profile $\prefs{M}$
of preference lists for $M$, both of the following hold.
\begin{parts}
\item\label{build-base-compatible}
$\prefs{W}$ is $(\matching'\xrightarrow{\prefs{M}}\matching)$-compatible.
\item\label{build-base-cycle}
$\rejectcycle = \abstractcycle{\tilde{w}}$.
\end{parts}
\end{lemma}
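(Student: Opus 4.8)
The plan is to verify, in turn, the two conclusions of the lemma, deriving \crefpart{build-base}{compatible} for the most part from \crefpart{build-base}{cycle}. Clauses \crefpart{pm-compatible}{idol} and \crefpart{pm-compatible}{cycle-generating} are immediate from the shape of $\prefs{W}$: every woman's list begins with $\matching(w)$ (for $\tilde{w}$ the only acceptable man is $\matching(\tilde{w})$), which is \crefpart{pm-compatible}{idol}; for the $\matching'$-cycle-generating property of \cref{cycle-generating}, take $w_m\eqdef\matching(m)$, so that $m$ is $w_m$'s top choice, $m$ does not blacklist $\matching(m)$, and $m$ weakly prefers $\matching'(m)$ over $\matching(m)$, the latter two because $\prefs{M}$ is $(\matching'\rightarrow\matching)$-compatible; moreover $\tilde{w}$ blacklists $\matching'(\tilde{w})\ne\matching(\tilde{w})$ while every other woman $w$ has $\matching'(w)$ on her list, so $\tilde{w}$ is the unique cycle trigger. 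In particular $\rejectcycle$ is then a well-defined cycle (by the claim following \cref{reject-cycle}) and, by \cref{all-matched}, $\resultmatching$ matches all of $W$ and $M$.

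The heart of the proof is \crefpart{build-base}{cycle}. Write $\abstractcycle{\tilde{w}}=\generalcycle$; since $\matching(\tilde{w})\ne\matching'(\tilde{w})$, this is a genuine simple cycle (by \cref{abstract-cycle-properties}) with $d>2$, the $w_i$ and the $m_i$ pairwise distinct, $m_i=\matching'(w_i)$, and $\matching(w_i)=m_{i-1}\ne m_i$ for every $i$ (with indices read cyclically, $m_0\eqdef m_{d-1}$). Thus each $w_i$ with $i\ge 2$ has list exactly $\matching(w_i),\matching'(w_i),\dots$ with distinct first two entries, while $\matching(\tilde{w})=m_{d-1}$ is $\tilde{w}$'s sole acceptable man. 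I would prove by induction on $i=1,\dots,d-1$ that $\run$ reaches a state in which $\tilde{w}$ is unmatched, each $w_j$ with $2\le j\le i$ is (and remains) matched with her top choice $m_{j-1}$, all other women and men are matched as in $\matching'$, and $m_i$ has just been rejected by $w_i$ (the base case $i=1$ being the first night, on which $\tilde{w}$ rejects $m_1=\matching'(\tilde{w})$). For the step, trace $m_i$'s descent down his list from $w_i$: since $m_i$ weakly prefers $\matching'(m_i)=w_i$ over $\matching(m_i)=w_{i+1}$ and $w_{i+1}\ne w_i$, the woman $w_{i+1}$ lies strictly below $w_i$ in his list, and each woman $w'$ he serenades strictly in between rejects him --- if $w'=\tilde{w}$ then $m_i\ne\matching(\tilde{w})$ (else $w'$ would be the endpoint $w_{i+1}$, not strictly between), so she blacklists $m_i$; if $w'\in\{w_2,\dots,w_i\}$ she holds her top choice; and otherwise $w'$ holds $\matching'(w')$, which she prefers to $m_i$ because $m_i\notin\{\matching(w'),\matching'(w')\}$ (either membership would force $w'\in\{w_{i+1},w_i\}$). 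Hence the first woman to provisionally accept $m_i$ is $w_{i+1}$; holding $m_{i+1}=\matching'(w_{i+1})$, she accepts her top choice $m_i$ and rejects exactly $m_{i+1}$ --- unless $w_{i+1}=\tilde{w}$, which forces $i=d-1$ (as then $\matching(\tilde{w})=m_{d-1}$), in which case $\tilde{w}$, being unmatched, accepts $m_{d-1}$ with no rejection and $\run$ halts. This reproduces \cref{reject-cycle} edge by edge, so $\rejectcycle=\abstractcycle{\tilde{w}}$; it also shows $\resultmatching(p)=\matching(p)$ for every participant $p$ on this cycle.

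Granting \crefpart{build-base}{cycle}, the remaining clauses of \cref{pm-compatible} follow quickly. By \cref{changed-if-reject-cycle}, $\resultmatching(p)\ne\matching'(p)$ precisely for the participants $p$ on $\rejectcycle=\abstractcycle{\tilde{w}}$, and such $p$ satisfy $\resultmatching(p)=\matching(p)$ as just observed; hence $\resultmatching(p)\in\{\matching'(p),\matching(p)\}$ for every $p$, which is \crefpart{pm-compatible}{stay-or-opt}. Clause \crefpart{pm-compatible}{non-trigger} is vacuous: any $w\ne\tilde{w}$ prefers no man over $\matching'(w)$ except possibly $\matching(w)$, by the shape of her list. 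For \crefpart{pm-compatible}{trigger}, the cycle trigger is $\tilde{w}$, who blacklists all of $M$ except $\matching(\tilde{w})=m_{d-1}$; since $m_{d-1}$ lies on the cycle it satisfies $\resultmatching(m_{d-1})=\matching(m_{d-1})$, so $\tilde{w}$ does blacklist every $m$ with $\resultmatching(m)\ne\matching(m)$. This establishes \crefpart{build-base}{compatible}.

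The step I expect to demand the most care is the inductive tracing of $m_i$ through $\run$ in the second paragraph: confirming that every intermediate woman rejects him, that the first to accept him is exactly $\matching(m_i)$, and that she then ejects exactly $\matching'(\matching(m_i))$. This is where the purely syntactic object $\abstractcycle{\tilde{w}}$ of \cref{abstract-cycle} gets pinned to the actual execution of the algorithm, and one must faithfully maintain the invariant that the cycle's women other than $\tilde{w}$ are never revisited and retain their top choices forever once accepted; the night-by-night structure already recorded in the claim following \cref{reject-cycle} can be invoked to streamline this bookkeeping.
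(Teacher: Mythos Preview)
Your proposal is correct and follows essentially the same approach as the paper: both verify the $\matching'$-cycle-generating and top-choice clauses directly from the shape of $\prefs{W}$, and both establish \crefpart{build-base}{cycle} by an induction that traces the rejected man $m_i$ down his list until he reaches $\matching(m_i)$. The only notable difference is ordering: the paper first checks all five clauses of \cref{pm-compatible} (using \cref{all-matched} for \crefpart{pm-compatible}{trigger} and the ``most-preferred suitor'' property of Gale--Shapley for \crefpart{pm-compatible}{stay-or-opt}) and only then proves the cycle identity, whereas you prove the cycle identity first and then read off \crefpart{pm-compatible}{trigger} and \crefpart{pm-compatible}{stay-or-opt} from it via \cref{changed-if-reject-cycle}---a slightly cleaner derivation that avoids invoking the external ``well-known property''.
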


\begin{proof}
It is straightforward to check, immediately from the definition
of $\prefs{W}$, that $\prefs{W}$ and $\prefs{M}$ are $\matching'$-cycle
generating with cycle trigger $\tilde{w}$ and with $w_m=\matching(m)$ for
every $m \in M$. Thus, in particular
\crefshowpart{pm-compatible}{cycle-generating} holds.
By definition of $\prefs{W}$, we trivially have that 
\crefshowpart{pm-compatible}{idol} holds.

By \cref{all-matched} and by definition of $\prefs{W}$, we have
$\resultmatching(\tilde{w})=\matching(\tilde{w})$.
Thus, $\resultmatching(\matching(\tilde{w}))=\matching(\matching(\tilde{w}))$,
and so,
by definition of $\prefs{W}$ we have that $\tilde{w}$ blacklists
every $m \in M$ s.t.\ $\resultmatching(m) \ne \matching(m)$ and
\crefshowpart{pm-compatible}{trigger} holds.

Let $w \in W \setminus \{\tilde{w}\}$. By definition of $\prefs{W}$, the only
man that $w$ possibly prefers over $\matching'(w)$ is $\matching(w)$ (and this
only happens if $\matching(w) \ne \matching'(w)$), and thus
\crefshowpart{pm-compatible}{non-trigger} vacuously holds.

For every woman $w \in W$, by a well-known property of the Gale-Shapley algorithm, we have that
$\resultmatching(w)$ is $w$'s most-preferred choice out of all
the men that serenade under her window during $\run$.
Thus, since on the first night of this run,
each woman $w \in W \setminus \{\tilde{w}\}$ is serenaded-to by $\matching'(w)$,
she weakly prefers $\resultmatching(w)$ over $\matching'(w)$; therefore,
by definition
of $\prefs{W}$, we obtain that $\resultmatching(w) \in \{\matching'(w),\matching(w)\}$ for every $w \in W \setminus \{\tilde{w}\}$.
Recall that $\resultmatching(\tilde{w})=\matching(\tilde{w})$. By both of these,
\crefshowpart{pm-compatible}{stay-or-opt} holds and the
proof of \cref{build-base-compatible} is complete.

We now prove \cref{build-base-cycle}.
Define $\rejectcycle=\generalcycle$. We show by induction that
$\generalcycle=\abstractcycle{\tilde{w}}$.

Base: As $\tilde{w}$ is the cycle trigger for $\rejectcycle$, we have,
by \cref{reject-cycle} that $w_1=\tilde{w}$ and $m_1=\matching'(\tilde{w})$,
agreeing with the definition of $\abstractcycle{\tilde{w}}$.

Step: Let $1 < i \le d$ and assume that $w_j$ and $m_j$ agree with the
definition of $\abstractcycle{\tilde{w}}$ for every $j \le i$.
We show
that $w_i$ and, if $i < d$, also $m_i$, agrees with the definition of $\abstractcycle{\tilde{w}}$ as well.
Let $t$ be the night on which $w_i$ rejects $m_i$.
On the first night, every $w \in W \setminus \{\tilde{w}\}$ is serenaded-to
by $\matching'(w)$, whom she does not blacklist;
therefore, once more by the same well-known property of the
Gale-Shapley algorithm as above, on every night there is a man serenading
under $w$'s window that she weakly prefers over $\matching'(w)$.
Thus, by definition of $\prefs{W}$, every
$w \in W \setminus \{\tilde{w}, \matching(m_i), \matching'(m_i) \}$ is serenaded-to
on every night by a man she strictly prefers over $m_i$, and would thus
reject $m_i$ if he ever serenaded under $w$'s window. Furthermore,
as, by the induction hypothesis, $w_i=\matching'(m_i)$
rejects $m_i$ on night $t$, we have that he does not serenade under her
window on any later night.
Thus, $w_{i+1} \in \{ \tilde{w}, \matching(m_i) \}$.
If $\tilde{w} = \matching(m_i)$, then $w_{i+1} = \tilde{w}$ and by
definition of $\prefs{W}$, she does not
blacklist $m_i$. Thus, the algorithm stops, yielding $d=i+1$.
Thus, $w_{i+1}$ and $d$ agree with the definition of
$\abstractcycle{\tilde{w}}$.
Otherwise, $\tilde{w} \ne \matching(m_i)$ , and thus, by definition 
of $\prefs{W}$, $\tilde{w}$ blacklists $w$, and thus $w_{i+1} \ne \tilde{w}$,
and we have $w_{i+1}=\matching(m_i)$. Let $t'$ be the night on which
$m_i$ serenades under $w_{i+1}$'s window for the first time.
By the induction hypothesis and
by \crefpart{abstract-cycle-properties}{simple},
we have that $w_{i+1}$ has yet to reject anyone by night $t$, and thus
by night $t'-1$, and thus on night $t'$ she rejects
the man serenading under her window continuously from the first night, namely
$\matching'(w_{i+1})$, and thus $m_{i+1}=\matching'(w_{i+1})$ and the proof
by induction is complete.
\end{proof}

\begin{cor}\label{build-base-opt}
Under the conditions of \cref{build-base},
$\tilderesultmatching(w)=\matching(w)$ for all women $w$ in
$\abstractcycle{\tilde{w}}$ (and thus in particular also for $w=\tilde{w}$),
as well as for all $w$ for whom $\matching'(w)=\matching(w)$.
\end{cor}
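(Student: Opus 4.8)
The plan is to read off \cref{build-base-opt} directly from \cref{build-base} and \cref{pm-compatible-properties}, with no new combinatorial work. By \crefpart{build-base}{compatible}, the profile $\prefs{W}$ constructed in \cref{build-base} is $(\matching'\xrightarrow{\prefs{M}}\matching)$-compatible, so the hypotheses of \cref{pm-compatible-properties} are met for $\prefs{W}$, $\prefs{M}$, $\matching'$ and $\matching$; hence every conclusion of that claim is available for the outcome $\resultmatching$ of the run $\run$.

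For the women lying on $\abstractcycle{\tilde{w}}$: since $\matching(\tilde{w})\ne\matching'(\tilde{w})$ by hypothesis, \cref{abstract-cycle} shows that $\abstractcycle{\tilde{w}}$ is a genuine (nontrivial) cycle rather than the degenerate $(\tilde{w})$, with $\tilde{w}$ itself appearing on it as $w_1=w_d$. By \crefpart{build-base}{cycle} we have $\rejectcycle=\abstractcycle{\tilde{w}}$, so every woman appearing on $\abstractcycle{\tilde{w}}$ is a participant of the rejection cycle $\rejectcycle$. Applying \crefpart{pm-compatible-properties}{opt-if-reject-cycle}, which states $\resultmatching(p)=\matching(p)$ for every participant $p$ of $\rejectcycle$, and restricting attention to the women on the cycle, yields $\resultmatching(w)=\matching(w)$ for all such $w$, and in particular for $w=\tilde{w}$.

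For a woman $w$ with $\matching'(w)=\matching(w)$, I would simply invoke \crefpart{pm-compatible-properties}{stay-or-opt}: it gives $\resultmatching(w)\in\{\matching'(w),\matching(w)\}$, and since the two members of this set coincide, $\resultmatching(w)=\matching(w)$. There is essentially no obstacle here: all of the substantive work has already been carried out in establishing \cref{build-base} and \cref{pm-compatible-properties}, and the corollary is a straightforward combination of the two. The only point that deserves a moment's attention is confirming, as above, that the standing hypothesis $\matching(\tilde{w})\ne\matching'(\tilde{w})$ forces $\abstractcycle{\tilde{w}}$ to be nontrivial, so that \crefpart{pm-compatible-properties}{opt-if-reject-cycle} is genuinely being applied to a rejection cycle (and so that there are indeed women on $\abstractcycle{\tilde{w}}$ to speak of); this is immediate from \cref{abstract-cycle}.
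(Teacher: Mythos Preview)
Your proposal is correct and follows essentially the same approach as the paper, which simply cites \crefpart{pm-compatible-properties}{opt-if-reject-cycle} together with \crefpart{pm-compatible}{stay-or-opt} (the latter being equivalent, for women, to the \crefpart{pm-compatible-properties}{stay-or-opt} you invoke). Your write-up just spells out the two-line argument in more detail, including the harmless observation that $\abstractcycle{\tilde{w}}$ is nontrivial.
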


\begin{proof}
By the proof of \cref{build-base}, or immediately by
\crefpart{pm-compatible-properties}{opt-if-reject-cycle} and by
\crefpart{pm-compatible}{stay-or-opt}.
\end{proof}

\begin{lemma}\label{build-step}
Let $\matching'$ and $\matching$ be matchings and let $\prefs{M}$ be a
$(\matching'\rightarrow\matching)$-compatible profile of preference
lists for $M$. Let $\prefs{W}$ be a $(\matching'\xrightarrow{\prefs{M}}\matching)$-compatible profile of preference lists for $W$.
Let $\tilde{w} \in W$ be a woman who rejects a man $\tilde{m}$ during $\run$,
but s.t.\ $\resultmatching(\tilde{w})\ne\matching(\tilde{w})$.
Let $\tildeprefs{W}$ be a profile of preference lists for $W$, obtained
from $\prefs{W}$
by modifying the preference list of $\tilde{w}$ to start with $\matching(\tilde{w})$,
followed immediately by $\tilde{m}$, followed immediately by $\matching'(\tilde{w})$,
followed by some or all other men in arbitrary order. $\tildeprefs{W}$ satisfies both of the following.
\begin{parts}
\item\label{build-step-compatible}
$\tildeprefs{W}$ is $(\matching'\xrightarrow{\prefs{M}}\matching)$-compatible.
\item\label{build-step-cycle}
$\tilderejectcycle=(w_1\xrightarrow{m_1} w_2 \cdots \xrightarrow{m_{\ell-1}} w_{\ell} \xrightarrow{m_{\ell}} \abstractcycle{\tilde{w}} \xrightarrow{m_{\ell}} w_{\ell+1} \xrightarrow{m_{\ell+1}} w_{\ell+2} \xrightarrow{m_{\ell+2}} \cdots w_d)$, where
$\generalcycle\eqdef\rejectcycle$ and where $m_{\ell}=\tilde{m}$ is rejected
by $\tilde{w}$ during $\run$ between his rejection
by $w_{\ell}$ and his provisional acceptance by $w_{\ell+1}$.
\end{parts}
\end{lemma}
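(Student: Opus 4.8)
The plan is to prove Part~\ref{build-step-cycle} by running the Gale--Shapley algorithm according to $\tildeprefs{W}$ and $\matching'$ side by side with $\run$, and then to deduce Part~\ref{build-step-compatible} by reading $\tilderesultmatching$ off the resulting rejection cycle and checking the five clauses of \cref{pm-compatible} one at a time. Two facts set the stage. Since $\resultmatching(\tilde{w})\ne\matching(\tilde{w})$, \crefpart{pm-compatible}{stay-or-opt} forces $\resultmatching(\tilde{w})=\matching'(\tilde{w})$, so by \cref{changed-if-reject-cycle} the woman $\tilde{w}$ is not a participant of $\rejectcycle$ --- in particular she is not the cycle trigger --- and $\matching'(\tilde{w})\ne\matching(\tilde{w})$, so $\abstractcycle{\tilde{w}}$ is a nontrivial simple cycle (by \crefpart{abstract-cycle-properties}{simple}). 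Moreover $\abstractcycle{\tilde{w}}$ and $\rejectcycle$ are disjoint: a shared participant $p$ would have $\abstractcycle{p}=\abstractcycle{\tilde{w}}$ up to cyclic shift by \crefpart{abstract-cycle-properties}{disjoint-or-rotate} and, lying on $\rejectcycle$, would satisfy $\resultmatching(p)=\matching(p)$ by \crefpart{pm-compatible-properties}{opt-if-reject-cycle}; then \crefpart{pm-compatible-properties}{disjoint-union} would force $\resultmatching=\matching$ on all of $\abstractcycle{p}=\abstractcycle{\tilde{w}}$, hence on $\tilde{w}$, a contradiction.

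For Part~\ref{build-step-cycle}, write $\abstractcycle{\tilde{w}}=(w_1'\xrightarrow{m_1'}\cdots\xrightarrow{m_{d'-1}'}w_{d'}')$ with $w_1'=w_{d'}'=\tilde{w}$, $m_j'=\matching'(w_j')$ and $w_{j+1}'=\matching(m_j')$; note $m_1'=\matching'(\tilde{w})$, $m_{d'-1}'=\matching(\tilde{w})$, and (using \crefpart{pm-compatible}{idol} together with $\matching(\matching(x))=x$) that the top choice of each $w_{j+1}'$ is $m_j'$. First I would check that $\run$ and $\tilderun$ coincide up to and including the night $t^\ast$ on which $\tilde{m}=m_\ell$ first serenades $\tilde{w}$ (the night lying, by hypothesis, strictly between $w_\ell$'s rejection of $m_\ell$ and $w_{\ell+1}$'s acceptance of him): the two profiles differ only in $\tilde{w}$'s list, and before $t^\ast$ the only men who serenade $\tilde{w}$ in $\run$ are $\matching'(\tilde{w})$, whom she holds continuously since $\resultmatching(\tilde{w})=\matching'(\tilde{w})$, and men she ranks below $\matching'(\tilde{w})$ in both profiles, so her decisions agree. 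At $t^\ast$ the runs diverge: in $\tilderun$, $\tilde{w}$ now prefers $\tilde{m}$ to $\matching'(\tilde{w})$, so she keeps $\tilde{m}$ and ejects $m_1'=\matching'(\tilde{w})$. One then traces the ensuing ``detour''. Each $m_j'$ satisfies $\resultmatching(m_j')=w_j'=\matching'(m_j')\ne\matching(m_j')$, so by \crefpart{pm-compatible}{trigger} it is blacklisted by the trigger, and by \crefpart{pm-compatible-properties}{only-harbor} the only woman preferring $m_j'$ to her $\matching'$-partner is $w_{j+1}'$; combining this with the facts that every non-trigger woman always holds a man she weakly prefers to her $\matching'$-partner and that the trigger holds no man on $t^\ast$ (a night on which two men serenade one non-trigger woman), an induction along $\abstractcycle{\tilde{w}}$ shows that the ejected $m_1'$ is rejected by everyone until $w_2'$ accepts him (her top choice) and ejects $m_2'$, and so on around the cycle, until $m_{d'-1}'=\matching(\tilde{w})$ returns to $\tilde{w}$, who accepts her top choice and ejects $\tilde{m}$, whom she has held untouched since $t^\ast$. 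This detour permanently alters only the women of $\abstractcycle{\tilde{w}}$ --- all disjoint from $\rejectcycle$ and each now holding her $\matching$-partner, i.e.\ her top choice --- so $\tilde{m}$, re-ejected, resumes the exact itinerary he had in $\run$ and meets every later woman in precisely her $t^\ast$-state from $\run$; hence $\tilderun$ replays the tail of $\run$ verbatim. Reading off the acceptance chain yields exactly $\tilderejectcycle=(w_1\xrightarrow{m_1}\cdots\xrightarrow{m_{\ell-1}}w_\ell\xrightarrow{m_\ell}\abstractcycle{\tilde{w}}\xrightarrow{m_\ell}w_{\ell+1}\xrightarrow{m_{\ell+1}}\cdots w_d)$.

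For Part~\ref{build-step-compatible}, Part~\ref{build-step-cycle} gives that $\tilderesultmatching$ agrees with $\matching$ on every participant of $\rejectcycle\cup\abstractcycle{\tilde{w}}$ and agrees with $\matching'$ (hence with $\resultmatching$) elsewhere. I would then verify the clauses of \cref{pm-compatible} for $\tildeprefs{W}$: clauses~\ref{pm-compatible-cycle-generating} and~\ref{pm-compatible-idol} hold because $\tilde{w}$'s top choice is still $\matching(\tilde{w})$ and $\tilde{w}$ is not the trigger, so $\tildeprefs{W}$ and $\prefs{M}$ are still $\matching'$-cycle generating with the same, still unique, trigger, whose list (and blacklist) is unchanged; clause~\ref{pm-compatible-trigger} holds because $\{m:\tilderesultmatching(m)\ne\matching(m)\}\subseteq\{m:\resultmatching(m)\ne\matching(m)\}$, which the trigger already blacklists by \crefpart{pm-compatible}{trigger} for $\prefs{W}$; clause~\ref{pm-compatible-non-trigger} holds for $w\ne\tilde{w}$ since her list is unchanged and $\resultmatching(m)=\matching(m)$ forces $\tilderesultmatching(m)=\matching(m)$, and for $w=\tilde{w}$ since the only man she newly prefers to $\matching'(\tilde{w})$ besides $\matching(\tilde{w})$ is $\tilde{m}=m_\ell\in\rejectcycle$, for which $\tilderesultmatching(m_\ell)=\matching(m_\ell)$; and clause~\ref{pm-compatible-stay-or-opt} is immediate from the description of $\tilderesultmatching$.

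The crux is the middle paragraph: one must verify that the detour set off by ejecting $\matching'(\tilde{w})$ stays confined to $\abstractcycle{\tilde{w}}$ and never disrupts a woman outside it --- the delicate case being the then-unmatched trigger, handled via the blacklisting guaranteed by \crefpart{pm-compatible}{trigger} --- and that once $\tilde{m}$ is re-ejected, every remaining proposal and rejection of $\tilderun$ is forced to coincide with that of $\run$. The rest is bookkeeping against \cref{pm-compatible,pm-compatible-properties}.
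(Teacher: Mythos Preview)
Your proposal is correct and follows essentially the same route as the paper's proof: establish disjointness of $\abstractcycle{\tilde{w}}$ and $\rejectcycle$, show the two runs coincide until $\tilde{m}$ reaches $\tilde{w}$, trace the detour through $\abstractcycle{\tilde{w}}$ using \crefpart{pm-compatible-properties}{only-harbor} and the trigger's blacklist, argue the tail replays verbatim, and finally read off the five clauses of \cref{pm-compatible}. The only place where your write-up is thinner than the paper's is the tail argument: when you say $\tilde{m}$ ``meets every later woman in precisely her $t^\ast$-state,'' this is literally false for women of $\abstractcycle{\tilde{w}}$, who now hold their top choice rather than their $\matching'$-partner; you do note this just before, and the disjointness you invoke is exactly what guarantees that in $\run$ such women also reject every visiting $m_i$ (else they would appear in $\rejectcycle$), so the replayed decisions still agree --- but it would be worth making that one sentence explicit.
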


\begin{proof}
It is straightforward to check, immediately from the definition of
$\tildeprefs{W}$, that
since
$\prefs{W}$ and $\prefs{M}$ are $\matching'$-cycle generating, so are
$\tildeprefs{W}$ and $\prefs{M}$ (and with the same cycle trigger and
$w_m$'s).
Before continuing to prove \cref{build-step-compatible}, we now prove
\cref{build-step-cycle}.

By \crefpart{pm-compatible-properties}{opt-if-reject-cycle}, $\tilde{w} \notin \{w_i\}_{i=1}^d$; thus, $\tilde{w}$
rejects $\tilde{m}$ during $\run$ ``on his way'' between two women $w_{\ell}$
and $w_{\ell+1}$ for some $\ell$, and thus $m_{\ell}=\tilde{m}$.
As $\tilde{m}$ does not serenade under $\tilde{w}$'s window after she rejects
him, $\ell$ is well-defined.
We note that by definition of $\tildeprefs{W}$, $\tilderun$ and $\run$
coincide as long as $\tilde{m}$ does not does serenade under $\tilde{w}$'s
window, and thus $\tilderejectcycle$ has prefix
$(w_1\xrightarrow{m_1} w_2 \cdots \xrightarrow{m_{\ell-1}} w_{\ell} \xrightarrow{m_{\ell}}$.

By \crefpart{pm-compatible-properties}{disjoint-union},
and by \crefpart{abstract-cycle-properties}{disjoint-or-rotate}, we have that for every $m$ in $\abstractcycle{\tilde{w}}$,
$m$ is not in $\rejectcycle$; thus, such $m$ is never rejected during $\run$,
and thus serenades throughout all nights
of $\run$ under the window of the same woman --- the one under whose
window $m$ serenades on the first night of $\run$, namely $\matching'(m)$,
by definition.

We now show by induction that $\tilderejectcycle$ has prefix
$(w_1\xrightarrow{m_1} w_2 \cdots \xrightarrow{m_{\ell-1}} w_{\ell} \xrightarrow{m_{\ell}} \abstractcycle{\tilde{w}}$.
Denote $\abstractcycle{\tilde{w}}=\generalcycleprime$.

Base:
As explained above, up to the night on which $\tilde{m}$ serenades under
$\tilde{w}$'s window, we have that $\tilderejectcycle$ has prefix
$(w_1\xrightarrow{m_1} w_2 \cdots \xrightarrow{m_{\ell-1}} w_{\ell} \xrightarrow{m_{\ell}}$. As explained above, the other man serenading under $\tilde{w}$'s
window on that night is $\matching'(\tilde{w})$, and by definition of
$\tildeprefs{W}$, she thus
rejects him in favour of $\tilde{m}$ during that night of $\tilderun$. Thus, the next woman
in $\tilderejectcycle$ is $\tilde{w}=w_1'$, and the next man --- $\tilde{m}=m_1'$.

Step:
Let $1 < i \le d'$ and assume that we have shown that $\tilderejectcycle$
has prefix 
$(w_1\xrightarrow{m_1} w_2 \cdots \xrightarrow{m_{\ell-1}} w_{\ell} \xrightarrow{m_{\ell} } w_1' \xrightarrow{m_1'} \cdots w_i' \xrightarrow{m_i'}$.
As explained above, we have $\resultmatching(m_i') \ne \matching(m_i')$,
and thus, by \crefpart{pm-compatible}{trigger} (since
$w_1 \ne \tilde{w}$), we
have that $m_i'$ is blacklisted by $w_1$. Thus, the next woman
in $\tilderejectcycle$ is not $w_1$.
Once more by the same well-known property of the
Gale-Shapley algorithm as above, on every night each woman $w \in W \setminus \{w_1\}$ is serenaded-to by a man
she weakly prefers over $\matching'(w)$.
Thus, by \crefpart{pm-compatible-properties}{only-harbor}, and by \cref{all-matched}, we have that the next woman in $\tilderejectcycle$ is $\matching(m_i')=w_{i+1}'$.
If $i=d$, the proof by induction is complete. Otherwise,
let $t$ be the night on which $m_i'$
first serenades under $w_{i+1}'$'s window during $\tilderun$.
As $m_{i+1}'=\matching'(w_{i+1}')$ is not in $\rejectcycle$ and by
\crefpart{abstract-cycle-properties}{simple},
we have that $m_{i+1}'$ has not been rejected before night $t$,
and thus he is the other man (in addition to $m_i'$) serenading under
$\matching'(w_{i+1})$'s window on night $t$. Therefore, by \crefpart{pm-compatible}{idol} (and since $w_{i+1}' \ne \tilde{w}$,
by \crefpart{abstract-cycle-properties}{simple}), the next man in $\tilderejectcycle$ is $m_{i+1}'$,
and the proof by induction is complete.

Denote by $t$ the night on which $\tilde{m}$ first serenades under $\tilde{w}$'s
window and by $t'$ --- the night on which $m_{d'-1}'$ first serenades under her window.
As $\rejectcycle$ and $\abstractcycle{\tilde{w}}$ are disjoint, and as
$\tilde{m}=m_{\ell}$ is in $\rejectcycle$, it follows that $\tilde{m}$
is not rejected between nights $t$ and $t'$, exclusive of $t'$. Thus,
and as $w_{d'}'=w_1'=\tilde{w}$ provisionally accepts $\tilde{m}$ at time $t$,
it follows that he is the man rejected by $w_{d'}'$ in favour of $m_{d'-1}'$
during $\tilderun$.

So far, we have that $\tilderejectcycle$ has prefix
$(w_1\xrightarrow{m_1} w_2 \cdots \xrightarrow{m_{\ell-1}} w_{\ell} \xrightarrow{m_{\ell}} \abstractcycle{\tilde{w}} \xrightarrow{m_{\ell}}$.
We conclude the proof inductively.

Assume that for some $\ell \le i < d$, $\tilderejectcycle$ has prefix
$(w_1\xrightarrow{m_1} w_2 \cdots \xrightarrow{m_{\ell-1}} w_{\ell} \xrightarrow{m_{\ell}} \abstractcycle{\tilde{w}} \xrightarrow{m_{\ell}}
w_{\ell} \xrightarrow{m_{\ell+1}} \cdots w_i \xrightarrow{m_i}$.

If $i=\ell$, denote $r=\tilde{w}=w_{d'}'$; otherwise, denote $r=w_i$.
Let $t$ be the night on which $r$ rejects $m_i$ during $\run$ and
let $t'$ be the night on which $r$ rejects $m_i$ during $\tilderun$.
In order to show that the next woman in $\tilderejectcycle$ is $w_{i+1}$,
we have to show that on night $t'$ in $\tilderun$, every woman 
that $m_i$ prefers over $w_{i+1}$ but less than $r$, prefers 
her provisional match over $m_i$, while $w_{i+1}$ prefers
$m_i$ over her provisional match.
Let $w \in W$ s.t.\ $m_i$ prefers $r$ over $w$ and weakly prefers $w$
over $w_{i+1}$.
If $w$ is in $\abstractcycle{\tilde{w}}$, then as $\abstractcycle{\tilde{w}}$
and $\rejectcycle$ are distinct, $w \ne w_{i+1}$.
Furthermore, as $w$ is in $\abstractcycle{\tilde{w}}$, then by the induction
hypothesis
she is provisionally matched
on some night before night $t'$ in $\tilderun$ with $\matching(w)$,
and by \crefpart{pm-compatible}{idol} and by definition
of $\tildeprefs{W}$ prefers him over any other man (including $m_i$), and
is thus, in particular, also provisionally matched with $\matching(w)$ on night $t'$.
Otherwise, $w$ is not in $\abstractcycle{\tilde{w}}$ and thus in particular
$w \ne \tilde{w}$.
As $w \ne \tilde{w}$, she has the same preferences according to $\tildeprefs{W}$
as she does according to $\prefs{W}$, and in particular
$m_i$ serenades under $w$'s window ``on his way'' to $w_{i+1}$ (inclusive) during
$\run$, and thus it is enough
to show that she has the same provisional match on night $t$ in $\run$
as she does on night $t'$ in $\tilderun$. Indeed, if there does not exists $j<i$ s.t.\
$w=w_j$, then by the induction hypothesis this provisional match on both nights is
$\matching'(w)$; otherwise, let $j<i$ be maximal s.t.\ $w=w_j$ ---
if $j>1$ then this provisional match on both nights in both runs is $m_{j-1}$
and otherwise she is not provisionally matched to anyone in either night in
either run.
Thus, we conclude that
the next woman in $\tilderejectcycle$ is indeed $w_{i+1}$.

If $w_{i+1}=\tilde{w}$, then both $\run$ and $\tilderun$ stop,
and thus $i+1=d$ and the proof is complete; assume, therefore, that $w_{i+1}\ne
\tilde{w}$.
Recall that $\abstractcycle{\tilde{w}}$ and $\rejectcycle$ are disjoint;
thus, $w_{i+1}$ is not in $\abstractcycle{\tilde{w}}$ and therefore,
as explained above, her provisional match in $\tilderun$ at night $t'$
is the same as in $\run$ at night $t$, namely $m_{i+1}$, and thus he
is the next man in $\tilderejectcycle$ and the proof by
induction is complete.

We now turn to finish proving \cref{build-step-compatible}.
\crefshowpart{pm-compatible}{idol} holds for $\tildeprefs{W}$ since it
holds for $\prefs{W}$ and by definition of $\tildeprefs{W}$.

To show that \cref{pm-compatible-trigger,pm-compatible-non-trigger,pm-compatible-stay-or-opt} of \cref{pm-compatible} hold,
it is enough to show that for every man $m \in M$ s.t.\ $\tilderesultmatching(m) \ne
\resultmatching(m)$, we have $\tilderesultmatching(m) = \matching(m)$.
(To deduce \crefshowpart{pm-compatible}{non-trigger},
we note that $\tilderesultmatching(\tilde{m})=\matching(m)$, as seen above.)
Let $m \in M$. If $m$ is in $\rejectcycle$, then let $i$ be maximal s.t.\ $m_i=m$.
As explained above, $\tilderesultmatching(m)=w_i=\resultmatching(m)$.
(And by \crefpart{pm-compatible-properties}{opt-if-reject-cycle}, $\resultmatching(m)=\matching(m)$ anyway.)
Otherwise, if $m$ is in $\tilderejectcycle$, then $m$ is in $\abstractcycle{\tilde{w}}$,
and as shown above, $\tilderesultmatching(m)=\matching(m)$. Finally,
if $m$ is not in $\tilderejectcycle$, then $m$ is not in $\rejectcycle$
either, and by \cref{changed-if-reject-cycle}, we have
$\tilderesultmatching(m)=\matching'(m)=\resultmatching(m)$ and the proof is complete.
\end{proof}

\begin{cor}\label{build-opt-increase}
Under the conditions of \cref{build-step},
$\tilderesultmatching(w)=\matching(w)$ for all women $w$ in
$\abstractcycle{\tilde{w}}$ (and thus in particular also for $w=\tilde{w}$),
as well as for all $w$ for whom $\resultmatching(w)=\matching(w)$.
\end{cor}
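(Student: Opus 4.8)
The plan is to obtain this as an immediate corollary of \cref{build-step}, exactly parallel to the way \cref{build-base-opt} follows from \cref{build-base}. The lever is \crefpart{build-step}{compatible}: since $\tildeprefs{W}$ is again $(\matching'\xrightarrow{\prefs{M}}\matching)$-compatible, every conclusion of \cref{pm-compatible-properties} applies verbatim to $\tildeprefs{W}$, with $\tilderesultmatching$ and $\tilderejectcycle$ playing the roles of $\resultmatching$ and $\rejectcycle$. The whole argument is then just bookkeeping: locate the relevant women inside $\tilderejectcycle$, or relate their partners across the runs $\run$ and $\tilderun$.

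First I would treat the women occurring in $\abstractcycle{\tilde{w}}$. By \crefpart{build-step}{cycle}, the cycle $\abstractcycle{\tilde{w}}$ is spliced into $\tilderejectcycle$ between $w_{\ell}$ and $w_{\ell+1}$, so every woman of $\abstractcycle{\tilde{w}}$ occurs in $\tilderejectcycle$; \crefpart{pm-compatible-properties}{opt-if-reject-cycle} applied to $\tildeprefs{W}$ then yields $\tilderesultmatching(w)=\matching(w)$ for each such $w$. The woman $\tilde{w}$ is among them: from the hypothesis $\resultmatching(\tilde{w})\ne\matching(\tilde{w})$ together with \crefpart{pm-compatible}{stay-or-opt} (for $\prefs{W}$) one gets $\matching'(\tilde{w})=\resultmatching(\tilde{w})\ne\matching(\tilde{w})$, so $\abstractcycle{\tilde{w}}$ is a non-degenerate cycle through $\tilde{w}$ --- which also settles the parenthetical claim for $w=\tilde{w}$.

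Next I would treat a woman $w$ with $\resultmatching(w)=\matching(w)$. Put $m\eqdef\matching(w)$, so $\resultmatching(m)=w=\matching(m)$, and invoke the auxiliary fact proved inside the proof of \cref{build-step}, namely that any man whose final partner changes between the two runs, i.e.\ with $\tilderesultmatching(m)\ne\resultmatching(m)$, must satisfy $\tilderesultmatching(m)=\matching(m)$. For our particular $m$ this is impossible (it would give $\tilderesultmatching(m)=\matching(m)=w=\resultmatching(m)$, contradicting the assumed difference), hence $\tilderesultmatching(m)=\resultmatching(m)=w$, i.e.\ $\tilderesultmatching(w)=m=\matching(w)$. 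I do not foresee a genuine obstacle; the only place needing care is exactly this second case, since \crefpart{pm-compatible-properties}{stay-or-opt} for $\tildeprefs{W}$ only narrows $\tilderesultmatching(w)$ down to $\{\matching'(w),\matching(w)\}$, and excluding the value $\matching'(w)$ (when it differs from $\matching(w)$) is precisely what the man-level ``change-can-only-go-to-$\matching$'' claim from the proof of \cref{build-step} supplies --- everything else is a direct appeal to \cref{build-step} and \cref{pm-compatible-properties}.
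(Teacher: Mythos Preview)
Your proposal is correct and takes essentially the same approach as the paper. The paper's proof is the one-liner ``By the proof of \cref{build-step}, or immediately by \crefpart{pm-compatible-properties}{opt-if-reject-cycle} and by \crefpart{pm-compatible}{stay-or-opt}'', and your write-up is precisely a fleshed-out version of this: for women in $\abstractcycle{\tilde{w}}$ you use \crefpart{build-step}{cycle} to place them in $\tilderejectcycle$ and then invoke \crefpart{pm-compatible-properties}{opt-if-reject-cycle}, while for women with $\resultmatching(w)=\matching(w)$ you appeal to the auxiliary ``change-can-only-go-to-$\matching$'' claim established inside the proof of \cref{build-step} --- which is exactly what the paper's ``by the proof of \cref{build-step}'' is pointing at.
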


\begin{proof}
By the proof of \cref{build-step}, or immediately by
\crefpart{pm-compatible-properties}{opt-if-reject-cycle} and by
\crefpart{pm-compatible}{stay-or-opt}.
\end{proof}

We are now ready to define the inductive result stemming from
\cref{build-base} as the induction base and from
\cref{build-step} as the induction step.

\begin{lemma}\label{build-cor}
Let $\matching'$ and $\matching$ be matchings,
let $\prefs{M}$ be a $(\matching'\rightarrow\matching)$-compatible
profile of preference lists for $M$ and let
$\prefs{W}$ be a profile
of preference lists for $W$ according to which each $w \in W$ has preference list
starting with $\matching(w)$, followed immediately by $\matching'(w)$
(if $\matching'(w)\ne\matching(w)$), followed by some or all
other men in arbitrary order.
For every $\tilde{w} \in W$
s.t.\ $\matching(\tilde{w}) \ne \matching'(\tilde{w})$,
there exists a profile $\tildeprefs{W}$ of preference lists for $W$
s.t.\ all of the following hold.
\begin{parts}
\item\label{build-cor-opt}
$\tilderesultmatching(\tilde{w})=\matching(\tilde{w})$, and $\tilde{w}$
still has $\matching(\tilde{w})$ at the top of her preference list, followed
only by men who do not serenade under $\tilde{w}$'s window during $\tilderun$.
\item\label{build-cor-diff-then-opt}
Every $w \in W \setminus \{\tilde{w}\}$ whose preference lists according to $\prefs{W}$ and 
to $\tildeprefs{W}$ differ satisfies all of the following.
\begin{parts}
\item
$w$ still has $\matching(w)$ at the top of her preference list.
\item
The difference in $w$'s preference list is only in the promotion of a man
$m$ who strictly prefer $\matching'(m)$ over $w$.
\item
$\tilderesultmatching(w)=\matching(w)\ne\matching'(w)$
\end{parts}
\item\label{build-cor-opt-then-opt}
$\bigl\{w \in W \mid \matching'(w)=\matching(w)\bigr\}
\subsetneq
\bigl\{w \in W \mid \tilderesultmatching(w)=\matching(w)\bigr\}$.
\item\label{build-cor-reject-then-opt}
Every $w \in W$ who rejects any man during $\tilderun$
satisfies $\tilderesultmatching(w)=\matching(w)$.
\item\label{build-cor-blacklist-then-opt}
No woman's blacklist differs between $\prefs{W}$ and $\tildeprefs{W}$, except
perhaps for that of $\tilde{w}$, each of the men $m$ she blacklists
according to $\tildeprefs{W}$
satisfying $\tilderesultmatching(m)=\matching(m)\ne\matching'(m)$.
\end{parts}
\end{lemma}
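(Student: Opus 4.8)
\emph{Plan.} I would build $\tildeprefs{W}$ in three stages, keeping throughout the invariant that the current profile is $(\matching'\xrightarrow{\prefs{M}}\matching)$-compatible with cycle trigger $\tilde{w}$. \textbf{Stage~1:} replace $\tilde{w}$'s list in $\prefs{W}$ by the singleton $\matching(\tilde{w})$, blacklisting every other man, and leave the other women alone; by \cref{build-base} this is $(\matching'\xrightarrow{\prefs{M}}\matching)$-compatible with trigger $\tilde{w}$, its rejection cycle is $\abstractcycle{\tilde{w}}$, and by \cref{build-base-opt} every woman of $\abstractcycle{\tilde{w}}$ (hence $\tilde{w}$) and every $w$ with $\matching'(w)=\matching(w)$ is already ``satisfied'' (matched to her $\matching$-partner in this profile's run from initial state $\matching'$). \textbf{Stage~2:} while the current profile admits a not-yet-satisfied woman $\hat{w}$ (equivalently, by \crefpart{pm-compatible}{stay-or-opt}, one matched to $\matching'(\hat{w})\neq\matching(\hat{w})$) who rejects some man during the run, fix one such rejected man $\tilde{m}$ and apply \cref{build-step} to $\hat{w}$, keeping the rest of $\hat{w}$'s list in its previous order (so her blacklist is unchanged). \textbf{Stage~3:} enlarge $\tilde{w}$'s list, placing every man other than $\matching'(\tilde{w})$ and $\matching(\tilde{w})$ below $\matching(\tilde{w})$ in arbitrary order, so that $\tilde{w}$'s blacklist shrinks to $\{\matching'(\tilde{w})\}$; this yields $\tildeprefs{W}$.

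\emph{Termination and the ``cheap'' conclusions.} By \crefpart{build-step}{compatible} each Stage-2 step preserves the invariant and the trigger, and by \crefpart{pm-compatible-properties}{opt-if-reject-cycle} the chosen $\hat{w}$ lies off the current rejection cycle (so $\hat{w}\neq\tilde{w}$), with $\tilde{m}\neq\matching'(\hat{w})$ (she rejects $\tilde{m}$ in favour of a strictly better man and ends at $\matching'(\hat{w})$), so $\tilde{m}$ strictly prefers $\matching'(\tilde{m})$ over $\hat{w}$, whom he reaches only after serenading $\matching'(\tilde{m})$. By \cref{build-opt-increase} each step strictly enlarges the satisfied set (it gains all of $\abstractcycle{\hat{w}}$, in particular $\hat{w}$), so the loop halts in at most $n$ steps, and each woman is picked at most once (once satisfied she stays so, by \cref{build-opt-increase}). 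This yields \crefpart{build-cor}{reject-then-opt} at once (it is the halting condition; Stage~3 will not alter the run), \crefpart{build-cor}{opt-then-opt} (every $w$ with $\matching'(w)=\matching(w)$ is satisfied from Stage~1 on while $\tilde{w}$ is satisfied with $\matching'(\tilde{w})\neq\matching(\tilde{w})$, so the inclusion is strict), and \crefpart{build-cor}{diff-then-opt} (a woman $w\neq\tilde{w}$ whose list changed is precisely one that was picked; her sole change is the promotion of the man $\tilde{m}$ she rejected, who strictly prefers $\matching'(\tilde{m})$ over her; and she is satisfied afterwards, necessarily with $\matching(w)\neq\matching'(w)$, since a fixed point, being satisfied already, is never picked).

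\emph{Trimming $\tilde{w}$'s blacklist.} The crux of Stage~3 is that every cycle $\abstractcycle{\hat{w}}$ spliced into the rejection cycle in Stage~2 is disjoint from the then-current rejection cycle (by \crefpart{pm-compatible-properties}{disjoint-union} and \crefpart{abstract-cycle-properties}{disjoint-or-rotate}, as in the proof of \cref{build-step}), hence $\tilde{w}$-free; since by \crefpart{build-step}{cycle} a splice merely inserts such a cycle along an edge of the current one, an induction (base: \crefpart{abstract-cycle-properties}{simple}) shows $\tilde{w}$ remains the unique first-and-last vertex of the merged rejection cycle and occurs nowhere else in it. Therefore the only men ever serenading under $\tilde{w}$'s window during the Stage-2 run are $\matching'(\tilde{w})$, rejected on night~$1$, and $\matching(\tilde{w})$, accepted when the cycle closes, with no one serenading her in between. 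Moving every other man into $\tilde{w}$'s list below $\matching(\tilde{w})$ thus leaves the whole run intact, by a routine night-by-night coupling (the same men reach $\tilde{w}$'s window in either profile, and she rejects $\matching'(\tilde{w})$ and accepts her top choice $\matching(\tilde{w})$ in both). This gives \crefpart{build-cor}{opt} (with $\tilderesultmatching(\tilde{w})=\matching(\tilde{w})$ inherited from Stages~1--2): the men following $\matching(\tilde{w})$ in $\tilde{w}$'s new list are exactly those not serenading under her window. And it gives \crefpart{build-cor}{blacklist-then-opt}: only $\tilde{w}$'s blacklist ever changed, and writing $m_1\eqdef\matching'(\tilde{w})$ (the first man of the rejection cycle), \crefpart{pm-compatible-properties}{opt-if-reject-cycle} gives $\tilderesultmatching(m_1)=\matching(m_1)$, while $\matching(m_1)\neq\tilde{w}=\matching'(m_1)$ because $\matching(\tilde{w})\neq\matching'(\tilde{w})$.

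\emph{Main obstacle.} Most of the technical weight sits in \cref{build-step} (preserving $(\matching'\xrightarrow{\prefs{M}}\matching)$-compatibility across a merge and identifying the merged cycle). What is genuinely new and delicate here is (i) driving the merging loop to a profile in which \emph{every} rejecting woman is satisfied, which requires the strictly-growing satisfied set and the once-per-woman bookkeeping, and (ii) the Stage-3 trimming — verifying that $\tilde{w}$ is serenaded by exactly $\matching'(\tilde{w})$ and $\matching(\tilde{w})$ (which hinges on $\tilde{w}$ staying an endpoint of the ever-growing rejection cycle) and that swapping her near-empty list for an almost-complete one does not disturb the run.
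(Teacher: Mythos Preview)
Your Stages~1 and~2 are correct and match the paper's construction exactly. The gap is in Stage~3.

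You argue that $\tilde{w}$ occurs only as the first-and-last vertex of the merged rejection cycle, and deduce that the only men serenading under $\tilde{w}$'s window during the run are $\matching'(\tilde{w})$ and $\matching(\tilde{w})$. The first claim is correct; the deduction is not. The vertices $w_i$ of the rejection cycle (\cref{reject-cycle}) record only those women who \emph{provisionally accept} the travelling man $m_{i-1}$; they do \emph{not} record women who merely reject him en route from $w_{i-1}$ to $w_i$. Since in your Stage-2 profile $\tilde{w}$ blacklists every man except $\matching(\tilde{w})$, any $m_i$ whose preference list places $\tilde{w}$ between $w_i$ and $w_{i+1}$ will stop by $\tilde{w}$'s window and be rejected there --- without $\tilde{w}$ appearing in the cycle. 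If you then un-blacklist such an $m_i$ (placing him below $\matching(\tilde{w})$), $\tilde{w}$ has no provisional partner when $m_i$ arrives, so she \emph{accepts} him, the chain halts, and $\tilderesultmatching(\tilde{w})=m_i\neq\matching(\tilde{w})$, breaking \crefpart{build-cor}{opt}.

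This is not hypothetical: the tightness construction of \cref{inconspicuous-tight} (with $n_b=1$, $l_1=n-1$) is exactly such an instance. There each $m_{n-2},m_{n-3},\ldots,m_1$ passes through $\tilde{w}=w_0$'s window before reaching his $\matching$-partner, and $w_0$ must blacklist all of them; your Stage-3 profile would have $w_0$ accept $m_{n-2}$ and stop. More generally, if your trimming were valid, $\tilde{w}$'s blacklist would always have size~$1$, contradicting \cref{inconspicuous-tight}. The correct trimming (and what the paper does) is to shrink $\tilde{w}$'s blacklist to precisely the set of men she actually rejects during the Stage-2 run; this leaves the run unchanged by construction, and every such man lies on the rejection cycle, so \crefpart{pm-compatible-properties}{opt-if-reject-cycle} gives \crefpart{build-cor}{blacklist-then-opt}.
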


\begin{proof}
Let $\tildeprefs{W}$ be the profile of preferences obtained by
$\prefs{W}$ by modifying the preference list of $\tilde{w}$
to consist solely of $\matching(\tilde{w})$, with all other men blacklisted.
By \cref{build-base}, $\tildeprefs{W}$ is
$(\matching'\xrightarrow{\prefs{M}}\matching)$-compatible.
As long as there exits a woman $w \in W$ who rejects
a man during $\tilderun$ but s.t.\ $\tilderesultmatching(w) \ne \matching(w)$,
we may modify $\tildeprefs{W}$ using \cref{build-step} (without modifying
$w$'s blacklist), keeping it $(\matching'\xrightarrow{\prefs{M}}\matching)$-compatible and only enlarging $\tilderejectcycle$. We thus repeatedly
apply \cref{build-step} to modify $\tildeprefs{W}$ until no such woman exists.
We note that this process stops after at most $\left\lfloor\frac{n-1}{2}\right\rfloor$ applications
of \cref{build-step},
as by \cref{build-opt-increase} the number
of women $w \in W$ for whom $\tilderesultmatching(w)=\matching(w)$ increases
by at least $2$ following each application of \cref{build-step} (and 
by \cref{build-base-opt} equals at
least $2$ immediately after applying \cref{build-base}).
Finally, we shorten the blacklist of $\tilde{w}$ in $\tildeprefs{W}$ to contain
only the men that she actually rejects during $\tilderun$
($\tilderun$ is thus completely unaffected
by this).

By \cref{build-base-opt,build-opt-increase}, we have that
that \cref{build-cor-opt,build-cor-opt-then-opt} hold. (The second part
of \cref{build-cor-opt} follows from the final stage of the construction
of $\tildeprefs{W}$.)
We now show that \cref{build-cor-diff-then-opt} holds. 
Let $w \in W \setminus \{\tilde{w}\}$  whose preference list differs between
$\prefs{W}$ and $\tildeprefs{W}$. By our construction, $w$'s preference list
was changed by an application of
\cref{build-step} with $w$ in the role of $\tilde{w}$ (as defined there),
and therefore $\matching'(w)\ne\matching(w)$, and
by \cref{build-opt-increase},
also $\tilderesultmatching(w)=\matching(w)$. Furthermore, the change of
$w$'s preference list is in the promotion to second place
of a man $\tilde{m}$ who,
by \crefpart{build-step}{cycle},
serenades under $w$'s window during $\tilderun$ following his
rejection by another woman; thus, $\tilde{m}$ strictly prefers
$\matching'(m)$, under whose window he serenades on the first night
of $\tilderun$, over $w$, as required.

\cref{build-cor-reject-then-opt} follows by construction, as
it was no longer possible to apply \cref{build-step} when the
construction stopped.

It thus remains to prove \cref{build-cor-blacklist-then-opt}.
By construction, the only blacklist that was changed during construction
of $\tildeprefs{W}$ from $\prefs{W}$ is indeed that of $\tilde{w}$.
Furthermore, by the last step of the construction, every man blacklisted
by $\tilde{w}$ is rejected by her during $\tilderun$. Thus, every such
man is in $\tilderejectcycle$ and
by \crefpart{pm-compatible-properties}{opt-if-reject-cycle} the proof is complete.
\end{proof}

\begin{cor}
Under the conditions of \cref{build-cor}, denote the size of $\tilde{w}$'s
blacklist, according to $\tildeprefs{W}$, by $b$. At least $b+1$ men $m$
satisfy $\tilderesultmatching(m)=\matching(m)\ne\matching'(m)$.
\end{cor}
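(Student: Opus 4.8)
The plan is to exhibit $b+1$ explicit witnesses: the $b$ men that $\tilde{w}$ blacklists according to $\tildeprefs{W}$, together with the man $\matching(\tilde{w})$.

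First I would invoke \crefpart{build-cor}{blacklist-then-opt}: it guarantees that every man $m$ in $\tilde{w}$'s blacklist according to $\tildeprefs{W}$ satisfies $\tilderesultmatching(m)=\matching(m)\ne\matching'(m)$. Since a blacklist is a set, there are exactly $b$ such men, and they are pairwise distinct; this already supplies $b$ of the required witnesses.

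Next I would produce the $(b+1)$-th witness, $m_0 \eqdef \matching(\tilde{w})$. By the first part of \crefpart{build-cor}{opt} we have $\tilderesultmatching(\tilde{w})=\matching(\tilde{w})$, which — since $\matching$ and $\tilderesultmatching$ are one-to-one — is the same as $\tilderesultmatching(m_0)=\tilde{w}=\matching(m_0)$. To see that $\matching'(m_0)\ne\matching(m_0)$, note that if $\matching'(m_0)=\tilde{w}$ then, $\matching'$ being one-to-one, the partner of $\tilde{w}$ under $\matching'$ is $m_0$, i.e.\ $\matching'(\tilde{w})=m_0=\matching(\tilde{w})$, contradicting the hypothesis $\matching(\tilde{w})\ne\matching'(\tilde{w})$ of \cref{build-cor}. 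Hence $m_0$ also satisfies $\tilderesultmatching(m_0)=\matching(m_0)\ne\matching'(m_0)$.

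Finally I would check disjointness, i.e.\ that $m_0$ is not one of the $b$ blacklisted men, so that the witnesses number $b+1$. This is immediate from the second part of \crefpart{build-cor}{opt}: $\matching(\tilde{w})=m_0$ remains at the top of $\tilde{w}$'s preference list in $\tildeprefs{W}$, so $\tilde{w}$ does not blacklist him. The only steps that need any care are this last disjointness observation and the one-to-one bookkeeping that turns the statement ``$\tilderesultmatching(\tilde{w})=\matching(\tilde{w})$'' into a statement about the man $m_0$ and rules out $\matching'(m_0)=\matching(m_0)$; the rest is a direct reading-off of \cref{build-cor}, so I do not anticipate a genuine obstacle here.
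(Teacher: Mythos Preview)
Your proposal is correct and follows essentially the same approach as the paper: invoke \crefpart{build-cor}{blacklist-then-opt} for the $b$ blacklisted men, then add $\matching(\tilde{w})$ as the extra witness via \crefpart{build-cor}{opt} and the hypothesis $\matching(\tilde{w})\ne\matching'(\tilde{w})$, noting he is not in the blacklist since he tops $\tilde{w}$'s list. The paper's proof is terser but makes exactly these moves.
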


\begin{proof}
By \crefpart{build-cor}{blacklist-then-opt},
all $b$ distinct men in $\tilde{w}$'s blacklist (according to $\tildeprefs{W}$)
satisfy the required condition.
Furthermore, by definition of $\tilde{w}$ and by
\crefpart{build-cor}{opt}, we have that
$\tilderesultmatching(\tilde{w})=\matching(\tilde{w})\ne\matching'(\tilde{w})$, and thus
$\tilderesultmatching(\tilde{w})$, who is by definition not blacklisted by
$\tilde{w}$ and is thus distinct from the $b$ men in her blacklist,
satisfies the required condition as well.
\end{proof}

While a considerable amount of reasoning went into proving
\cref{build-cor} (and the lemmas supporting it), and while $\prefs{W}$ is
constructed in its proof via an inductive process
na\"{\i}vely requiring a simulation of a run of the Gale-Shapley algorithm on
every iteration, we now show that $\prefs{W}$ can be calculated quite efficiently,
by leveraging the theory of cycles developed above. Indeed, without the aid
of this theory, \cref{nk} may seem quite obscure.

\begin{lemma}\label{nk-proposition}
$\tildeprefs{W}$ from \cref{build-cor} can be computed in $O(n \cdot k)$ time,
where $k$ is the number of men $m$ (alternatively, women) satisfying
 $\tilderesultmatching(m)=\matching(m)\ne\matching'(m)$.
\end{lemma}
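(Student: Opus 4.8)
The plan is to compute $\tildeprefs{W}$ without ever running the Gale-Shapley algorithm, reading it off instead from the $(\matching'\rightarrow\matching)$-cycle structure via \crefpart{build-base}{cycle} and \crefpart{build-step}{cycle}. By the construction in \cref{build-cor}, $\tildeprefs{W}$ agrees with $\prefs{W}$ except on $\tilde{w}$ and on the trigger of each application of \cref{build-step}, and each such trigger is spliced into the rejection cycle by \crefpart{build-step}{cycle}; so every modified list belongs to a woman of $\tilderejectcycle$. By \crefpart{pm-compatible-properties}{opt-if-reject-cycle}, \crefpart{pm-compatible-properties}{disjoint-union}, \cref{changed-if-reject-cycle}, and the fact that women and men are in bijection within each nontrivial $(\matching'\rightarrow\matching)$-cycle, $\tilderejectcycle$ involves exactly $k$ women and $k$ men. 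Hence it suffices to identify the women of $\tilderejectcycle$, the single man that \crefpart{build-cor}{diff-then-opt} promotes to second place in each such woman's list, and $\tilde{w}$'s blacklist. I would first prove a purely combinatorial characterization of the set $T$ of women of $\tilderejectcycle$: it is the smallest set of women that (i) contains the women of $\abstractcycle{\tilde{w}}$, (ii) is a union of $(\matching'\rightarrow\matching)$-cycles, and (iii) contains every woman $w'$ that lies strictly between $\matching'(m)$ and $\matching(m)$ in the preference list of some man $m$ with $\matching(m)\in T$ and that has $\matching'(w')\ne\matching(w')$. The point behind (iii) is that, given \crefpart{pm-compatible}{idol} and the explicit shape of the lists built in \cref{build-base,build-step}, such a $w'$ is precisely an intermediate woman whom $m$ serenades and is rejected by during the run while $w'$ is still matched to $\matching'(w')$ --- that is, a legal choice of cycle trigger for one more application of \cref{build-step}, which by \crefpart{build-step}{cycle} splices $\abstractcycle{w'}$ into the current rejection cycle and promotes $m$ to second place in $w'$'s list; conversely, the loop in \cref{build-cor} halts only once no such $w'$ remains outside the current cycle.

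Given this characterization, I would compute $T$ by a worklist that performs a breadth-first traversal of the men of $\tilderejectcycle$. Initialize $T$ and the worklist with $\abstractcycle{\tilde{w}}$, obtained in time $O(|\abstractcycle{\tilde{w}}|)$ by following $w_1=\tilde{w}$, $m_i=\matching'(w_i)$, $w_{i+1}=\matching(m_i)$ until $\tilde{w}$ recurs. While the worklist is nonempty, pop a man $m$ (so $\matching'(m),\matching(m)\in T$) and scan $m$'s preference list over the segment between $\matching'(m)$ and $\matching(m)$; for each woman $w'$ in that segment: if $w'\in T$ or $\matching'(w')=\matching(w')$, leave $w'$'s list unchanged, except that if $w'=\tilde{w}$ then add $m$ to $\tilde{w}$'s blacklist; otherwise record that $w'$'s list becomes $\matching(w'),m,\matching'(w'),\ldots$, compute $\abstractcycle{w'}$ by the same following procedure, add its women to $T$, and push its men onto the worklist. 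Finally add $\matching'(\tilde{w})$ to $\tilde{w}$'s blacklist, and output $\tildeprefs{W}$ as $\prefs{W}$ modified by the recorded changes --- each of the at most $k$ modified lists written out in time $O(n)$, every other list inherited verbatim from $\prefs{W}$.

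For correctness I would verify that this process realizes one admissible execution of \cref{build-cor}'s loop: whenever a woman $w'$ is merged, she is at that moment outside the current rejection cycle, matched to $\matching'(w')\ne\matching(w')$, and is rejected by the popped man $m$ during the corresponding partial run, so she is a legal trigger for \cref{build-step}, whose effect (by \crefpart{build-step}{cycle}) is exactly the splice-in of $\abstractcycle{w'}$ and the promotion of $m$ in her list; when the worklist empties, characterization (iii) guarantees that no legal trigger remains, so the process has terminated at $\tilderejectcycle$. Because the conclusions of \cref{build-cor} do not pin down which admissible man is promoted (any man strictly preferring $\matching'(m)$ over $w'$ will do) nor which tail ordering is used, producing some such $\tildeprefs{W}$ is all that is required. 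The running time is $O(nk)$: the cycle-following procedure is invoked only on women of $T$ and visits $\sum_{w\in T}|\abstractcycle{w}|=O(|T|)=O(k)$ entries in total; each of the $k$ men of $\tilderejectcycle$ has its preference list scanned exactly once, contributing $O(nk)$; and writing out the at most $k$ modified lists is $O(nk)$.

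The main obstacle is establishing the combinatorial characterization of $\tilderejectcycle$ together with the claim that the worklist traces an admissible execution of \cref{build-cor} --- in particular, pinning down that the women a rejection-cycle man $m$ actually rejects during the (possibly already partially merged) run are exactly those in the $\matching'(m)$-to-$\matching(m)$ segment of his preference list, regardless of the order in which merges have occurred, so that a single scan of that segment per cycle-man captures every merge that \cref{build-cor}'s loop performs. The remaining $O(nk)$ accounting is routine.
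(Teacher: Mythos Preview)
Your proposal is correct and lands on the same $O(nk)$ accounting as the paper: each man of $\tilderejectcycle$ has the segment of his preference list between $\matching'(m)$ and $\matching(m)$ scanned exactly once, and at most $k$ preference lists are rewritten. The paper's proof, however, takes a more direct route. Rather than first proving a combinatorial closure characterization of $T$ and then traversing it by BFS, the paper simply simulates $\tilderun$ itself (in the single-man-on-the-move style of \citeauthor{MW71}), applying \cref{build-step} on the fly whenever the moving man reaches a woman still in the set $\todo=\{w:\matching^{\text{current}}(w)\ne\matching(w)\}$; this is precisely the construction of \cref{build-cor} with the specific choice ``always pick the earliest-rejecting eligible woman''. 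Because \cref{build-cor} is itself phrased in terms of the run, equivalence is immediate and no separate characterization lemma is needed.

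What your route buys is the structural observation that $T$ is the least fixed point of your closure conditions (i)--(iii), and hence that the set of women (and $k$) is independent of the order in which merges are performed --- a fact the paper uses implicitly but never states. What the paper's route buys is brevity: it avoids the work you correctly flag as the main obstacle, namely verifying that the BFS order realizes an admissible execution of \cref{build-cor}'s loop. Both traversals visit the same segments; they may promote different men to second place (and hence output different but equally valid $\tildeprefs{W}$), which is fine since \cref{build-cor} is existential.
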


\newcommand{\nkinitproc}{Initialize}
\newcommand{\nkproc}{Compute\namecref{build-cor}\labelcref{build-cor}}

\algnewcommand{\variable}[1]{\ensuremath{\mathit{#1}}}
\algnewcommand{\todo}{\variable{todo}}
\algnewcommand{\shouldBlacklist}{\variable{shouldBlacklist}}
\algnewcommand{\upcoming}{\variable{currentPref}}
\algnewcommand{\provisional}{\variable{provisionalMatching}}
\algnewcommand{\swap}{\variable{provisionallyAccept}}
\algnewcommand{\false}{{\bf false}}
\algnewcommand{\true}{{\bf true}}

\begin{proof}
\cref{nk} computes $\tildeprefs{W}$, as it is equivalent
to the construction of \cref{build-cor}, each time applying
\cref{build-step} to the woman who performs the earliest rejection out
of the women $w$ for who are not eventually-matched with $\matching(w)$.

\begin{center}
\captionof{algorithm}{In-place computation of $\tildeprefs{W}$ from \cref{build-cor}, using $O(n)$ additional memory, in $O(n \cdot k)$ time. (All lines run in
$O(1)$ time, unless otherwise noted.)}
\begin{algorithmic}[1]\label{nk}

\LineComment{See \cref{nn-flat-case} for the reason for separation of the initialization.}
\Procedure{\nkinitproc}{$\matching',\matching$}\label{nk-init}
\LineComment{Initialize \todo.}
\State $\todo \gets \bigl\{w \in W \mid \matching'(w)\ne\matching(w)\bigr\}$\Comment{implemented as hash set; $O(n)$}
\EndProcedure

\Procedure{\nkproc}{$\matching',\matching,\prefs{M},\tilde{w},\tildeprefs{W}$}\label{nk-main}\Comment{$\tildeprefs{W}=\prefs{W}$ when called}
\LineComment{Initialize $\tilderun$ simulation.}
\State $\provisional \gets \matching'$\Comment{$O(n)$}
\State $\shouldBlacklist \gets$ empty hash set\Comment{the men to be blacklisted by $\tilde{w}$}
\State $\upcoming\gets$ array indexed by $M$\Comment{cache, initialized in lines \ref{nk-cache-init-1} and \ref{nk-cache-init-2}}

\LineComment{Initialize $\tildeprefs{W}$ as in \cref{build-base}.}
\State set preference list of $\tilde{w}$ in $\tildeprefs{W}$ to $\bigl(\matching(\tilde{w})\bigr)$\Comment{not required for correctness}

\LineComment{Simulate $\tilderun$, adjusting $\tildeprefs{W}$ as needed.}
\State insert $m$ into $\shouldBlacklist$
\State $\provisional[\tilde{w}] \gets \emptyset$\Comment{instead of $m$; not required for correctness}
\State $m \gets \matching'(\tilde{w})$\Comment{$m$ is the man currently ``on the move''}
\State \Call{MarkAsDoneCycleOf}{$\tilde{w}$}\Comment{see line~\ref{nk-mark-procedure}}
\State\label{nk-cache-init-1} $\upcoming[m] \gets$ pointer to $m$'s preference list right after $\tilde{w}$\Comment{$O(n)$}
\While{$m \ne \matching(\tilde{w})$ {\bf or} deref $\upcoming[m] \ne \tilde{w}$}\label{nk-main-while}\Comment{$< k \cdot (n-1)$ iterations}
\LineComment{Loop Invariant:}
\LineComment{$\todo = \bigl\{w \in W \mid \tilderesultmatching(w)\ne\matching(w)\bigr\}$.}
\State $w \gets$ deref $\upcoming[m]$\Comment{the woman currently approached by $m$}
\State advance $\upcoming[m]$
\State let $\swap$ be a boolean\Comment{should $w$ provisionally accept $m$?}
\If{$w=\tilde{w}$}\Comment{by the loop condition, $m \ne \matching(\tilde{w})$}
\State insert $m$ into $\shouldBlacklist$
\State $\swap \gets \false$
\ElsIf{$m = \matching(w)$}
\State $\swap \gets \true$
\State discard $\upcoming[m]$
\ElsIf{$w \in \todo$}\Comment{$\le \left\lfloor\frac{k-1}{2}\right\rfloor$ times throughout the run}
\LineComment{Update $\tildeprefs{W}$ as in \cref{build-step}.}
\State promote $m$ to be second on $w$'s preference list in $\tildeprefs{W}$\Comment{$O(n)$}
\State \Call{MarkAsDoneCycleOf}{$w$}\Comment{see line~\ref{nk-mark-procedure}}
\State $\swap \gets \true$
\Else
\State $\swap \gets \false$
\EndIf
\If{$\swap$}
\State swap values of $m$ and $\provisional[w]$
\If{undefined $\upcoming[m]$}\Comment{$< k$ times throughout the run}
\State\label{nk-cache-init-2} $\upcoming[m] \gets$ pointer to $m$'s preference list right after $w$\Comment{$O(n)$}
\EndIf
\EndIf\Comment{else, do not provisionally accept (i.e. $m$ is rejected by $w$)}
\EndWhile
\State $\provisional[\tilde{w}] \gets m$
\LineComment{Update $\tilde{w}$'s blacklist per the final step of the construction from \cref{build-cor}.}
\State set $\tilde{w}$'s blacklist to be $\shouldBlacklist$ (keeping $\matching(\tilde{w})$ most-preferred)\Comment{$O(n)$}
\LineComment{At this point, $\provisional=\tilderesultmatching$.}
\State \Return{\provisional}
\EndProcedure

\LineComment{Removes all women in $\abstractcycle{w}$ from \todo.}
\Procedure{MarkAsDoneCycleOf}{$w$}\label{nk-mark-procedure}
\State $w' \gets w$
\Repeat\Comment{$\le n$ iterations throughout the run of \Call{\nkproc}{}}
\State remove $w'$ from \todo
\State $w' \gets \matching(\matching'(w'))$;
\Until{$w'=w$}
\EndProcedure

\end{algorithmic}
\end{center}
We note that as there are at most $k$ rejected men (by
\crefpart{pm-compatible-properties}{opt-if-reject-cycle}), each newly-serenading at most $n-1$ times,
the loop starting at line~\ref{nk-main-while}
indeed iterates less than $k \cdot (n-1)$ times.
\end{proof}

Before proving \cref{inconspicuous}, we first prove a simpler special
case thereof, in which each man approaches a distinct woman on the first night,
e.g.\ as in \cref{men-forcing}.
(We reference and reuse significant portions of this proof in the more-intricate
proof 
of the general case of \cref{inconspicuous} below.)

\begin{thm}\label{flat-case}
\cref{inconspicuous} holds for the special case in which
each man approaches a distinct woman on the first night.
(In this case, in the absence of nonempty blacklists, the algorithm would
terminate at the end of the first night.) Furthermore, in this case $\prefs{W}$
can be computed online in $O(n^2)$ time.
\end{thm}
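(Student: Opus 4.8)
The plan is to execute the ``cycle-by-cycle'' construction from the proof outline, with \cref{build-cor} supplying the single inductive step. I would let $\matching^0$ be the matching pairing each man with the woman he serenades on the first night; by the flatness hypothesis the map $m\mapsto\matching^0(m)$ (= $m$'s top choice) is a bijection $M\to W$, so $\matching^0$ is perfect, and since $\matching^0(m)$ is $m$'s top choice while $\matching$ is $M$-rational, $\prefs{M}$ is $(\matching^0\rightarrow\matching)$-compatible. Take $\prefs{W}^0$ to be the blacklist-free profile in which each $w$ lists $\matching(w)$, then $\matching^0(w)$ (when distinct), then the rest of $M$ arbitrarily; this is of the form required as input to \cref{build-cor} (with $\matching':=\matching^0$), and one checks $\menopt(\prefs{W}^0,\prefs{M})=\matching^0$ (the run halts after one night). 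Then iterate: while $\matching^i\ne\matching$, choose any $\tilde w_i$ with $\matching^i(\tilde w_i)\ne\matching(\tilde w_i)$, apply \cref{build-cor} with $\matching':=\matching^i$ and trigger $\tilde w_i$ to get $\prefs{W}^{i+1}:=\tildeprefs{W}$ and $\matching^{i+1}:=\tilderesultmatching$; once $\matching^i=\matching$, output $\prefs{W}:=\prefs{W}^i$.

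The crux is to show the iteration is well founded, i.e.\ that \cref{build-cor} applies at each step: $\prefs{M}$ must stay $(\matching^i\rightarrow\matching)$-compatible and $\prefs{W}^i$ must keep the required form --- $\matching(w)$ first on every list, and $\matching^i(w)$ second whenever $\matching^i(w)\ne\matching(w)$. Compatibility propagates via \cref{still-compatible}, since $\tildeprefs{W}$ still has $\matching(w)$ atop every list (\crefpart{build-cor}{opt} and \crefpart{build-cor}{diff-then-opt}, the remaining lists being untouched). For the form: the proof of \cref{build-cor} shows $\tildeprefs{W}$ is $(\matching^i\xrightarrow{\prefs{M}}\matching)$-compatible, so \crefpart{pm-compatible}{stay-or-opt} yields $\matching^{i+1}(w)\in\{\matching^i(w),\matching(w)\}$ for every $w$; hence once a woman is matched with $\matching(w)$ she stays so, and no woman whose list was modified at an earlier step --- be she a past trigger (\crefpart{build-cor}{opt}) or another woman (\crefpart{build-cor}{diff-then-opt}) --- can have $\matching^{i+1}(w)\ne\matching(w)$. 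So a woman $w$ with $\matching^{i+1}(w)\ne\matching(w)$ has never had her list altered, whence it is still $(\matching(w),\matching^0(w),\ldots)$ with $\matching^0(w)\ne\matching(w)$; a short induction (again from $\matching^{j+1}(w)\in\{\matching^j(w),\matching(w)\}$) gives $\matching^{i+1}(w)=\matching^0(w)$, so her list is $(\matching(w),\matching^{i+1}(w),\ldots)$, as needed. I would also record here that by \crefpart{build-cor}{blacklist-then-opt} a woman gets a nonempty blacklist only as a trigger (so $\tilde w_0,\tilde w_1,\ldots$ are distinct), and each man put into $\tilde w_i$'s blacklist satisfies $\matching^{i+1}(m)=\matching(m)\ne\matching^i(m)$.

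Termination and the four conclusions then follow. By \crefpart{build-cor}{opt-then-opt} the set $\{w:\matching^i(w)=\matching(w)\}$ grows strictly each step, so the loop halts at some $i^*$ with $\matching^{i^*}=\matching$; by the above, $n_b:=i^*$ equals the number of nonempty blacklists in $\prefs{W}$. By \cref{build-base-opt,build-opt-increase} this set grows by at least $2$ per step, so $2n_b\le n$, i.e.\ $n_b\le\lfloor n/2\rfloor$ (part~3). The men added to $\tilde w_i$'s blacklist become matched with $\matching(\cdot)$ and stay so, hence blacklists of distinct triggers are disjoint (part~2). Part~1 follows because every woman's top choice is $\matching(w)$: the women-proposing run of the algorithm yields $\matching$ (each man receives his single proposal, from the woman whose list he tops, namely $\matching(m)$, and accepts it, being unblacklisted by $M$-rationality), while the men-proposing run yields $\matching^{i^*}=\matching$ --- using \citeauthor{Dubins-Freedman}'s timing-invariance of the outcome to equate this with $\menopt(\prefs{W},\prefs{M})$ --- so by \cref{women-worst} $\matching$ is the unique stable matching. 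For part~4, the corollary following \cref{build-cor} shows the size-$b_i$ blacklist added at step $i$ forces at least $b_i+1$ men $m$ with $\matching^{i+1}(m)=\matching(m)\ne\matching^i(m)$; these ``newly corrected'' sets are pairwise disjoint with union of size at most $n$, so $\sum_i(b_i+1)\le n$ and the combined blacklist size $\sum_i b_i$ is at most $n-n_b$.

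Finally, complexity: by \cref{nk-proposition} step $i$ costs $O(n\cdot k_i)$ with $k_i$ the number of men newly corrected there; since $\sum_i k_i\le n$, since there are $n_b\le n/2$ steps each of $O(n)$ overhead, and since the ``to do'' set of \cref{nk} is maintained across calls, the total is $O(n^2)$. The computation is \emph{online} because \cref{nk} merely simulates the run, reading each man's list entry by entry and choosing each woman's preference/blacklist action from the run history alone --- the sense discussed in \cref{blacklists}. I expect the main obstacle to be the invariant in the second paragraph --- that after $i$ steps every woman still off her target is untouched and hence still lists $\matching^i(w)=\matching^0(w)$ second --- since this is exactly what permits re-applying \cref{build-cor}; everything else is bookkeeping over \cref{build-cor} and its corollaries.
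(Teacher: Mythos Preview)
Your approach mirrors the paper's: iterate \cref{build-cor} from the first-night matching $\matching^0$, maintain the form of $\prefs{W}^i$ so that \cref{build-cor} re-applies, and read off the conclusions. Your invariants and your arguments for Parts~2--4 are sound and track the paper's Properties~2--5 and~7--8 (your counting for Part~4 via the corollary after \cref{build-cor} is a cosmetic variant of the paper's use of Properties~7--8).

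There is, however, a real gap in Part~1. You define $\matching^{i+1}$ as the output of the \emph{partial} run from $\matching^i$ under $\prefs{W}^{i+1}$, but for Part~1 you need $\menopt(\prefs{W},\prefs{M})=\matching$, i.e.\ that the run \emph{from scratch} under the final $\prefs{W}$ yields $\matching$. Your one-line appeal to \citeauthor{Dubins-Freedman} is the right tool but does not by itself identify the concatenation of partial runs (each under a \emph{different} $\prefs{W}^j$) with a single from-scratch run under $\prefs{W}^{i^*}$. The paper closes this by maintaining, inductively, its Property~1, $\matching^i=\menopt(\prefs{W}^i,\prefs{M})$: it schedules the from-scratch run under $\prefs{W}^{i+1}$ as a first part indistinguishable from $\ifullrun$ followed by the step's partial run. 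That indistinguishability hinges on a further invariant you do not track (the paper's Property~6): no woman with $\matching^i(w)\ne\matching(w)$ rejects anyone during $\ifullrun$. Without it, the newly introduced blacklist of $\tilde w_i$ --- which may contain several men, not just $\matching^0(\tilde w_i)$ --- could in principle cause rejections in the ``first part'' that never occurred in $\ifullrun$, breaking the identification. Properties~1 and~6 are intertwined (each is used to propagate the other), so this is not the bookkeeping you anticipate; it is precisely what certifies that your construction outputs the $M$-optimal matching rather than merely some matching reachable through a chain of partial runs.
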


\newcommand{\nnflatcaseproc}{Compute\namecref{inconspicuous}\labelcref{inconspicuous}SpecialCase}

\begin{proof}
Let $\prefs{M}$ and $\matching$ be as in
\cref{inconspicuous}, but s.t.\ the top choices of all men are distinct.
We inductively define a sequence of profiles $(\prefs{W}^i)_{i=1}^d$
of preference lists for $W$
and a sequence of matchings $(\matching^i)_{i=1}^d$, for $d$ to be defined below,
satisfying the following properties for every
$i\ge1$.
\begin{properties}
\item\label{flat-case-matching}
$\matching^i=\menopt(\prefs{W}^i,\prefs{M})$.
\item\label{flat-case-increase}
If $i>1$, then
$\bigl\{w \in W \mid \matching^{i-1}(w)=\matching(w)\bigr\}
\subsetneq
\bigl\{w \in W \mid \matching^i(w)=\matching(w)\bigr\}$.
\item\label{flat-case-compatible}
$\prefs{M}$ is $(\matching^i\rightarrow\matching)$-compatible.
\item\label{flat-case-idol}
Each $w \in W$ has $\matching(w)$ first on her preference list according to $\prefs{W}^i$.
\item\label{flat-case-second-rank}
Each $w \in W$ for whom $\matching^i(w)\ne\matching(w)$ has
$\matching^i(w)$ second on her preference list (after $\matching(w)$)
according to $\prefs{W}^i$.
\item\label{flat-case-reject-then-opt}
Each $w \in W$ for whom $\matching^i(w)\ne\matching(w)$
does not reject any man during $\ifullrun$.
\item\label{flat-case-no-double-blacklist}
No man appears in more than one blacklist, and every blacklisted man $m\in M$
satisfies $\matching^i(m)=\matching(m)$.
\item\label{flat-case-no-blacklist-opt}
For each $w \in W$ who has a nonempty blacklist, $\matching(w)$ appears in no blacklist
and satisfies $\matching(\matching(w))=\matching^i(\matching(w))$.
\end{properties}

Base:
Denote by $\matching^1$ the (provisional) matching describing the first night of the
algorithm. As $\matching$ is $M$-rational, and as each
$m \in M$ weakly prefers $\matching^1(m)$ over any $w \in W$, $\prefs{M}$ is
$(\matching^1\rightarrow\matching)$-compatible.
Denote by $\prefs{W}^1$ a profile of preference lists for $W$ according
to which each $w \in W$ has preference list starting with $\matching(w)$,
followed immediately by $\matching^1(w)$ (if $\matching^1(w)\ne\matching(w)$),
followed by all other men in arbitrary order.
We note that as all blacklists in $\prefs{W}$ are empty, no rejections occur in $\ofullrun$.
In particular, we thus have $\menopt(\prefs{W}^1,\prefs{M})=\matching^1$. \cref{flat-case-no-double-blacklist,flat-case-no-blacklist-opt} follow immediately as all the blacklists in $\prefs{W}^1$ are empty.

Step: Assume that $\matching^i$ and $\prefs{W}^i$ have been defined for some $i\ge 0$.
If $\matching^i=\matching$, we conclude the inductive process and
set $d\eqdef i$ and $\prefs{W}\eqdef\prefs{W}^d$.
Otherwise, let $\tilde{w}^{i+1}$ be a woman s.t.\ $\matching^i(\tilde{w}^{i+1}) \ne \matching(\tilde{w}^{i+1})$.
We denote by $\prefs{W}^{i+1}$ the profile of preference lists for $W$
obtained by use of \cref{build-cor} when applied to
$\matching$, $\prefs{W}^i$, $\prefs{M}$, $\matching'\eqdef\matching^i$ and $\tilde{w}\eqdef\tilde{w}^{i+1}$. Furthermore, we set $\matching^{i+1}\eqdef\iincresultmatching$.
\cref{build-cor} is indeed applicable due to
\cref{flat-case-idol,flat-case-compatible,flat-case-second-rank} for $i$.
We now show that \cref{flat-case-matching,flat-case-increase,flat-case-compatible,flat-case-idol,flat-case-second-rank,flat-case-reject-then-opt,flat-case-no-double-blacklist,flat-case-no-blacklist-opt} hold for $i+1$.

Consider the following timing for $\incfullrun$: In the first part of the run,
denoted by $\firstip$,
the algorithm runs normally, except that the rejection of
$\matching^i(\tilde{w}^{i+1})$ by $\tilde{w}^{i+1}$
does not (yet) take place.
In the second part of the run, denoted by $\secondip$,
after $\firstip$ converges,
$\tilde{w}^{i+1}$'s
rejection of $\matching^i(\tilde{w}^{i+1})$ takes place and the algorithm runs
until it converges once again. As the outcome of the Gale-Shapley algorithm is
invariant under timing changes~\citep{Dubins-Freedman}, this timing also
yields $\menopt(\prefs{W}^{i+1},\prefs{M})$. We claim that $\firstip$
is indistinguishable from $\ifullrun$, while $\secondip$ is indistinguishable
from $\iincrun$.
By \cref{flat-case-reject-then-opt} for $i$, $\tilde{w}^{i+1}$ does not reject
any man during $\ifullrun$, and thus any change to her preference list has
no effect on $\ifullrun$, as long as any blacklist-induced rejection of
$\matching^i(\tilde{w}^{i+1})$ (the only man serenading under $\tilde{w}^{i+1}$'s window during
$\ifullrun$) is deferred.
By \crefpart{build-cor}{diff-then-opt},
the only difference in the preference list of
any other woman $w\ne\tilde{w}^{i+1}$ between
$\prefs{W}^i$ (which yields $\ifullrun$) and $\prefs{W}^{i+1}$ (which
yields $\firstip$), is in the possible promotion of men $m$ who prefer
$\matching^i(m)$ over $w$. As by \cref{flat-case-matching} for $i$,
such men never serenade under $w$'s window
during $\ifullrun$, such a change has no effect on $\ifullrun$ either.
We thus indeed have that $\firstip$
is indistinguishable from $\ifullrun$,
and hence concludes with the provisional matching
$\menopt(\prefs{W}^i,\prefs{M})=\matching^i$; thus,
$\secondip$ is indistinguishable from
$\iincrun$, by definition of the latter. Therefore,
$\menopt(\prefs{W}^{i+1},\prefs{M})=\iincresultmatching$, and
\cref{flat-case-matching} holds for $i+1$.

\cref{flat-case-increase} for $i+1$ follows directly from
\crefpart{build-cor}{opt-then-opt}.
By \cref{flat-case-idol} for $i$, and by
\cref{build-cor}(\labelcref{build-cor-opt,build-cor-diff-then-opt}),
we have that \cref{flat-case-idol} holds for $i+1$
as well.
\cref{flat-case-compatible} for $i+1$ follows from \cref{flat-case-compatible}
for $i$ and from \cref{flat-case-idol} for $i+1$, by \cref{still-compatible}
and as $\matching^{i+1}=\iincresultmatching$.

Let $w \in W$ s.t.\ $\matching^{i+1}(w) \ne \matching(w)$; 
thus, by \crefpart{build-cor}{opt}, $w \ne \tilde{w}^{i+1}$. By
\cref{flat-case-increase} for $i+1$, $\matching^{i}(w) \ne \matching(w)$ as well.
Thus, by \crefpart{build-cor}{diff-then-opt}, we have
that \cref{flat-case-second-rank} for $i+1$ follows
from \cref{flat-case-second-rank} for $i$.
\cref{flat-case-reject-then-opt} for $i+1$ follows from
\cref{flat-case-reject-then-opt} for $i$ (for $\firstip$),
and from \crefpart{build-cor}{reject-then-opt} (for $\secondip$).

To prove \cref{flat-case-no-double-blacklist} for $i+1$ from
\cref{flat-case-no-double-blacklist} for $i$, we must show that every
newly-blacklisted man $m$ is newly-blacklisted by exactly one woman,
is not blacklisted in $\prefs{W}^i$, and
satisfies $\matching^{i+1}(m)=\matching(m)$. Indeed,
by \crefpart{build-cor}{blacklist-then-opt}, any newly-blacklisted man $m$
is newly-blacklisted only by $\tilde{w}^{i+1}$ and satisfies $\matching^{i+1}(m)=\matching(m) \ne \matching^i(m)$. Thus, in particular, by \cref{flat-case-no-double-blacklist} for $i$, any such $m$ is not blacklisted in $\prefs{W}^i$.

Finally, since by \crefpart{build-cor}{blacklist-then-opt} no
blacklist is changed but $\tilde{w}^{i+1}$'s,
in order to prove \cref{flat-case-no-blacklist-opt} for $i+1$ given
\cref{flat-case-no-blacklist-opt} for $i$,
it is enough to show that no woman
blacklists $\matching(\tilde{w}^{i+1})$ in $\prefs{W}^i$, and that there
exists no woman $w$ with a nonempty blacklist in $\prefs{W}^i$ s.t.\
$\matching(w)$ is blacklisted by $\tilde{w}^{i+1}$ in $\prefs{W}^{i+1}$.
The former holds since
by definition of $\tilde{w}^{i+1}$ we have $\matching^i(\matching(\tilde{w}^{i+1}))
\ne \tilde{w}^{i+1}$, and thus by \cref{flat-case-no-double-blacklist} for $i$,
$\matching(\tilde{w}^{i+1})$ is not blacklisted in $\prefs{W}^i$. To show 
the latter, let $w$ be a woman with a nonempty blacklist in $\prefs{W}^i$. By
\cref{flat-case-no-blacklist-opt} for $i$, $\matching^i(\matching(w))=\matching(\matching(w))$,
and thus, by \crefpart{build-cor}{blacklist-then-opt},
$\matching(w)$ is not blacklisted by $\tilde{w}^{i+1}$ in $\prefs{W}^{i+1}$
and the proof of \cref{flat-case-no-blacklist-opt} for $i+1$ is complete. Thus,
the proof by induction is complete, as the process stops by \cref{flat-case-increase}, by finiteness of $W$.

We conclude the proof by showing that $\prefs{W}=\prefs{W}^d$
satisfies the conditions of \cref{inconspicuous}.
By definition of $d$,
$\matching^d=\matching$; thus, by \cref{flat-case-matching} for $i=d$,
$\menopt(\prefs{W},\prefs{M})=\matching$. Furthermore, the $W$-optimal
stable matching is also $\matching$, by \cref{flat-case-idol} for $i=d$.
As the $W$-optimal and $M$-optimal stable
matchings coincide, and as the latter is also the $W$-worst stable matching, these constitute the unique stable matching.

By \cref{flat-case-no-double-blacklist} for $i=d$,
each man is blacklisted by at most one woman in $\prefs{W}$.
Let $w_1,\ldots,w_{n_b}$ be the women with nonempty blacklists in $\prefs{W}$.
By \cref{flat-case-no-blacklist-opt} for $i=d$,
$m_1=\matching(w_1),\ldots,m_{n_b}=\matching(w_{n_b})$
constitute $n_b$ distinct men not blacklisted in $\prefs{W}$. Hence, there exist
at most $|M|-n_b=n-n_b$ men blacklisted in $\prefs{W}$. Thus, and since
every man is blacklisted by at most one woman, we have that the combined
size of all blacklists is at most $n-n_b$.
As each of $w_1,\ldots,w_{n_b}$ has a blacklist of size at least 1,
we have that the combined size of all blacklists is at least $n_b$.
Combining these, we obtain $n_b \le n-n_b$,
yielding $n_b \le \left\lfloor \frac{n}{2} \right\rfloor$ and completing the proof.

\cref{nn-flat-case} follows the above construction for computing $\prefs{W}$.
This algorithm simulates the run $\fullrun$ \citep[or more precisely, an equivalent run of the implementation
of the Gale-Shapley algorithm proposed by][]{Dubins-Freedman}, while building $\prefs{W}$ online;
i.e.\ the choice of who acts next and the decisions of whom each woman prefers or blacklists at any step
depend solely on the history of the run
(and not on the not-yet-acted-upon suffixes of the men's preference lists).
Thus, if participants are not required to submit their preference lists
in advance, but rather only to dynamically act upon them, then in \citeauthor{Dubins-Freedman}'s implementation
of the algorithm, if the women can control its scheduling, \cref{nn-flat-case} constitutes
a strategy for the women
that forces $\matching$ against \emph{every profile}
$\prefs{M}$ of preference lists for~$M$.

\begin{center}
\captionof{algorithm}{On-line computation of $\prefs{W}$ from \cref{inconspicuous}, for the special case of \cref{flat-case}, using $O(n)$ additional memory, in $O(n^2)$ time. (All lines run in $O(1)$ time, unless otherwise noted.)}
\begin{algorithmic}[1]\label{nn-flat-case}

\Procedure{\nnflatcaseproc}{$\prefs{M},\matching$}
\LineComment{Initialize $\prefs{W}$ to $\prefs{W}^1$ and $\matching'$ to $\matching^1$.}
\State $\matching' \gets$ matching between $W$ and $M$\Comment{implemented e.g.\ using two arrays}
\For{$m \in M$}\Comment{$n$ iterations}
\State match $m$ in $\matching'$ with its most-preferred woman according to $\prefs{M}$
\EndFor
\State $\prefs{W}\gets$ profile of preference lists for $W$
\For{$w \in W$}\Comment{$n$ iterations}
\State set preference list of $w$ in $\prefs{W}$ to $(\matching(w),\matching'(w),\mbox{all other men})$\Comment{$O(n)$}
\EndFor
\LineComment{Initialize \todo.}
\State \Call{\nkinitproc}{$\matching',\matching$}\Comment{see \cref{nk}; $O(n)$}
\LineComment{Iterate until for every women $w$ we have $\matching(w)=\matching'(w)$.}
\While{$\todo \ne \emptyset$}
\State let $\tilde{w} \in \todo$
\LineComment{Update $\prefs{W}$ from $\prefs{W}^i$ to $\prefs{W}^{i+1}$ and $\matching'$ from $\matching^i$ to $\matching^{i+1}$.}
\State $\matching' \gets$ \Call{\nkproc}{$\matching',\matching,\prefs{M},\tilde{w},\prefs{W}$}\Comment{see \cref{nk}}
\EndWhile
\State \Return $\prefs{W}$
\EndProcedure

\end{algorithmic}
\end{center}

By \cref{nk-proposition}, each call to \Call{\nkproc}{} takes $O(n\cdot k)$ time, where $k$ is the number of women $w$ for whom $\matching'(w)$
is ``fixed'' from $\matching^1w)$ to $\matching(w)$ by this call. Thus, all calls to \Call{\nkproc}{} take $O(n^2)$ time in total.
\end{proof}

\newcommand{\nnnonflatproc}{Compute\namecref{inconspicuous}\labelcref{inconspicuous}}

The basic ideas of the proof of \cref{flat-case} are used in the proof
that we now give for \cref{inconspicuous} as well,
however this proof is somewhat more delicate, as we must choose
$\tilde{w}^{i+1}$ carefully, and more intricately
analyse the inductive step in certain cases.

\begin{proof}[Proof of \cref{inconspicuous}]
Let $\prefs{M}$ and $\matching$ be as in \cref{inconspicuous}.
We once again inductively define a sequence of profiles $(\prefs{W}^i)_{i=1}^d$ of preference lists for $W$ and a sequence of matchings $(\matching^i)_{i=1}^d$, for $d$ to be defined below, satisfying the following, slightly
modified, properties for every $i\ge1$. (Note the omission of \cref{flat-case-reject-then-opt}, which causes us significant hardship below,  and the addition of \cref{non-flat-single-competitor}.)

\begin{properties}
\item\label{non-flat-matching}
$\matching^i=\menopt(\prefs{W}^i,\prefs{M})$.
\item\label{non-flat-increase}
If $i>1$, then
$\bigl\{w \in W \mid \matching^{i-1}(w)=\matching(w)\bigr\}
\subsetneq
\bigl\{w \in W \mid \matching^i(w)=\matching(w)\bigr\}$.
\item\label{non-flat-compatible}
$\prefs{M}$ is $(\matching^i\rightarrow\matching)$-compatible.
\item\label{non-flat-idol}
Each $w \in W$ has $\matching(w)$ first on her preference list according to $\prefs{W}^i$.
\item\label{non-flat-second-rank}
Each $w \in W$ for whom $\matching^i(w)\ne\matching(w)$ has
$\matching^i(w)$ second on her preference list (after $\matching(w)$)
according to $\prefs{W}^i$.
\stepcounter{propertiesi} 
\item\label{non-flat-no-double-blacklist}
No man appears in more than one blacklist, and every blacklisted man $m\in M$
satisfies $\matching^i(m)=\matching(m)$.
\item\label{non-flat-no-blacklist-opt}
For each $w \in W$ who has a nonempty blacklist, $\matching(w)$ appears in no blacklist
and satisfies $\matching(\matching(w))=\matching^i(\matching(w))$.
\item\label{non-flat-single-competitor}
For each $w \in W$ who is serenaded-to during $\ifullrun$ solely by
$\matching(w)$, $\matching(w)$ appears in no blacklist.
\end{properties}

Base:
Denote by $\prefs{W}^0$ a profile of preference lists for $W$ according
to which each $w \in W$ has preference list starting with $\matching(w)$,
followed by all other men in arbitrary order.
Define $\matching^1=\menopt(\prefs{W}^0,\prefs{M})$.
We note that all men are matched in $\matching^1$. Indeed, let $m \in M$;
similarly to the proof of \cref{all-matched},
since by definition $m$ does not blacklist $\matching(w)$, and since $\matching(w)'s$
top choice is $m$, then $m$ must be matched by $\matching^1$. Furthermore,
we thus obtain that $m$ weakly prefers $\matching^1(m)$ over $\matching(m)$,
and as $\matching$ is by definition $M$-rational,
we have that $\prefs{M}$ is $(\matching^1\rightarrow\matching)$-compatible.
Denote by $\prefs{W}^1$ the profile of preference lists for $W$
obtained from $\prefs{W}^0$ by promoting, for each woman $w$ s.t.\
$\matching^1(w)\ne\matching(w)$, $\matching^1(w)$ to be second on $w$'s
preference list (immediately following $\matching(w)$). Since by definition
each such woman $w$ never rejects $\matching^1(w)$ during
$\zfullrun$, we have that $\zfullrun$ and $\ofullrun$ are indistinguishable.
Thus, $\matching^1=\menopt(\prefs{W}^1,\prefs{M})$.
Similarly to the induction base in the proof of \cref{flat-case},
\cref{non-flat-no-double-blacklist,non-flat-no-blacklist-opt,non-flat-single-competitor} follow immediately as all blacklists exist in $\prefs{W}^1$ are empty.
We note that unlike the scenario analysed in the proof of \cref{flat-case},
we generally have that
\cref{flat-case-reject-then-opt} from that proof
simply does not hold here, even during $\zfullrun$.

Step: Assume that $\matching^i$ and $\prefs{W}^i$ have been defined for some $i \ge 0$.
If $\matching^i=\matching$, then we conclude the inductive process and
set $d\eqdef i$ and $\prefs{W}\eqdef\prefs{W}^d$. Otherwise,
$T^i\eqdef\bigl\{w \in W \mid \matching^i(w) \ne \matching(w)\bigr\}\ne\emptyset$.
For every woman $w$, denote by $t^i(w)$ the last night of $\ifullrun$
on which any man
newly-serenades under $w$'s window. Let $\tilde{w}^{i+1} \in T^i$
with largest $t^i(\tilde{w}^{i+1})$.
(If more that one
such woman exists, choose one of them arbitrarily.)
We denote by $\prefs{W}'$ the profile of preference lists for $W$
obtained by use of \cref{build-cor}\footnote{We note that this proof,
unlike that of \cref{flat-case}, uses \crefpart{build-cor}{reject-then-opt}
solely to show that for every $w \in W$,
$\matching^{i+1}(w) \in \{\matching^i(w),\matching(w)\}$. Thus, the
use of \cref{build-cor} could be replaced by the use of a simple variant
of \cref{build-base}, rendering \cref{build-step,build-cor} unneeded.
Nonetheless, we present these lemmas for several reasons.
First, since they provide insight
into the structure underlying our solution, and
also allow for both a
much cleaner and more structured proof of \cref{flat-case} and a
more gradual introduction of the methods used in the second case of the current
proof; second, since they yield an online algorithm for the case
of \cref{flat-case}, and since
using  \cref{build-cor} in the current proof provides for improved running-time
of the resulting algorithm for many inputs, by
turning iterations of the second case of this proof into (faster)
applications of \cref{build-step} as iterations in the proof of
\cref{build-cor}.} when applied to
$\matching$, $\prefs{W}^i$, $\prefs{M}$,
$\matching'\eqdef\matching^i$ and $\tilde{w}\eqdef\tilde{w}^{i+1}$.
As in the proof of \cref{flat-case}, \cref{build-cor} is applicable due to
\cref{flat-case-idol,flat-case-compatible,flat-case-second-rank} for $i$.
We define $\matching^{i+1}\eqdef\ipresultmatching$, and
consider two separate cases when defining $\prefs{W}^{i+1}$:
\begin{itemize}
\item
If on night $t^i(\tilde{w}^{i+1})$ of $\ifullrun$ only one man serenades under $\tilde{w}^{i+1}$'s
window, then this night is the first on which any man serenades under
her window during $\ifullrun$; thus, $\tilde{w}^{i+1}$ rejects no man before or on that night,
and thus,
by definition of $\tilde{w}^{i+1}$, she rejects no man during $\ifullrun$.\footnote{An implementation note:
in fact, any woman who never rejects any man during $\ifullrun$ may be chosen
as $\tilde{w}^{i+1}$ and treated as detailed in this case, regardless of whether she
has the largest $t^i$ value or not (this condition is only needed in the
second case, in which $\tilde{w}^{i+1}$ rejects men during $\ifullrun$), resulting in an asymptotically-faster algorithm for many inputs.
The reason we nonetheless describe
our proof by choosing $\tilde{w}^{i+1}$ in the
same manner on both cases above is for the sake of simplicity, in order to better
reflect the analogies between these two cases.
}
In this case, as in the inductive step of \cref{flat-case}, we set
$\prefs{W}^{i+1}\eqdef\prefs{W}'$ (and thus
$\matching^{i+1}=\iincresultmatching$).
The proofs of \cref{non-flat-matching,non-flat-increase,non-flat-second-rank,non-flat-idol,non-flat-compatible,non-flat-no-double-blacklist,non-flat-no-blacklist-opt} for $i+1$ follow exactly as in
the inductive step in the proof of \cref{flat-case} ---
we note that these use only \cref{non-flat-matching,non-flat-increase,non-flat-second-rank,non-flat-idol,non-flat-compatible,non-flat-no-double-blacklist,non-flat-no-blacklist-opt} for $i$, and additionally
\cref{flat-case-reject-then-opt} for $i$, but only to show that
$\tilde{w}^{i+1}$ rejects
no man during $\ifullrun$, which we now have by definition of this case,
as explained above.
It thus remains to show that
\cref{non-flat-single-competitor} holds for $i+1$.
By equivalence of $\firstip$ and $\ifullrun$, any woman $w$ serenaded-to during
$\incfullrun$ solely by $\matching(w)$ is also serenaded-to during $\ifullrun$
solely by him. Thus,
\cref{non-flat-single-competitor} for $i+1$ follows from
\cref{non-flat-single-competitor} for $i$ and from \crefpart{build-cor}{blacklist-then-opt} (as $\matching^i(\matching(w))=\matching(\matching(w))$ for any such $w$).

\item
Otherwise, on night $t^i(\tilde{w}^{i+1})$ of $\ifullrun$, by definition of $t^i(\cdot)$,
more than one man serenades under $\tilde{w}^{i+1}$'s window. Denote
by $\tilde{m}$ a man rejected by her on that night. (If more than one such
man exists, choose $\tilde{m}$ arbitrarily among all such men.)
Note that by definition of $t^i(\tilde{w}^{i+1})$,
the man serenading under $\tilde{w}^{i+1}$'s window without being rejected by her on that
night is $\matching^i(\tilde{w}^{i+1})$.
Consider the following timings for $\ifullrun$:
In the first part of the run, denoted by $\firsti$, the algorithm runs normally,
except that the rejection of $\tilde{m}$ by $\tilde{w}^{i+1}$ in favour
of $\matching^i(\tilde{w}^{i+1})$ does not
take place (but on the other hand, neither does she reject $\matching^i(\tilde{w}^{i+1})$).
Thus, by definition of $\tilde{w}^{i+1}$, when $\firsti$
converges, two men ($\tilde{m}$ and $\matching^i(\tilde{w}^{i+1})$)
serenade under $\tilde{w}^{i+1}$'s window,
while each of the remaining $n-2$ men
is the only man serenading under the window of some woman. Thus,
at the end of $\firsti$, there exists a unique woman $\hat{w}$, under
whose window no man serenades. We note that by
definition of $t^i(\tilde{w}^{i+1})$, we have that $\hat{w} \notin T^i$.
In the second part of the run, denoted by
$\secondi$, after $\firsti$
converges, $\tilde{w}^{i+1}$'s rejection of $\tilde{m}$ in favour of
$\matching^i(\tilde{w}^{i+1})$ takes place
and the algorithm runs until it converges once more. Once more,
as the outcome of the Gale-Shapley algorithm is
invariant under timing changes~\citep{Dubins-Freedman}, this timing also yields
$\menopt(\prefs{W}^i,\prefs{M})$.

We denote by $\prefs{W}^{i+1}$ the profile of preference lists for $W$
obtained by applying the
following modifications to $\prefs{W}'$:
\begin{modifications}
\item\label{non-flat-mod-tilde-w}
Along the lines of \cref{build-step},
we set the preference list of $\tilde{w}^{i+1}$ to
be as in $\prefs{W}^i$, but with $\tilde{m}$ promoted to the second
place (immediately following $\matching(\tilde{w}^{i+1})$), maintaining the
internal order of all other men.
\item\label{non-flat-mod-demote}
For each $m \in M$ who satisfies 
$\matching(m)=\matching^{i+1}(m)\ne\matching^i(m)$,
perform the following for each woman $w \notin T^i$ s.t.\ $m$
(strictly) prefers $\matching^i(m)$ over $w$
and $w$ over $\matching(m)$:
\begin{modifications}
\item\label{non-flat-mod-demote-blacklist}
If $w$ has no provisional match at the end of $\firsti$ (i.e.\ if $w=\hat{w}$),
then, once again somewhat along the lines of \cref{build-step},
alter $\hat{w}$'s preferences to blacklist $m$.
\item\label{non-flat-mod-demote-demote}
Otherwise, demote $m$ on $w$'s preference list to somewhere below 
the man with whom $w$ is provisionally matched at the end of $\firsti$.
\end{modifications}
\end{modifications}

We note that \cref{non-flat-mod-demote} is well-defined.
Indeed, since for every such $m$, $\matching^i(m) \ne \matching(m)$,
and thus $\matching^i(m) \in T^i$. Thus, by definition of $\firsti$,
$m$ is provisionally matched with $\matching^i(m)$ at the end
of $\firsti$. Thus, $m$ is not provisionally matched with any $w \notin T^i$
at the end of $\firsti$, and thus \cref{non-flat-mod-demote} does
not specify the demotion of any man ``below himself'' in any preference list.

Consider the following timings, which we soon show to
be well-defined, for $\incfullrun$:
In the first part of the run, denoted by $\firstthird$,
the algorithm runs normally,
except that the rejection of $\matching^i(\tilde{w}^{i+1})$ by $\tilde{w}^{i+1}$ 
in favour of $\tilde{m}$ does not
take place (but on the other hand, neither does she reject $\tilde{m}$).
In the second part of the run, denoted by $\secondthird$, after $\firstthird$ converges,
$\tilde{w}^{i+1}$'s rejection of $\matching^i(\tilde{w}^{i+1})$ in favour of $\tilde{m}$
takes place and the algorithm runs normally once more, except that $\tilde{m}$'s
rejection by $\tilde{w}^{i+1}$ in favour of $\matching(\tilde{w}^{i+1})$
does not take place (but on the either hand, neither does she reject $\matching(\tilde{w}^{i+1})$). In the third and final
part of the run, denoted by $\thirdthird$, after $\secondthird$ converges,
the algorithm runs until it converges.
We now show that this timing
is indeed well-defined, that $\firstthird$ is indistinguishable
from $\firsti$, that $\secondthird$
has the same rejections as $\iprun$, and that $\thirdthird$
has the same rejections as $\secondi$.
(The scenario analysed above, in which on night $t^i(\tilde{w}^{i+1})$ only one man
serenades under $\tilde{w}^{i+1}$'s window, may be viewed in a sense as a special
case of the scenario discussed here, in which $\secondi$ and $\thirdthird$
are both empty.)
From this timing being
well-defined, we yet again have,
as the outcome of the Gale-Shapley algorithm is
invariant under timing changes~\citep{Dubins-Freedman},
that this timing also yields
$\menopt(\prefs{W}^{i+1},\prefs{M})$.

We indeed start by showing that $\firstthird$ is identical
to $\firsti$.
\cref{non-flat-mod-tilde-w} (when compared with the unmodified $\prefs{W}^i$)
has no effect on $\ifullrun$ before $\tilde{m}$
is rejected by $\tilde{w}^{i+1}$ in that run, and thus has no effect on $\firsti$.
As in the proof of \cref{flat-case}, by
\crefpart{build-cor}{diff-then-opt},
the only difference in the preference list of
any other woman $w\ne\tilde{w}^{i+1}$ between
$\prefs{W}^i$ and $\prefs{W}'$ is in the possible promotion of men $m$ who
prefer $\matching^i(m)$ over $w$. As by \cref{non-flat-matching} such men never
serenade under $w$'s window
during $\ifullrun$, such a change has no effect on $\ifullrun$, and thus on
$\firsti$ in particular.
Let $m$ be a man satisfying the conditions of \cref{non-flat-mod-demote};
as \cref{non-flat-mod-demote} only alters the preference, regarding $m$,
of women that $m$ prefers less than $\matching^i(m)$, who, as explained above,
is his provisional match at the end of $\firsti$,
$m$ does not serenade under any of their windows during $\firsti$
and thus this modification has no effect on $\firsti$ either.
By all of the above, as long as $\tilde{w}^{i+1}$ does not reject
$\tilde{m}$ nor $\matching^i(\tilde{w}^{i+1})$, there is no difference
between $\ifullrun$ and
$\incfullrun$. By definition of $\tilde{m}$ and $\matching^i(\tilde{w}^{i+1})$, deferring the
rejection of $\tilde{m}$ by $\tilde{w}^{i+1}$ achieves this effect in $\ifullrun$;
by \cref{non-flat-mod-tilde-w}, deferring the rejection of $\matching^i(\tilde{w}^{i+1})$ by $\tilde{w}^{i+1}$ achieves this effect
in $\incfullrun$. Thus, we have shown
that $\firstthird$ is well-defined and indistinguishable from $\firsti$.

We move on to showing that $\secondthird$ has the same rejections
as $\iprun$. By definition of $\tilde{w}^{i+1}$, each 
$w \in T^i \setminus \{\tilde{w}^{i+1}\}$ is
provisionally matched with $\matching^i(w)$ (with whom she is provisionally
matched on the first night of $\iprun$) at the end of $\firsti$,
and thus also at the end of $\firstthird$. Furthermore,
none of \cref{non-flat-mod-tilde-w,non-flat-mod-demote} apply
to such $w$.
By \crefpart{build-cor}{opt}, $\tilde{w}^{i+1}$
immediately rejects all men approaching her during during $\iprun$, except for
$\matching(\tilde{w}^{i+1})$. By \cref{non-flat-mod-tilde-w}, as
$\tilde{w}^{i+1}$ is provisionally matched with $\tilde{m}$ on the first
night of $\secondthird$, the only man she would not immediately
reject after that night (except for $\tilde{m}$) is $\matching(\tilde{w}^{i+1})$.
Furthermore, since during $\secondthird$ (as during $\iprun$), only one woman is
newly-serenaded-to
every night after the first, we therefore have that the $\secondthird$
would stop immediately if $\matching(\tilde{w}^{i+1})$ 
serenades under $\tilde{w}^{i+1}$'s window (as happens on the last night of
$\iprun$).
Finally, let $m$ be a man rejected during $\iprun$, and let $w \notin T^i$ under
whose window $m$ serenades during $\iprun$. As $\matching^i(w)=\matching(w)$,
and as by definition
$\matching^i(m)\ne\matching(m)$, we have $\matching^i(m)\ne w$, and thus
$m$ serenades under $w$'s window during $\iprun$ following a rejection,
and thus in particular,
by \crefpart{build-cor}{diff-then-opt} and as $w \notin T^i$,
$w$ rejects
$m$ as soon as he serenades under her window during $\iprun$.
As $m$ serenades under $w$'s window following a rejection in $\iprun$,
$m$ prefers $\matching^i(m)$ over $w$;
by \crefpart{build-cor}{diff-then-opt}, $m$ also
prefers $w$ over $\matching(m)$ (as $w\ne\matching(m)$, since $\matching(m) \in T^i$). Furthermore, by \crefpart{build-cor}{reject-then-opt},
$\matching^{i+1}(m)=\matching(m)$. Thus,
\cref{non-flat-mod-demote} was applied
to $w$'s preference list w.r.t.\ $m$. If $w=\hat{w}$, then she thus blacklists
$m$ and would thus immediately reject him if he ever serenades under her window
during $\incfullrun$ (and thus in particular during $\secondthird$);
otherwise, $w\ne\hat{w}$, and as $w$'s provisional
match only improves (according to $\prefs{W}^{i+1}$) during $\incfullrun$,
she would thus immediately reject $m$ if he ever serenades under her window
during $\secondthird$.
We conclude that
any woman who, during $\iprun$, immediately rejects all men serenading under her
window for the first time after the first night, would reject all such men
during $\secondthird$ as well, if any of them serenade
under her window after the first night; we also conclude that any other woman has the same
preference list and initial matching during $\iprun$ and during $\secondthird$,
with the exception of $\tilde{w}^{i+1}$, who would
immediately reject all men except $\matching(\tilde{w}^{i+1})$ in both
$\iprun$ and $\secondthird$, and would accept $\matching(\tilde{w}^{i+1})$, in which case both $\iprun$ and $\secondthird$ stop.
Thus, as the only rejection on the first night of both $\iprun$ and $\secondthird$
is of $\matching^i(\tilde{w}^{i+1})$ by $\tilde{w}^{i+1}$, we conclude that
$\secondthird$ is well-defined and has the same rejections
as $\iprun$.

Finally, we show that $\thirdthird$ has the same rejections
as $\secondi$. Indeed, by definition of $t^i(\tilde{w}^{i+1})$,
all rejections during $\secondi$, in all nights but the first,
are by women in $W \setminus T^i$. As explained above, the provisional matches
of these women are unaltered during $\secondthird$, and thus
are the same on the first nights of $\thirdthird$ and $\secondi$.
As in both these first nights the only rejection is
of $\tilde{m}$ by $\tilde{w}^{i+1}$, it is enough to show that the preferences
of all women in $W \setminus T^i$ agree w.r.t.\ all men provisionally matched to any of $W \setminus T^i$ in both of these first nights,
as well as w.r.t.\ $\tilde{m}$.
By definition of $t^i(\tilde{w}^{i+1})$, we have that all such men $m$ (incl.\
$\tilde{m}$) satisfy $\matching^i(m) \notin T^i$, and thus $\matching^i(m)=\matching(m)$. Thus, \cref{non-flat-mod-demote} does not involve demoting or blacklisting any such $m$. As $\tilde{w}^{i+1} \in T^i$, \cref{non-flat-mod-tilde-w} is irrelevant as well at this point. By \crefpart{build-cor}{diff-then-opt},
the preferences of $W \setminus T^i$ are the same in $\prefs{W}'$ as they
are in $\prefs{W}^i$.
Thus, $\thirdthird$ and $\secondi$ have the same rejections.
We conclude that the rejections occurring in $\incfullrun$ are exactly
those occurring in either of $\ifullrun$ or $\iprun$. Thus, as $\ifullrun$
concludes with the initial matching of $\iprun$,
and as each man is eventually matched with the woman he prefers most out of
those who have not rejected him, we obtain that
$\menopt(\prefs{W}^{i+1},\prefs{M})$ is the matching obtained by first running
$\ifullrun$ and then $\iprun$,
namely $\ipresultmatching=\matching^i$ and \cref{non-flat-matching} holds
for $i+1$.

\cref{non-flat-increase} for $i+1$ follows once more directly from \crefpart{build-cor}{opt-then-opt}.

By \cref{non-flat-idol,non-flat-second-rank} for $i$,
and by \cref{build-cor}(\labelcref{build-cor-opt,build-cor-diff-then-opt}),
in order to show that \cref{non-flat-idol,non-flat-second-rank} hold for $i+1$, it is enough to show that 
for every $w \in W$, \cref{non-flat-mod-tilde-w,non-flat-mod-demote} do not affect the ranking of $\matching(w)$
and of $\matching^{i+1}(w)$ (if $\matching(w)\ne\matching^{i+1}(w)$).
Indeed, \cref{non-flat-mod-tilde-w} affects $\tilde{w}^{i+1}$ only,
but leaves $\matching(\tilde{w}^{i+1})$ at the top of her preference list;
by \crefpart{build-cor}{opt}, this suffices.
\cref{non-flat-mod-demote} only affect the preference list of women
$w$ by demoting men who are not $\matching(w)$ or $\matching^i(w)$ on
$w$'s preference list, and by \crefpart{build-cor}{reject-then-opt},
$\matching^{i+1}(w)$ is one of these two. Thus, \cref{non-flat-idol,non-flat-second-rank} hold
for $i+1$.
\cref{non-flat-compatible} for $i+1$ follows again from \cref{non-flat-compatible}
for $i$ and from \cref{non-flat-idol} for $i+1$, by \cref{still-compatible}
and as $\matching^{i+1}=\ipresultmatching$.

To prove \cref{non-flat-no-double-blacklist} for $i+1$ from
\cref{non-flat-no-double-blacklist} for $i$, we must show once more that every
newly-blacklisted man $m$ is newly-blacklisted by exactly one woman,
is not blacklisted in $\prefs{W}^i$, and
satisfies $\matching^{i+1}(m)=\matching(m)$. Indeed,
by \crefpart{build-cor}{blacklist-then-opt} and by
\cref{non-flat-mod-tilde-w,non-flat-mod-demote}, any man newly-blacklisted
is newly-blacklisted only by $\hat{w}$ and satisfies $\matching^{i+1}(m)=\matching(m) \ne \matching^i(m)$.
Thus, in particular, by \cref{non-flat-no-double-blacklist} for $i$, he
is not blacklisted in $\prefs{W}^i$.

Since by \crefpart{build-cor}{blacklist-then-opt}
and by \cref{non-flat-mod-tilde-w,non-flat-mod-demote}, no
blacklist is changed but perhaps $\hat{w}$'s, 
in order to prove \cref{non-flat-no-blacklist-opt} for $i+1$ given
\cref{non-flat-no-blacklist-opt} for $i$,
it is enough to yet again show that no woman
blacklists $\matching(\hat{w})$ in $\prefs{W}^i$, and that there
exists no woman $w$ with a nonempty blacklist in $\prefs{W}^i$ s.t.\
$\matching(w)$ is blacklisted by $\hat{w}$ in $\prefs{W}^{i+1}$.
By \cref{non-flat-no-blacklist-opt} for $i$, it is enough to prove the former
for the case in which $\hat{w}$ has an empty blacklist
in $\prefs{W}^i$. In this case, by definition of $\hat{w}$, no man serenades
under her window during $\firsti$, and $\secondi$ stops as soon as any man
serenades under her window. Thus, only one man serenades under $\hat{w}$'s
window during $\ifullrun$, and since $\hat{w} \notin T^i$, this man is
$\matching(\hat{w})$. Thus, by \cref{non-flat-single-competitor} for $i$,
$\matching(\hat{w})$ is not blacklisted in $\prefs{W}^i$.
To show the latter, let $w$ be a woman with a nonempty blacklist in $\prefs{W}^i$. By
\cref{non-flat-no-blacklist-opt} for $i$,
$\matching^i(\matching(w))=\matching(\matching(w))$,
and thus, by \cref{non-flat-mod-demote},
$\matching(w)$ is not blacklisted by $\hat{w}$ in $\prefs{W}^{i+1}$
and the proof of \cref{non-flat-no-blacklist-opt} for $i+1$ is complete.

Finally,  by equivalence of $\firstthird$ and $\firsti$, and of
$\thirdthird$ and $\secondi$, any woman $w$ serenaded-to during
$\incfullrun$ solely by $\matching(w)$ is also serenaded-to during $\ifullrun$
solely by him. Thus,
\cref{non-flat-single-competitor} for $i+1$ follows from
\cref{non-flat-single-competitor} for $i$ and from \cref{non-flat-mod-demote}
(as $\matching^i(\matching(w))=\matching(\matching(w))$ for any such $w$).
\end{itemize}

Thus, the proof by induction is complete,
as the process stops by \cref{flat-case-increase} and by finiteness of $W$.
The remainder of the proof follows verbatim as in the proof of \cref{flat-case}.
The $O(n^3)$ time complexity follows by na\"{\i}vely implementing the above
inductive process. We note that the only obstacle to obtaining
$O(n^2)$ time complexity is the need to rerun the Gale-Shapley algorithm
in order to identify $\hat{w}$ on every iteration corresponding to the second
case of the induction step; more formally, the algorithm runs in
$O(n^2 \cdot (1+k))$ time,
where $k$ is the number of iterations of the second case of the induction step
(thus yielding a best-case time complexity of $O(n^2)$, i.e.\ under the conditions
of \cref{flat-case}).
Recall that by \cref{build-base,build-cor}, the number of all iterations is
at most the number of $(\matching^1\rightarrow\matching)$-cycles.
Since $\matching$ is uniformly
distributed given $\prefs{M}$ (and thus given $\matching^1$),
we have that $\matching\circ{\matching^1}^{-1}$ is uniformly distributed in
$S_n$; therefore, the expected number of cycles in $\matching\circ{\matching^1}^{-1}$ is $H_n\eqdef\sum_{j=1}^n \frac{1}{j}$~\citep[p.\ 19]{Arratia-et-al}.
Thus, by \cref{cycles-perm}, we have that $\mathbb{E}[k]\le H_n = O(\log n)$,
and hence the average-case time complexity of the above algorithm is $O(n^2 \log n)$,
as required.

A final implementation note: we observe that as long as $\tilde{w}^{i+1}$ is
chosen as detailed in the proof above, then for any $i$, until time
$t^i(\tilde{w}^{i+1})$, $\ofullrun$ is indistinguishable from $\ifullrun$,
and that after this time, in neither run is any woman in $T^i$
newly-serenaded to. This follows since
$(t^i(\tilde{w}^{i+1}))_{i=1}^d$ is weakly monotone-decreasing
by its definition and by \crefpart{build-cor}{reject-then-opt}.
\end{proof}

\subsection{Proofs of the remaining Theorems from Section~\refintitle{blacklists}}

\begin{proof}[Proof of \cref{inconspicuous-tight}]
We start by showing that \cref{inconspicuous-tight} holds for $n_b=1$
and $l_1=n-1$. Denote the members of $W$ by $w_0,\ldots,w_{n-1}$, and of $M$ ---
by $m_0,\ldots,m_{n-1}$. Let $\matching$ be the matching and $\prefs{M}$ 
be the profile of preference lists for $M$ s.t.\
for every $0 \le j \le n-1$, both $\matching(w_j)\eqdef m_j$
and the preference list of $m_j$ (from most-preferred to
least-preferred) is: $w_{j+1},w_{j+2},\ldots,w_{n-1},w_0,w_1\ldots,w_j$. To show
\cref{inconspicuous-tight-exists}, let $\prefs{W}$ be the profile of preference lists
by which, as in \cref{build-base}, the preference list of $w_0$ is
$m_0$ (with all other men blacklisted), and for every $1\le j\le n-1$,
the preference list
of $w_j$ is first $m_j$, followed immediately by $m_{j-1}$,
followed by all other
men in arbitrary order.
Let $\matching'$ be the matching describing the first night of
$\fullrun$, i.e. $\matching'(m_j)\eqdef w_{(j+1 \mod n)}$ for all $0\le j \le n-1$.
It is straightforward to check that
$\abstractcycle{w_0}=(w_0 \xrightarrow{m_{n-1}} w_{n-1} \xrightarrow{m_{n-2}} w_{n-2} \xrightarrow{m_{n-3}} \cdots w_1 \xrightarrow{m_1} w_0)$. Thus, by \cref{build-base} and by \crefpart{pm-compatible-properties}{opt-if-reject-cycle},
$\resultmatching=\matching$.
(The interested reader may verify that $\prefs{W}$ is the
profile of preference lists obtained by applying the above
proofs/algorithms.)

We now move on to proving \cref{inconspicuous-tight-no-less}. Let $\prefs{W}'$
be a profile of preference lists for $W$ s.t.\
$\menopt(\prefs{W}',\prefs{M})=\matching$.
We must show that there exists a woman whose preference list according to
$\prefs{W}'$ consists of a single man. We consider the timing for $\pfullrun=\pprun$
induced by the recursive implementation~\citep{MW71} of the Gale-Shapley algorithm, i.e.\
as long as any blacklist-induced rejection remains, one such rejection occurs,
its aftermath runs until the algorithm converges once more (we call this
an \emph{iteration}), at which point,
if any other blacklist-induced rejection remain, the process repeats. If, at
any time, a woman is approached by two men she blacklists, she arbitrarily
rejects one of them in favour of the other.

We first note that on each iteration, every woman rejects precisely one man.
Indeed,
assume that an iteration starts with a woman $w_j$ rejecting her provisional
match. By definition of $\prefs{M}$, this provisional match then serenades
under $w_{(j+1 \mod n)}$'s window, who rejects either him or her provisional
match, the rejected man then serenades under $w_{(j+2 \mod n)}$'s window,
who rejects either him or her provisional match, and so forth until
the man rejected by $w_{(j+n-1 \mod n)}$ serenades under the window
of $w_{(j+n \mod n)}=w_j$, who has no provisional match at that point,
and thus the iteration concludes. So, exactly $n$ rejections take place
during each iteration. As $\pfullrun$ yields
$\matching$, each man is rejected by $n-1$ women during this run,
and thus $n\cdot(n-1)$ rejections in total take place during it,
and so the above-described
timing for it consists of $n-1$ iterations.
Let $w$ be the woman whose blacklist-induced rejection of her
then-provisional match $m$ triggers the last iteration of $\pfullrun$. In the previous
iterations, $w$ has already performed $n-2$ rejections (one in each
iteration), and since in the beginning of the last iteration she
is provisionally matched with $m$, whom she blacklists, we conclude that
the $n-2$ men she has already rejected in prior iterations are also blacklisted
by her, for a total of $n-1$ men blacklisted by $w$ and the proof is complete.

The general case of \cref{inconspicuous-tight} follows in a similar way.
Assume w.l.o.g.\ that $l_1+\cdots+l_{n_b}=n-n_b$, by adding $l_i$s
equal to zero (and thus increasing the value of $n_b$) if necessary.
Denote the members of $W$ by $w^i_j$, for $1\le i \le n_b$ and for each such
$i$, for $0 \le j \le l_i$; similarly denote the members of $M$ by
$m^i_j$, for the same values of $i$ and $j$. For every such $i$ and $j$,
we set $\matching(w^i_j)\eqdef m^i_j$ and set the preference list of $m^i_j$
to start with $w^i_{j+1},w^i_{j+2},\ldots,w^i_{l_i},w^i_0,w^i_1,\ldots,w^i_j$,
in this order, followed by all other women in arbitrary order. The proof of
\cref{inconspicuous-tight-exists} is similar to that of the special case,
with $w^i_0$, for every
$1\le i \le n_b$ preferring $m^i_0$ most, and blacklisting 
$m^i_j$ for all $0 \le j \le l_i$; and with $w^i_j$, for every such $i$
and for every $0 < j \le l_i$, preferring $m^i_j$ most, followed immediately
by $m^i_{j-1}$, followed by all other men in arbitrary order. Finally,
\cref{inconspicuous-tight-no-less} follows by performing the analysis
of the special case for each $i$ separately, and by observing that
since $\menopt(\prefs{W}',\prefs{M})=\matching$, no man $m^i_j$ is ever rejected
by $w^i_j$ during $\pfullrun$, and thus
no such man ever serenades under the window of any woman
$w^{i'}_k$, for any $i'\ne i$ and any $k$.
\end{proof}

\begin{proof}[Proof of \cref{inconspicuous-not-all-matched}]
The proof runs along the lines of the proof of \cref{inconspicuous}, with a few differences
that we now survey.

We begin by focusing our attention on $\unmatched{W}$.
By having each women $w \in \unmatched{W}$ blacklist all
men\footnote{It actually suffices for each such $w$ to blacklist all
$m \in \unmatched{M}$ who do not blacklist her, in addition to all 
$m \in \matched{M}$
who prefer $w$ over $\matching(m)$. The proof is left to the
interested reader.}, we guarantee that these women turn out to be unmatched,
and can \emph{de facto} ignore them from this point on.

The base of the inductive argument is as in \cref{inconspicuous}, except
that we place $\unmatched{M}$ last in all preference lists in $\prefs{W}^0$.
(The resulting matching $\matching^1$ is thus the matching that would have been
obtained by considering only $\matched{M}$ in the runs defining $\matching^1$.)
This guarantees that $\unmatched{M}$ are exactly the men unmatched in
$\matching^1$.
Before commencing with the inductive steps of the proof of
\cref{inconspicuous},
we perform the following inductive step, as many times as possible (with
the same induction invariants): if there exists any woman $w \in T^i$
who is serenaded-to by some $\tilde{m} \in \unmatched{M}$
during $\ifullrun$
(equivalently, who is not blacklisted by some such $\tilde{m}$ --- the set of all
such women can be precomputed once at the beginning of the algorithm),
then we denote $\tilde{w}^{i+1} \eqdef w$.
We continue the construction and proof as in the second case of the induction
step of the proof of \cref{inconspicuous} (however w.r.t.\ $\tilde{w}^{i+1}$ and
$\tilde{m}$ as defined here), only noting that all women are
matched at the end of $\firsti$, and therefore \crefpart{non-flat-mod-demote}{blacklist} never takes place. (The interested reader may verify that
\crefpart{non-flat-mod-demote}{demote} is a no-op, and therefore
\cref{non-flat-mod-demote} is redundant in its whole in this case.)
Therefore, no blacklist is changed during this induction step,
and thus when it is no longer possible to conduct any more such steps,
all blacklists in $\prefs{W}^{i+1}$ are empty.
The rest of the results, except for those regarding the worst- and average-case
time complexities,
follow as in the proof of \cref{inconspicuous}.

Let $r$ be the number of women $w$ for whom, when it is no longer possible
to conduct any more induction steps as above, it still holds that
$\matching^{i+1}(w)\ne \matching(w)$.
By \crefpart{build-cor}{opt}, $r \le n_{\matching}-n_h$, yielding the
worst-case time complexity as required, via the same arguments as in
the proof of \cref{inconspicuous}.
For the average-case analysis, note that each induction step as above
causes not only $\tilde{w}^{i+1}$ to be matched with
$\matching(\tilde{w}^{i+1})$,
but actually her entire $(\matching^1\rightarrow\matching)$-cycle to be matched with their partners in $\matching$. As
the expected combined size of
the cycles containing $n_h$ elements in a random permutation uniformly
distributed in $S_{n_{\matching}}$
is $\frac{n_h\cdot(n_{\matching}+1)}{n_h+1}$~\citep{Combined-Length-Distribution},
we have
$\mathbb{E}[r]=n_{\matching}-\frac{n_h\cdot(n_{\matching}+1)}{n_h+1}<\frac{n_{\matching}+1}{n_h+1}$. By Jensen's inequality~\citep{Jensen} and by
concavity of the logarithm function, we have $\mathbb{E}[\log r]\le\log\mathbb{E}[r]=O(\log\frac{n_{\matching}+1}{n_h+1})$. The rest of the analysis
is as in the proof of \cref{inconspicuous}.
\end{proof}

\begin{proof}[Proof of \cref{tightness-not-all-matched}]
The proof runs along the lines of that of \cref{inconspicuous-tight}.
Assume w.l.o.g.\ that $l_1+\cdots+l_{n_b}=n_{\matching}-n_h-n_b$, by adding $l_i$s
equal to zero (and thus increasing the value of $n_b$) if necessary.
Define $H\eqdef \bigl\{w \in \tobematched{W} \mid
\exists m \in \comp{\tobematched{M}} : w \notin B_m\bigr\}$, and note that
by definition, $n_h=|H|$. Let $\{m_w\}_{w \in H}$ be $n_h$ distinct arbitrary
men from $\tobematched{M}$.
For every $m \in M$, define
$P(m)\eqdef\bigl\{w \in \comp{\tobematched{W}} \mid m \in B_w\bigr\}$.
Denote the members of $\tobematched{W} \setminus H$
by $w^i_j$, for $1\le i \le n_b$ and for
each such $i$, for $0 \le j \le l_i$; similarly denote the members of
$\tobematched{M} \setminus \{m_w\}_{w \in H}$ by
$m^i_j$, for the same values of $i$ and $j$. For every such $i$ and $j$,
we set $\matching(w^i_j)\eqdef m^i_j$ and set the preference list of $m^i_j$
to start with $P(m^i_j)$ in arbitrary order,
followed immediately by $w^i_{j+1},w^i_{j+2},\ldots,w^i_{l_i},w^i_0,w^i_1,\ldots,w^i_j$,
in this order, followed by all other women in arbitrary order.
Additionally, for every $w \in H$, we set $\matching(w)=m_w$ and set
the preference list of $m_w$ to start with $P(m_w)$ in arbitrary order,
followed immediately by $w$, followed by all other
women in arbitrary order. Finally, we set the blacklist of each
$m \in \comp{\tobematched{M}}$ to consist of $B_m$, with all other women
appearing in $m$'s preference list in arbitrary order.

The proof of \cref{tightness-not-all-matched-exists} is similar to that of
\crefpart{inconspicuous-tight}{exists},
with $w^i_0$, for every
$1\le i \le n_b$ preferring $m^i_0$ most, and blacklisting 
$m^i_j$ for all $0 < j \le l_i$; and with $w^i_j$, for every such $i$
and for every $0 < j \le l_i$, preferring $m^i_j$ most, followed immediately
by $m^i_{j-1}$, followed by all other men in arbitrary order.
Set the preference list of every $w \in H$ to start with $m_w$, followed
by all other men in arbitrary order. Finally, set the blacklist of each
$w \in \comp{\tobematched{W}}$ to consist of
$B_w \cup \bigl\{m \in \comp{\tobematched{M}}\mid w \notin B_m\}$, with all other
men appearing in $w$'s preference list
in arbitrary order.

For every $w \in H$, $w$'s top choice is $m_w$, and $m_w$'s top choice, out
of all women who do not blacklist him, is $w$ (as all of $P(m_w)$ blacklist him),
and thus $w$ and $m_w$ are matched by $\menopt(\prefs{W},\prefs{M})$
and no man is ever rejected in favour of $m_w$, except by $w$.
We thus \emph{de facto} ignore $H$ and $\{m_w\}_{w \in H}$ henceforth.
Let us consider the timing for $\fullrun$ obtained by deferring
the participation of $\comp{\tobematched{M}}$ until the algorithm converges (we denote
this part of the run by $\firstpart$),  and
only then introducing $\comp{\tobematched{M}}$ into the market, until the algorithm
converges once again (we denote this part of the run by $\secondpart$).
As in the proof of \crefpart{inconspicuous-tight}{exists},
if $\comp{\tobematched{W}}$ were to not participate in $\firstpart$,
then at its end each $m^i_j$ would be matched with $w^i_j$. As
every such $m^i_j$ prefers $w^i_j$ over all of $\comp{\tobematched{W}}$ (except
for $P(m^i_j)$, who all blacklist him), we have that participation
of $\comp{\tobematched{W}}$ in $\firstpart$ would not change the 
resulting matching. Finally, it is straightforward to verify that
at the end of the $\firstpart$, every $w \in \tobematched{W}$
prefers her provisional match over all of $\comp{\tobematched{M}}$, and
every $w \in \comp{\tobematched{W}}$ blacklists all of $\comp{\tobematched{M}}$ who do not
blacklist her. Thus, the provisional matching does not change at any point
during  $\secondpart$, and the
proof of \cref{tightness-not-all-matched-exists} is complete.

To prove \cref{tightness-not-all-matched-no-less},
let $\prefs{W}'$ be a profile of preference lists for $W$ s.t.\
$\menopt(\prefs{W}',\prefs{M})=\matching$.
We commence by examining $\comp{\tobematched{W}}$.
Let $w \in \comp{\tobematched{W}}$ and let $m$ be a man blacklisted by $w$
in $\prefs{W}$.
If $m \in \comp{\tobematched{M}}$, then by the statement of \cref{tightness-not-all-matched-no-less}, we need only consider the case in which $w \notin B_m=B_m(\prefs{M})$;
in this case, the fact that both $w$ and $m$ are
unmatched
by $\menopt(\prefs{W}',\prefs{M})$ implies that $w$ blacklists $m$
in $\prefs{W}'$, as required.
Otherwise, $m \in \tobematched{M}$ and thus, by definition of $\prefs{W}$,
$m \in B_w$; therefore, $w \in P(m)$ and so $m$ prefers $w$ over his match
$\matching(m)$.
As $w$ is unmatched in $\matching$, we thus have that she blacklists $m$
in $\prefs{W}'$ in this case as well.
Finally, the analysis of the blacklists of $\tobematched{W} \setminus H$ is
as in the proof of \crefpart{inconspicuous-tight}{no-less}, since all of $\comp{\tobematched{M}}$
blacklist them, and all of $\{m_w\}_{w\in H}$ prefer their final matches over
them; thus, only $\tobematched{M} \setminus \{m_w\}_{w \in H}$ serenade
during $\pfullrun$
under their windows (and these men are blacklisted by any woman in
$\comp{\tobematched{W}} \cup H$ that they prefer over their
final match).
\end{proof}

\subsection{Proofs of the Theorems from Section~\refintitle{divorces}}

\cref{one-divorce-per-woman} readily follows from the following lemma, phrased
in terms of the vanilla Gale-Shapley algorithm.

\begin{lemma}
Let $\matching'$ and $\matching$ be matchings and let $\prefs{M}$ be a
$(\matching'\rightarrow\matching)$-compatible profile of preference
lists for $M$.
Let $W$ and $M$ be equal-sized sets of women and men, respectively. Define
$n\eqdef|W|=|M|$.
If $\matching' \ne \matching$, then there exists a woman $\tilde{w}$ and a profile $\prefs{W}$ of preference
lists for $W$, s.t.\ 
the only woman with a nonempty blacklist according to $\prefs{W}$ is $\tilde{w}$,
having a blacklist consisting solely of $\matching'(\tilde{w})$, s.t.\
$\resultmatching(\tilde{w})=\matching(\tilde{w})$, and s.t.\
$\prefs{M}$ is $(\resultmatching\rightarrow\matching)$-compatible.
\end{lemma}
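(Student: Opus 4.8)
The plan is to reduce the lemma, via \cref{still-compatible}, to a single statement about one woman, and then realise that statement by designing the women's preferences so that a rejection cascade triggered by $\tilde{w}$'s ``divorce'' routes $\matching(\tilde{w})$ back to $\tilde{w}$ while touching no other woman's blacklist.

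First I would fix the skeleton of $\prefs{W}$: every woman $w$ receives a preference list whose top choice is $\matching(w)$ (so, in particular, no woman blacklists her $\matching$-partner). By \cref{still-compatible} this alone forces $\resultmatching$ to be a perfect matching and $\prefs{M}$ to be $(\resultmatching\rightarrow\matching)$-compatible, regardless of how the rest of the lists — and the one blacklist we are permitted — are chosen. Since $\matching'\ne\matching$, there is a nontrivial $(\matching'\rightarrow\matching)$-cycle (\cref{abstract-cycle-properties}); let $\tilde{w}$ be one of its women, so that $m_1\eqdef\matching'(\tilde{w})\ne\matching(\tilde{w})$, and let $\tilde{w}$ blacklist $m_1$ and nothing else. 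With these choices in place, it remains only to fill in the remaining preference orders so that $\resultmatching(\tilde{w})=\matching(\tilde{w})$; and for this it suffices that $\matching(\tilde{w})$ serenades under $\tilde{w}$'s window at some night of $\run$, since he is her top choice and she would never release him.

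During $\run$, $\tilde{w}$ rejects $m_1$ on the first night, launching a rejection cascade in which exactly one man is ``on the move'' at a time (the profiles are $\matching'$-cycle-generating, so \cref{reject-cycle} applies). I would build the remaining lists by choosing, step by step, which woman absorbs the current on-the-move man $m$: an as-yet-untouched woman $w$ that we want to absorb $m$ simply receives a list of the form $\matching(w)\succ\cdots\succ m\succ\cdots\succ\matching'(w)\succ\cdots$, so that she releases her then-partner $\matching'(w)$, who becomes the next man on the move. Following a released man down his (fixed) list leads him toward his $\matching$-partner — who lies below his $\matching'$-partner by $(\matching'\rightarrow\matching)$-compatibility, has him as her top choice, and thus absorbs him and releases the next man of the cycle — so if $\tilde{w}$ never happens to lie strictly between the releasing woman and that $\matching$-partner on the released man's list, the cascade simply traces $\abstractcycle{\tilde{w}}$ and terminates by delivering $\matching(\tilde{w})$ to $\tilde{w}$. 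The subtlety is that $\tilde{w}$, having released $m_1$, sits ``empty'' throughout the cascade, so if some man $m\ne\matching(\tilde{w})$ on the move would reach her before reaching his $\matching$-partner, she would absorb him and the cascade would close at the wrong man; to prevent this I would deflect $m$ one step earlier, into an untouched woman appearing on $m$'s list strictly before $\tilde{w}$ (hence still above his $\matching$-partner, so compatibility stays safe), re-injecting her $\matching'$-partner into the cascade.

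The heart of the proof, and the step I expect to be the main obstacle, is showing that this deflection is always available and that the cascade eventually closes by sending $\matching(\tilde{w})$ — and no other man — to $\tilde{w}$, i.e.\ that $\tilde{w}$ rejects exactly one man during $\run$. I would establish this with a monotonicity/termination argument: each absorption permanently matches one more woman to her $\matching$-partner (or strictly decreases a suitable ``distance to $\tilde{w}$'' along the cycle, or in the on-the-move man's list), so the process cannot loop forever; and a man on the move is forced toward $\tilde{w}$ only after the unique man whose $\matching$-partner is $\tilde{w}$, namely $\matching(\tilde{w})$, has itself been released — precisely when that forced hit of $\tilde{w}$ is by the right man — because while $\matching(\tilde{w})$ is still parked at his $\matching'$-partner, every on-the-move man has an untouched woman strictly between his current position and $\tilde{w}$ on his list, furnishing the room to deflect. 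Granting this, $\prefs{W}$ has $\tilde{w}$ as its only blacklister with blacklist $\{\matching'(\tilde{w})\}$, $\resultmatching(\tilde{w})=\matching(\tilde{w})$, and $(\resultmatching\rightarrow\matching)$-compatibility is already in hand from \cref{still-compatible}, completing the proof.
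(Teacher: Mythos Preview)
Your overall strategy is sound in spirit, but there is a genuine gap at precisely the point you flag as ``the main obstacle'': the claim that, while $\matching(\tilde{w})$ is still parked, every on-the-move man has an untouched woman strictly between his current position and $\tilde{w}$ on his list. This is false for an \emph{arbitrary} choice of $\tilde{w}$, and once $\tilde{w}$ is fixed badly, no preference tweaking can repair it.

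Concretely, take $n=3$ with $\matching'(w_i)=m_i$ and $\matching(w_1)=m_3$, $\matching(w_2)=m_1$, $\matching(w_3)=m_2$, and the $(\matching'\to\matching)$-compatible men's preferences
\[
m_1:\ w_1,w_2,w_3;\qquad m_2:\ w_2,w_1,w_3;\qquad m_3:\ w_3,w_2,w_1.
\]
Pick $\tilde{w}=w_1$, blacklisting only $m_1$. Since $\matching(w_2)=m_1$ must be $w_2$'s top choice, the first move is forced: $m_1\to w_2$, absorbed, releasing $m_2$. But $m_2$'s list reads $w_2,w_1,w_3$, so his very next stop is $\tilde{w}=w_1$, who is empty and does not blacklist him; she absorbs $m_2$ and the cascade halts with $\resultmatching(\tilde{w})=m_2\ne\matching(\tilde{w})$. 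There is no woman between $w_2$ and $w_1$ on $m_2$'s list, so your deflection has nowhere to go. In fact \emph{no} profile with $\tilde{w}=w_1$ and blacklist $\{m_1\}$ can succeed here while keeping $(\resultmatching\to\matching)$-compatibility: compatibility forces $m_1$ to finish at $w_2$ (his only option at or above $\matching(m_1)=w_2$ once $w_1$ blacklists him), so $w_2$ must absorb $m_1$, whereupon $m_2$ inevitably lands at the empty $w_1$.

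The paper's proof avoids this by \emph{not} picking $\tilde{w}$ arbitrarily. It fixes the baseline profile (each $w$ ranks $\matching(w)$ first, $\matching'(w)$ second, others arbitrary) and, for each candidate $w$ with $\matching'(w)\ne\matching(w)$, considers the rejection cycle obtained by having $w$ alone blacklist $\matching'(w)$. It then chooses $\tilde{w}$ to \emph{maximise the length} of this rejection cycle. A short extremal argument shows that $\hat{w}\eqdef\matching'(\matching(\tilde{w}))$ must reject some man during $\tilde{w}$'s run (otherwise $\hat{w}$'s own cycle would be strictly longer, contradicting maximality). If $\hat{w}$ releases $\matching'(\hat{w})=\matching(\tilde{w})$ we are done; otherwise a single promotion in $\hat{w}$'s list makes her release $\matching(\tilde{w})$, who then reaches $\tilde{w}$. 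In the example above this selects $\tilde{w}=w_3$, and the construction goes through. The idea you are missing is precisely this extremal choice of $\tilde{w}$; your monotonicity/termination sketch does not substitute for it.
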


\begin{proof}
Denote by $\tildeprefs{W}$ a profile of preference lists for $W$, according
to which each $w \in W$ prefers $\matching(w)$ most, followed immediately
by $\matching'(w)$ (if $\matching'(w)\ne\matching(w)$),
followed by all other men in arbitrary order.
For every $w \in W$ s.t.\ $\matching'(w)\ne\matching(w)$, we denote by
$\tildeprefs{W}^w$ the profile of preference lists for $W$ obtained from
$\tildeprefs{W}$ by having $w$ blacklist $\matching'(w)$. We note that $\tildeprefs{W}^w$
and $\prefs{M}$ are $\matching'$-cycle generating with trigger $w$.
Let $\tilde{w}$ be a woman for whom $\twrejectcycle$ is of
greatest length.

We claim that $\hat{w}\eqdef\matching'(\matching(\tilde{w}))$
rejects some man during $\twrun$. Indeed, assume for contradiction that this
is not the case; we hence show that $\hwrejectcycle$ is of greater length
than $\twrejectcycle$. Indeed, we claim that $\hwrejectcycle$
has prefix
$(\hat{w} \xrightarrow{\matching(\tilde{w})} \tilde{w}
\xrightarrow{m_1} w_2 \xrightarrow{m_2} w_3 \xrightarrow{m_3} \cdots w_{d-1} \xrightarrow{m_{d-1}}$, where
$\generalcycle\eqdef\twrejectcycle$.
By definition of $\tildeprefs{W}$, and since by definition $\matching'(\hat{w})=
\matching(\tilde{w})$, we directly have that $\hwrejectcycle$
has prefix $(\hat{w}\xrightarrow{\matching(\tilde{w})}\tilde{w}\xrightarrow{m_1}$.
As $\tilde{w}$ is not serenaded-to
except on the first and last nights of $\twrun$, and as,
by assumption, $\hat{w}$ is serenaded-to
solely by $\matching'(\hat{w})$ throughout $\twrun$, the following nights of
$\hwrun$ have the exact same rejections as the second to one-before-last nights
of $\twrun$, respectively,
and thus indeed $\hwrejectcycle$ has the above prefix --- a contradiction.

If $\hat{w}$ rejects $\matching'(\hat{w})\!=\!\matching(\tilde{w})$ during
$\twrun$, then by definition of $\tildeprefs{W}^{\tilde{w}}$ we have
$\wresultmatching(\matching(\tilde{w}))=\tilde{w}$, and the proof is complete by setting
$\prefs{W}\eqdef\tildeprefs{W}^{\tilde{w}}$. Otherwise,
$\hat{w}$ rejects some man $\tilde{m}$ in favour of $\matching'(\hat{w})$
during $\twrun$. Let $\prefs{W}$ be the profile of preference
lists obtained from $\tildeprefs{W}^{\tilde{w}}$ by promoting $\tilde{m}$ to be second on 
$\hat{w}$'s preference list (immediately following $\matching(\hat{w})$
and immediately followed by $\matching'(\hat{w})$). Thus, in $\run$,
$\matching'(\hat{w})=\matching(\tilde{w})$ is rejected by $\hat{w}$ in favour of
$\tilde{m}$ and once again, by definition of $\prefs{W}$, we have
$\resultmatching(\matching(\tilde{w}))=\tilde{w}$ and the proof is complete.
(Either way, as no woman $w$ rejects $\matching(w)$ during $\run$, we have that
$\prefs{M}$ is $(\resultmatching\rightarrow\matching)$-compatible, as required.)
\end{proof}

\begin{proof}[Proof of \cref{divorce-tight}]
The proof runs parallel to that of the special case of \crefpart{inconspicuous-tight}{no-less}, by noting that the first season concludes as soon as it
commences, and that each consecutive season precisely corresponds to one
iteration, as
defined there, and since, as explained there, $n-1$ iterations are
required.
\end{proof}

\end{document}